\newcommand{\assign}{:=}
\newcommand{\infixand}{\text{ and }}
\newcommand{\infixor}{\text{ or }}
\newcommand{\vertiii}[1]{{\left\vert\kern-0.25ex\left\vert\kern-0.25ex\left\vert #1 
    \right\vert\kern-0.25ex\right\vert\kern-0.25ex\right\vert}}
\newcommand{\nobracket}{}
\newcommand{\tmmathbf}[1]{\ensuremath{\boldsymbol{#1}}}
\newcommand{\tmop}[1]{\ensuremath{\operatorname{#1}}}
\newcommand{\tmtextbf}[1]{\text{{\bfseries{#1}}}}
\newcommand{\tmtextit}[1]{\text{{\itshape{#1}}}}
\newenvironment{proof}{\noindent\textbf{Proof\ }}{\hspace*{\fill}$\Box$\medskip}
\newtheorem{theorem}{Theorem}[section]
\newtheorem{example}[theorem]{Example}
\newtheorem{remark}[theorem]{Remark}
\newtheorem{corollary}[theorem]{Corollary}
\newtheorem{definition}[theorem]{Definition}
\newtheorem{lemma}[theorem]{Lemma}
\newtheorem{proposition}[theorem]{Proposition}
\newtheorem{Assumption}{Assumption}
\theoremstyle{break}
\begin{document}

\title{Rough PDEs for local stochastic volatility models}

\author[1]{Peter Bank \thanks{bank@math.tu-berlin.de.}}
\author[2]{Christian Bayer\thanks{bayerc@wias-berlin.de}}
\author[1,2]{Peter K. Friz\thanks{friz@math.tu-berlin.de}}
\author[1,2]{Luca Pelizzari\thanks{pelizzari@wias-berlin.de}}
\affil[1]{Technische Universität Berlin}
\affil[2]{Weierstrass Institute (WIAS)}

%\author{Peter Bank\thanks{Technische Universität Berlin, Berlin, Germany, email: bank@math.tu-berlin.de.} \quad Christian Bayer\thanks{Weierstrass Institut (WIAS), Berlin, Germany, email: bayerc@wias-berlin.de.} \quad Peter K. Friz\thanks{Technische Universität Berlin and Weierstrass Institut (WIAS), Berlin, Germany, email: friz@math.tu-berlin.de.} \quad Luca Pelizzari\thanks{Technische Universität Berlin and Weierstrass Institut (WIAS), Berlin, Germany, email: pelizzari@wias-berlin.de.}}

\maketitle

\begin{abstract} In this work, we introduce a novel pricing methodology in general, possibly non-Markovian local stochastic volatility (LSV) models. We observe that by conditioning the LSV dynamics on the Brownian motion that drives the volatility, one obtains a time-inhomogeneous Markov process. Using tools from rough path theory, we describe how to precisely understand the conditional LSV dynamics and reveal their Markovian nature. The latter allows us to connect the conditional dynamics to so-called rough partial differential equations (RPDEs), through a Feynman-Kac type of formula. In terms of European pricing, conditional on realizations of one Brownian motion, we can compute conditional option prices by solving the corresponding linear RPDEs, and then average over all samples to find unconditional prices. Our approach depends only minimally on the specification of the volatility, making it applicable for a wide range of classical and rough LSV models, and it establishes a PDE pricing method for non-Markovian models.  Finally, we present a first glimpse at numerical methods for RPDEs and apply them to price European options in several rough LSV models. \\ \\ \textbf{Keywords:} Rough partial differential equations, rough volatility, option pricing. \\ \textbf{MSC2020 classifications:} 91G20, 91G60, 60L50.
\end{abstract}

%{\tableofcontents}
\section{Introduction} 
In mathematical finance, a large class of asset-price models can be described by dynamics of the form
\begin{equation}
  X^{t,x}_t=x, \quad d X^{t,x} _s = f (s, X^{t,x} _s) v_s dW_s + g(s, X^{t,x} _s) v_s dB_s,
  \quad 0\leq t < s \leq T \label{SDE}
\end{equation}
where $W$ and $B$ are independent Brownian motions on some filtered probability space $(\Omega,(\mathcal{F}_t),P)$, $v$ is 
%a continuous process 
adapted to the smaller filtration $(\mathcal{F}^W_t)$ generated just by $W$ and where the functions $f$, $g$ are sufficiently regular for~\eqref{SDE} to have a unique strong solution for any starting point $(t,x)$. Let us also write $(\mathcal{F}^B_t)$ for the filtration generated by $B$ and assume for conciseness that $\mathcal{F}_t = \mathcal{F}^W_t \vee \mathcal{F}^B_t$.
We view $v$ as one's preferred ``backbone'' stochastic volatility (SV) model. 
%the coefficients $(f,g)$ take care of correlation and leverage, in the terminology of {\em local stochastic volatility} (LSV) (see e.g.  \cite{guyon2014nonlinear}.) % for a practically oriented textbook reference). 
%A prime example are {\em local stochastic volatility} (LSV) models where $v$ is l. 
Throughout this work, we make
\begin{Assumption} \label{ass1}The process
$(v_t : 0 \le t \le T)$ is $(\mathcal{F}^W_t)$-progressive with bounded sample paths.
\end{Assumption} 
This includes all $(\mathcal{F}^W_t)$-adapted, continuous volatility specifications, and thus includes virtually all classical SV models in use, Heston, Bergomi, Stein-Stein, as well as recent {\em rough volatility} variants thereof, see \cite{bayer2023rough} and references therein.
% and Volterra variants. % cf. \cite{RVbook} and the references therein. 

The function pair ($f,g)$ is chosen to account for correlation, leverage effects and to calibrate to the market's implied volatility surface: the classical {\em local stochastic volatility} (LSV) specification $f(t,x) \equiv \sigma (t,x) \rho,\, g(t,x) \equiv \sigma (t,x) \sqrt{1 - \rho^2}$, with {\em leverage function} $\sigma$ and correlation parameter $\rho$, is not only contained as special case, but it in fact equivalent upon allowing for local correlation $\rho = \rho(t,x)$. %Yet, we find it preferable to work with the pair $(f,g)$ rather than $(\sigma, \rho)$. 
We note that LSV models enjoy huge popularity in quantitative finance, key references include \cite{lipton2002vol} and \cite{guyon2014nonlinear}, thanks to a most efficient particle calibration method. (The mathematical justification of this method is an intriguing problem, especially in a non-Markovian setting, but since this is not related to the contribution of this work we refrain from giving a more detailed discussion.)
%which approximates a density dependent McKean-Vlasov stochastic differential equations, remains unclear. 

%which go back to \cite{lipton2002vol}, has been  
%For more information on LSV modelling, see e.g.   \textcolor{blue}{@PF: Add  and extend sentence.}
%See e.g.for more information on LSV modelling. 

%cf.  for a text book reference. Our assumption

%For instance, $V=v^2$ may follow CIR dynamics, as prescribed in the classical Heston model, or rough / Volterra versions thereof, as in the analytically very tractable rough Heston model.
%A popular alternative with log-normal volatility comes in the form of a $n$-factor volatility specification \`a la Bergomi, a ``rough'' variant of this model was introduced in \cite{bayer2016pricing}. Taking such rough volatility dynamics as {\em stochastic volatility} (SV) part within some LSV specification, is also the topic of recent papers like \cite{jacquier2019deep}. The {\em local}
%or {\em leverage} (L) feature within LSV, on the other hand, is specified by the function pair ($f,g)$, chosen to account for the leverage effect and to calibrate to the market's implied volatility surface.

%(e.g. \cite{guyon2014nonlinear} for a text book reference and  for a recent contribution
% which $v$ is the preferred stochastic volatility model such as (rough) Heston and where $f$ and $g$ are chosen to account for the leverage effect and to calibrate to the market's implied volatility surface.
%Assuming for simplicity zero interest rates, 
A typical pricing problem in such models then amounts to computing an expectation like
\begin{align*}
  E[\Phi(X^{0,x})] =   E[\Phi(X^{0,x}_t: 0 \le t \le T)]
\end{align*}
for some sufficiently regular payoff functional $\Phi$ on path space $C[0,T]$, leaving aside here (for notational simplicity only) the impact of interest rates and discounting future pay-flows.  Monte Carlo simulation yields the probably most flexible approach to this problem. It fails, however, to make use of the specific structure of the dynamics~\eqref{SDE}: A key observation here is that the Brownian motion $W$ suffices to fully specify the possibly rather involved dynamics of $v$ and so, conditioning on the full evolution of $W$, one should be able to view the dynamics~\eqref{SDE} as that of a time-inhomogeneous diffusion driven by the independent Brownian motion $B$. As a result, the computation of prices such as $E[\Phi(X^{0,x})]$ should be possible by generating Monte Carlo simulations of $v$ and its integral against $W$ and then exploit the $(\mathcal{F}^B_t)$-Markov property of the conditional dynamics to efficiently compute for each such realization a sample of the conditional expectation $E[\Phi(X^{0,x})|\mathcal{F}^W_T]$; the desired unconditional expectation $E[\Phi(X^{0,x})]$ would then be obtained by averaging over these samples.

From a {\em financial-engineering and modelling point of view}, a conceptual appealing feature of this approach is that it allows one to disentangle the volatility model generating $v$ (and its $dW$-integral) from the nonlinear local vol-functions $f$, $g$ and the second Brownian motion $B$: As soon as we can sample from our preferred stochastic volatility model, we will be able to work out the corresponding conditional local-vol like prices, with all the advantages such a (conditionally) Markovian specification affords. For vanilla options payoffs $\Phi(X^{0,x})=\phi(X^{0,x}_T)$ for instance, a Feynman-Kac-like PDE description should be possible and even for some path-dependent options such as barriers the conditional pricing-problem is easier to handle in such a conditional local-vol model. Moreover, merely requiring the realizations of $v$ and its $dW$-integral as an interface, the local-vol pricer can be implemented using Markov techniques without any knowledge of the possibly highly non-Markovian volatility model it is going to be connected to.

Obviously, the heuristic view of the dynamics~\eqref{SDE} given $W$ is challenged immediately by the need to interpret the $dW$-part in its dynamics in a mathematically rigorous way: For a fixed realization of the integrator $Y_t:=\int_0^t v_sdW_s(\omega)$ one has to give precise meaning to what is called a \emph{Rough Stochastic Differential Equation} (RSDE):
\begin{equation}
  X^{t,x,\mathbf{Y}}_t=x, \quad d X^{t,x,\mathbf{Y}} _s = f (s, X^{t,x,\mathbf{Y}} _s) d\mathbf{Y}_s + g(s, X^{t,x} _s) \mathbf{v}^{\mathbf{Y}}_s dB_s,
  \quad 0\leq t < s \leq T \label{SDEY}.
\end{equation}
The present paper explains how Lyons' {\em rough path theory} (e.g. \cite{friz2020course} and references therein) can be used to obtain such an interpretation by lifting $Y$ to a rough path $\mathbf{Y}$ with bracket $[\mathbf{Y}]_t=\int_0^t (\mathbf{v}_s^{\mathbf{Y}})^2ds$. As shown by our first main result, Theorem \ref{consistencycoro}, for any fixed, deterministic $\mathbf{Y}$, solutions $ X^{t,x,\mathbf{Y}}$ to~\eqref{SDEY} turn out to be time-inhomogeneous Markov processes with respect to the filtration $(\mathcal{F}^B_t)$. The result also identifies the \emph{conditional stock price dynamics} given the full basic volatility evolution over $[0,T]$ via these RSDE solutions:
\begin{equation*}
        \mathrm{Law}(X^{t,x}|\mathcal{F}^W_T\lor \mathcal{F}^B_t) (\omega)= \left.\mathrm{Law}( X^{t,x,\mathbf{Y}})\right|_{\mathbf{Y}=(\int_0^.v_sdW_s,\int_0^.\int_0^u v_sdW_s v_u dW_u)(\omega)}.
      \end{equation*}
      %\todo{PB: I have reformulated our result here as I think it makes the point more clear. Happy to discuss/learn if anyone sees this differently.}
     
      As a consequence, option prices $E[\Phi(X^{0,x})]$ can be computed by averaging the corresponding expectations  $E[\Phi(X^{0,x,\mathbf{Y}})]$ of our RSDE solutions $X^{t,x,\mathbf{Y}}$ over Monte Carlo samples $\mathbf{Y}$ of the general $(\mathcal{F}^W_t)$- adapted volatility model given by the $v$ and $W$. Our second main result, Theorem \ref{RPDEmain},  focuses on vanilla options $\Phi(X)=\phi(X_T)$ and shows how the Markov property of our RSDE solutions $X^{t,x,\mathbf{Y}}$ allows one to characterize
\begin{align}\label{FeynmanKacintro}
u^\mathbf{Y}(t,x):=E[\phi(X^{t,x,\mathbf{Y}}_T)] 
\end{align} 
as the Feynman-Kac-like solution to some  {\em rough partial differential equation} (RPDE). After a suitable Itô/Stratonovich-type conversion, this RPDE reads 
\begin{equation}\label{RPDEintro1}
%\begin{cases}
  - d_t u^{\mathbf{Y}}  =  {L_{t}}  [u^{\mathbf{Y}}] (\mathbf{v}^{\mathbf{Y}}_t)^2dt + \Gamma_t [u^{\mathbf{Y}}] d\mathbf{Y}^g_t %\\
  , \qquad u^{\mathbf{Y}} (T, x)  =  \phi (x) 
 %   \end{cases} 
\end{equation}
with the spatial differential operators
\begin{align*}
  {L } _t := \frac{1}{2} g^2 (t, x)
  \partial^2_{xx}  + f_0 (t, x) \partial_x
  \quad \text{ and } \quad
  \Gamma_t := f (t, x) \partial_x 
\end{align*} 
for $f_0(t,x):= -\frac{1}{2}\partial_x f(t,x)f(t,x)$; the rough driver $\mathbf{Y}^g := (Y,Y^2/2)$ is what we call the {\em geometrification} of the rough path $\mathbf{Y}$ and the RPDE can be made rigorous by suitably smoothed out approximations. 

We should emphasize that the emergence of rough paths methods is {\em not} triggered by some {rough} volatility specification and in fact our results are novel even in the  classical situation with diffusive SV dynamics. That said, we do cover local {\em rough} stochastic volatility models, even beyond Hurst parameter $H \in (0,1/2)$, such as recent log-modulated SV models (think ``$H=0$'', cf.~\cite{bayer2021log}). This is in contrast to \cite{bayer2020regularity} where the required rough integrals are constructed using regularity structures or, equivalently, generalized rough path considerations \cite{friz2021precise, fukasawa2022partial} which would limit the volatility models $v$ one can use. The versatility of our approach is due to an important technical idea of this work: Rather than viewing the Brownian motions $W, B$ as rough integrators, 
we take a new perspective in recognizing the integrated volatility $%Y=
\int_0^. v_t dW_t$ (and its bracket) as fundamental object, bypassing all difficulties coming from the need to delicately balance the path regularity of integrands and integrators in previous approaches.

\bigskip

By way of illustration, we first consider the pure SV case with constant $f\equiv\rho$ and $g\equiv\sqrt{1-\rho^2}$, so that the dynamics~\eqref{SDE} amount to an additive stochastic volatility model in Bachelier form. In this case, the explicit solution to \eqref{RPDEintro1} is given by
%Assume $f(t,x) \equiv \rho$ and $g(t,x) \equiv \sqrt{1-\rho^2}$. For $\phi \in C_b^0(\mathbb{R},\mathbb{R}_+)$, the unique solution to the RPDE \eqref{RPDESVM} is given by
 \begin{equation*}
       u^{\mathbf{Y}}(t,x) =  \int_{\mathbb{R}} \phi (y) \frac{1}{\sqrt{2 \pi(1-\rho^2) [{\mathbf{Y}}]_{t,T}}}
   \exp \left\{ - \frac{(y - x-\rho Y_{t,T})^2}{2 (1-\rho^2)[{\mathbf{Y}}]_{t,T}} 
   \right\}dy, \qquad [{\mathbf{Y}}]_{t,T} = \int_t^T  (\mathbf{v}^{\mathbf{Y}}_s)^2 ds
  \end{equation*}
  which precisely expresses the fact that the regular conditional distribution of $X^{t,x}_T$, given $\mathcal{F}^W_T$, is Gaussian with mean $\rho \int_t^T v_s dW_s$
  %\todo{PB: Thanks, Christian. I replaced some $I$s but forgot this one. Now replaced by $I_T - I_t=\int_t^T v_sdW_s$ CB: $I$ has not been introduced, and we should not do it here, IMHO. }
  and variance $(1-\rho^2)\int_t^T v_r^2 d r$. Similar formulae hold for multiplicative stochastic volatility models in Black-Scholes form,
  when $f(x) \equiv\rho x$ and $g(x) \equiv\sqrt{1-\rho^2}x$, in this case the conditional distribution is log-normal, also with explicit mean and variance, known as {\em Romano--Touzi formula}, \cite{romano1997contingent}. We should also point out that our approach is not limited to $2$-factor models (i.e. built on two Brownian motions $B,W$), cf. the discussion in Section \ref{multivariatesection}.

  \bigskip

  The approach of this paper can be compared with an existing branch in the literature, related to so-called \textit{backward stochastic partial differential equations} (BSPDE), which have been studied extensively in
 \cite{ma1997adapted, ma1999linear, ma2012non, ma2015well}. Put in our context, one can consider the random field
\begin{align}
     \label{eqstochfield}
  \tilde{u} (t,x; \omega) :=E[\phi( X_T^{t,x})|\mathcal{F}_t](\omega)
\end{align}
 and view it as part of the \textit{stochastic Feynman-Kac solution} $(\tilde{u},\psi)$ to the BSPDE
 \begin{equation*} \begin{aligned}
     \label{stochFC}
    -d_t \tilde{u} (t,x)
      &= \left [\frac{v^2_t}{2}\left (g^2(t,x)+f^2(t,x)\right )\partial^2_{xx} \tilde{u} (t,x)
     + f(t,x)v_t\partial_x \psi(t, x) \right ]dt - \psi(t, x)dW_t, \\ \tilde{u} (T,x) & = \phi(x).
 \end{aligned}
\end{equation*} In the classical SV case, that is for multiplicative fields $f(x) \equiv\rho x$ and $g(x) \equiv\sqrt{1-\rho^2}x$, this has recently been studied in \cite{bayer2022pricing}. This approach has very mild requirements on the process $v_t (\omega)$, somewhat similar to our Assumption \ref{ass1}.

Compared to the BSPDE approach, the precise nature of the Markovian dynamics of the conditional process is presented in a more transparent way in the RSDE approach put forth in this paper. More precisely, the exact dependence of the solution $u$ to the BSPDE above on either the Brownian motion $W$ or on $v$ is not spelled out in full details, see, however, \cite{bonesini2023mathfrak}. On the other hand, conditioning the option price on $\mathcal{F}_t$ as in the BSPDE approach has direct financial meaning. Working with RSDEs, and then rough PDEs, requires another averaging to arrive at the same quantity, cf. Proposition
\ref{CBLemma}.
%-- unlike the conditional expectation on $\mathcal{F}_T^W \vee \mathcal{F}^B_t$ used in the RSDE approach.}
%\todo{CB: I tried to be vague in the comparison. Ok?}

Another related pricing approach in non-Markovian frameworks comes from \textit{path-dependent PDEs} (PPDEs), with path-derivatives as in functional It\^o calculus in the sense of \cite{dupire2019functional}. The necessary extensions to treat singular Volterra dynamics, as relevant for rough volatility, were obtained in \cite{viens2019martingale}, to which we also refer for selected pointers to the vast PPDE literature. In \cite{jacquier2019deep} the authors consider LSV-type dynamics
and describe the random field $\tilde{u}(t,x,\omega)=E[\phi( X_T^{t,x})|\mathcal{F}_t](\omega)$ as the
unique solution to a terminal value PPDE problem.
In a recent paper \cite{bonesini2023rough}, similar PPDEs were used to analyse weak discretization schemes for rough volatility models.

A comparison of all these different approaches would be desirable, but this is not the purpose of this paper. (To the best of our knowledge, even a systematic comparison between BSPDEs and PPDEs, numerically and otherwise, is not available in the literature, but see \cite{bonesini2023mathfrak} for the specific case of rough volatility models.) 

\bigskip

As final contribution of this paper, we give a first glimpse at a numerical approach useful in the general setting. Specifically, in Section \ref{sec:FD_schemes} we discuss both a first and a second order scheme towards approximating solutions to the RPDE. For the analytic benchmarks described in the previous paragraph, the ensuing Monte Carlo simulations turn out to be very accurate.
For the goal of computing option prices, we have to compare our method with plain Monte Carlo simulation, i.e., the direct approximation of $E[\phi(X_T^{0,x})]$ vs.~averaging solutions $u^{\mathbf{Y}^1}(t,x)$, $u^{\mathbf{Y}^2}(t,x), \dots$ of~\eqref{RPDEintro1} for i.i.d.\ volatility model samples $\mathbf{Y}$. By construction, the variance of these samples is smaller than the variance of $\phi(X_T^{0,x})$, however, at the cost of increasing  computational time per sample. Apart from reducing the variance, taking conditional expectations also increases regularity of the payoff $\phi$, thus enabling the use of efficient deterministic quadrature methods (such as quasi Monte Carlo or sparse grids quadrature) and even multi-level Monte Carlo. %We refer to \cite{bayer2020multilevel,bayer2023numerical} in the context of a very different method. 
Along the same lines, we expect that Greeks such as the Delta $\partial_x E[\phi(X_T^{0,x})]$ and the Gamma $\partial_{xx}^2 E[\phi(X_T^{0,x})]$ can now be computed by differentiating inside the expectation,  and average $\partial_{x} u^{\mathbf{Y}}(t,x)$ and $\partial_{xx}^2u^{\mathbf{Y}}(t,x)$ over samples of $\mathbf{Y}$, thanks to potentially increased regularity. We demonstrate these advantages of our partial Monte-Carlo approach for European put options in two LSV models in Section \ref{sec:options}. Similarly, our approach may be useful for computing the density of $X_T^{0,x}$.

\bigskip

The remainder of this work is organized as follows. In Section \ref{Presection} we introduce some preliminaries from the theory of rough paths. Section \ref{RPDEsection} covers the main results of this work, starting by establishing the RSDE representation of the conditional price dynamics in Section \ref{conditioningsection}, and connecting them to RPDEs in Section \ref{FeynmanKacSection} via a Feynman-Kac formula. Section \ref{pricingRPDE} demonstrates how (conditional) option prices in LSVMs can be obtained as solutions to these RPDEs, which is explicitly illustrated in Section \ref{RVolsection} for two SV examples. In Section \ref{multivariatesection} we generalize the results to multidimensional dynamics. Section \ref{Numeriksection} focuses on numerical illustrations of our findings. We introduce two finite-difference schemes for solving RPDEs in Section \ref{sec:FD_schemes}, and apply them in Section \ref{sec:options} to price European options and to compute corresponding Greeks. Finally, Appendix \ref{RPappendix} contains the technical proofs of the main results.

\bigskip

\noindent
{\bf Acknowledgements:} CB, PKF, LP acknowledge funding by the Deutsche Forschungsgemeinschaft (DFG, German Research Foundation) under Germany's Excellence Strategy – The Berlin Mathematics Research Center MATH+ (EXC-2046/1, project ID: 390685689). PB, CB, PKF also acknowledge seed support for the DFG CRC/TRR 388 ``Rough Analysis, Stochastic 
Dynamics and Related Fields''. The authors would like to thank Xue-Mei Li for pointing out an error in the earlier version of Theorem \ref{consistencycoro}. We also would like to thank an Associate Editor and two anonymous referees for their suggestions that helped to improve our paper.
\color{black}

\section{Rough preliminaries}\label{Presection}
We begin by introducing some fundamental concepts of rough path theory, which are essential to understand the techniques henceforth. The technical details will be discussed in Appendix \ref{RPappendix} for the interested reader, and for more details about rough path theory we refer to \cite{friz2020course} and \cite{friz2010multidimensional}.  Unless stated otherwise, we consider $V$ to be any Banach space. Consider the set $\Delta_{[0,T]}:= \left \{(s,t) \in [0,T]^2: 0 \leq s \leq t \leq T\right\}$, and for any path $Y:[0,T] \rightarrow V$, we make use of the two increment notations $(\delta Y)_{s,t} = Y_{s,t} := Y_t-Y_s$ for $s\leq t.$
\begin{definition}
  For $\alpha \in (1 / 3, 1 / 2]$, the pair $\tmmathbf{Y} \assign (Y,
  \mathbb{Y})$ is called $\alpha$-Hölder rough path on $V$, where $Y:[0,
  T] \longrightarrow V$ and $\mathbb{Y} : \Delta_{[0, T]} \longrightarrow V\otimes V$
  such that
  \[ \| Y \|_{\alpha} \assign \sup_{0 \leq s < t \leq T} \frac{| Y_t - Y_s
     |}{| t - s |^{\alpha}} < \infty, \quad \| \mathbb{Y} \|_{2 \alpha}
     \assign \sup_{0 \leq s < t \leq T} \frac{| \mathbb{Y }_{s, t} |}{| t - s
     |^{2\alpha}} < \infty \]
  and \tmtextit{Chen's} relation holds true:
  \begin{equation}
    \mathbb{Y}_{s, t} = \mathbb{Y}_{s, u} + \mathbb{Y}_{u, t} + Y_{s, u}
    \otimes Y_{u, t}, \quad 0\leq s \leq u \leq t \leq T .\label{CHEN}
  \end{equation}
  We denote the space of $\alpha$-H{\"o}lder rough paths by
  $\mathscr{C}^{\alpha} ([0, T], V)$.
\end{definition}

\begin{remark}\label{RSremark} 
  If $Y:[0, T] \rightarrow V$ is smooth, one can check that for the choice
  $\mathbb{Y}_{s, t} \assign \int_s^t Y_{s, r} \otimes dY_r$, where the
  integral is defined in a Riemann-Stieltjes sense, $\tmmathbf{Y} = (Y,
  \mathbb{Y})$ defines an $\alpha$-H{\"o}lder rough path, and \tmtextit{Chen's}
  relation is a direct consequence of the additivity of the integral. If $Y$
  is only $\alpha$-H{\"o}lder continuous, this integral might not have any
  meaning, but we can think of $\mathbb{Y}$ as postulating the value of this integral. The additional structure $\mathbb{Y}$ added to the path $Y$, gives rise to a notion of integration against $dY$ for a large class of integrands, extending Riemann-Stieltjes integration. This is one of the key features of rough path theory.
\end{remark} For two $\alpha$-H{\"o}lder rough paths $\tmmathbf{Y} = (Y, \mathbb{Y})$ and
$\tmmathbf{Z} = (Z, \mathbb{Z})$, we introduce the rough path distance
\begin{equation*}
  \varrho_{\alpha} (Y, \mathbb{Y} ; Z, \mathbb{Z}) \assign \| Y - Z \|_{\alpha} + \| \mathbb{Y} - \mathbb{Z} \|_{2 \alpha} .
  \label{RPmetric}
\end{equation*}
It is not difficult to see that $\mathscr{C}^{\alpha} ([0, T], V)$ together with the map $(\mathbf{Y},\mathbf{Z}) \mapsto |Y_0-Z_0| + \varrho_{\alpha}(\mathbf{Y},\mathbf{Z})$ defines a complete metric space.

\begin{definition}
  For $\alpha \in (1 / 3, 1 / 2]$, we call an element $\tmmathbf{Y} = (Y,
  \mathbb{Y}) \in \mathscr{C}^{\alpha} ([0, T], V)$ weakly geometric rough path, if
  \begin{equation}
    \tmop{Sym} (\mathbb{Y}_{s, t}) = \frac{1}{2} Y_{s, t} \otimes Y_{s, t}, \quad 0\leq s \leq t \leq T.
    \label{geometric}
  \end{equation}
  We denote the space of weakly geometric rough paths by
  $\mathscr{C}^{\alpha}_g ([0, T], V)$.\end{definition} \begin{remark}
    If $Y:[0, T] \rightarrow V$ is smooth, we can check that
    \eqref{geometric} is nothing else than integration by parts for
    $\mathbb{Y}_{s, t} \assign \int_s^t Y_{s, r} \otimes dY_r$. If the
    latter has no initial meaning, we can think of the condition
    \eqref{geometric} as imposing this important property for the postulated
    value $\mathbb{Y}$.  \end{remark} 
\begin{comment}
    \begin{definition}
    For $\alpha \in (1 / 3, 1 / 2]$, we define the space of geometric
    $\alpha$-H{\"o}lder rough paths $\mathscr{C}^{0, \alpha}_g ([0, T], V)$ as
    the closure of smooth rough paths, with respect to the H{\"o}lder norm $\|
    \cdot \|_{\alpha}$.
  \end{definition}

    \begin{remark}
      For $1 / 3 < \alpha < \beta \leq 1 / 2$ and $\mathrm{dim}(V) < \infty$, it is possible to show the
      following inclusions for the different spaces, see \cite[Chapter 2]{friz2020course} \begin{equation*}
          \mathscr{C }^{\beta}_g \subset \mathscr{C}^{0, \alpha}_g \subset
         \mathscr{C}^{\alpha}_g \subset \mathscr{C}^{\alpha}.
      \end{equation*}
    \end{remark} 
\end{comment}
When discussing rough partial differential equations in Section \ref{RPDEsection}, a crucial technique involves approximating weakly geometric rough paths using rough path lifts of more regular paths. Consider $V= \mathbb{R}^d$, and Lipschitz paths $Y^{\epsilon}:[0,T] \longrightarrow \mathbb{R}^d$. As already mentioned in the remarks above, for all $\alpha \in (1/3,1/2]$ we have the canonical lift \begin{equation*}
    \mathbf{Y}^{\epsilon}= (Y^{\epsilon},\mathbb{Y}^{\epsilon}):= \left (Y^{\epsilon},\int_0^{\cdot}Y^{\epsilon}_{0,t}\otimes dY^{\epsilon}_t\right) \in \mathscr{C}^{\alpha}_g([0,T],\mathbb{R}^d),\label{smoothRP}
\end{equation*} where the integration can be understood in a Riemann-Stieltjes sense. Now using so-called \textit{geodesic-approximations}\footnote{See \cite[Chapter 5.2]{friz2010multidimensional} for instance.}, it is possible to prove the following result, see \cite[Proposition 2.8]{friz2020course} or \cite[Proposition 8.12]{friz2010multidimensional}. \begin{proposition}\label{weakconvpropo}
    Let $\alpha \in (1/3,1/2]$ and $\mathbf{Y}=(Y,\mathbb{Y})\in \mathscr{C}_g^{\alpha}([0,T],\mathbb{R}^d)$. Then there exist Lipschitz continuous paths $Y^{\epsilon}:[0,T] \longrightarrow \mathbb{R}^d$, such that \begin{equation}
        \mathbf{Y}^{\epsilon}:= \left(Y^{\epsilon},\int_0^{\cdot}Y^{\epsilon}_{0,t}\otimes dY^{\epsilon}_t\right) \longrightarrow \mathbf{Y} =(Y,\mathbb{Y}) \text{ uniformly on }[0,T]\text{ as } \epsilon \to 0,\label{uniform convergence}
    \end{equation} and we have uniform estimates\begin{equation}
         \sup_{\epsilon} \left( \Vert Y^{\epsilon}\Vert_{\alpha} + \Vert \mathbb{Y}^{\epsilon} \Vert_{2\alpha} \right) < \infty.\label{uniform estimates}
    \end{equation} 
\end{proposition}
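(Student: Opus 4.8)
The plan is the classical one behind \cite[Proposition~2.8]{friz2020course} and \cite[Proposition~8.12]{friz2010multidimensional}: lift $\mathbf{Y}$ to a path in the free step-$2$ nilpotent group, approximate that group-valued path by \emph{geodesic interpolations} over finer and finer partitions, observe that each such interpolation is the canonical lift of a Lipschitz $\mathbb{R}^d$-path, and finally transfer uniform H\"older bounds and uniform convergence back to $(Y^\epsilon,\mathbb{Y}^\epsilon)$. Concretely, since $\mathbf{Y}=(Y,\mathbb{Y})$ is weakly geometric, Chen's relation \eqref{CHEN} together with \eqref{geometric} ensures that $t\mapsto\mathbf{S}_t\in G^2(\mathbb{R}^d)$, the group element with first level $Y_{0,t}$ and second level $\mathbb{Y}_{0,t}$, is a well-defined path into the free step-$2$ group (so that the group increment $\mathbf{S}_s^{-1}\mathbf{S}_t$ has levels $Y_{s,t}$ and $\mathbb{Y}_{s,t}$). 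Equipping $G^2(\mathbb{R}^d)$ with its Carnot--Carath\'eodory metric $d_{\mathrm{CC}}$, the homogeneous-norm equivalence $\|g\|\asymp|g^{(1)}|+|g^{(2)}|^{1/2}$ gives $d_{\mathrm{CC}}(\mathbf{S}_s,\mathbf{S}_t)\asymp|Y_{s,t}|+|\mathbb{Y}_{s,t}|^{1/2}$, hence $\|\mathbf{S}\|_{\alpha;\mathrm{CC}}\assign\sup_{s<t}d_{\mathrm{CC}}(\mathbf{S}_s,\mathbf{S}_t)/|t-s|^\alpha\asymp\|Y\|_\alpha+\|\mathbb{Y}\|_{2\alpha}^{1/2}<\infty$; this single quantity controls \emph{both} components of $\mathbf{Y}$ simultaneously.

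Next I would build the approximations. Fix the mesh-$\epsilon$ partition $0=t_0<t_1<\dots<t_N=T$ and let $\mathbf{S}^\epsilon:[0,T]\to G^2(\mathbb{R}^d)$ agree with $\mathbf{S}$ at the $t_i$ and, on each $[t_i,t_{i+1}]$, be a constant-speed $d_{\mathrm{CC}}$-geodesic from $\mathbf{S}_{t_i}$ to $\mathbf{S}_{t_{i+1}}$. It is standard that $G^2(\mathbb{R}^d)$ with $d_{\mathrm{CC}}$ is a geodesic space and that $d_{\mathrm{CC}}$-geodesics are precisely the step-$2$ signature lifts of Lipschitz $\mathbb{R}^d$-valued paths (indeed of finite concatenations of straight segments and circular arcs), see \cite[Chapter~5.2]{friz2010multidimensional}. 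Concatenating these pieces yields a Lipschitz path $Y^\epsilon:[0,T]\to\mathbb{R}^d$ (with $Y^\epsilon_0=Y_0$) whose canonical lift $\bigl(Y^\epsilon,\int_0^\cdot Y^\epsilon_{0,t}\otimes dY^\epsilon_t\bigr)$ is exactly $\mathbf{S}^\epsilon$, i.e.\ the $\mathbf{Y}^\epsilon$ of the statement; here weak geometricity is used again to guarantee that $\mathbf{S}$ is genuinely group-valued, so that consecutive values can be joined by horizontal curves at all.

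For the uniform estimates \eqref{uniform estimates}, the key computation is the single-cell bound: by constant-speed parametrization, for $t_i\le s<t\le t_{i+1}$,
\[
d_{\mathrm{CC}}(\mathbf{S}^\epsilon_s,\mathbf{S}^\epsilon_t)=\frac{t-s}{\epsilon}\,d_{\mathrm{CC}}(\mathbf{S}_{t_i},\mathbf{S}_{t_{i+1}})\le\|\mathbf{S}\|_{\alpha;\mathrm{CC}}\,\frac{t-s}{\epsilon^{1-\alpha}}=\|\mathbf{S}\|_{\alpha;\mathrm{CC}}\Bigl(\frac{t-s}{\epsilon}\Bigr)^{1-\alpha}(t-s)^\alpha\le\|\mathbf{S}\|_{\alpha;\mathrm{CC}}\,(t-s)^\alpha,
\]
using $t-s\le\epsilon$ and $\alpha\le1$. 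For general $s<t$, let $t_i$ (resp.\ $t_j$) be the first (resp.\ last) grid point strictly between $s$ and $t$; since $\mathbf{S}^\epsilon$ coincides with $\mathbf{S}$ on the grid one has $d_{\mathrm{CC}}(\mathbf{S}^\epsilon_{t_i},\mathbf{S}^\epsilon_{t_j})=d_{\mathrm{CC}}(\mathbf{S}_{t_i},\mathbf{S}_{t_j})\le\|\mathbf{S}\|_{\alpha;\mathrm{CC}}(t-s)^\alpha$, while the two boundary pieces are each controlled by the single-cell bound, so the triangle inequality gives $d_{\mathrm{CC}}(\mathbf{S}^\epsilon_s,\mathbf{S}^\epsilon_t)\le3\|\mathbf{S}\|_{\alpha;\mathrm{CC}}(t-s)^\alpha$ uniformly in $\epsilon$ (the degenerate cases with no grid point strictly inside are even simpler). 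Transferring via the homogeneous-norm equivalence of the first paragraph yields $\sup_\epsilon\bigl(\|Y^\epsilon\|_\alpha+\|\mathbb{Y}^\epsilon\|_{2\alpha}\bigr)<\infty$. For \eqref{uniform convergence}: $\mathbf{S}^\epsilon$ and $\mathbf{S}$ agree on the grid, which becomes dense, and both are $\alpha$-H\"older with the same constant, so for any $t$, picking $t_i$ with $|t-t_i|\le\epsilon$ gives $d_{\mathrm{CC}}(\mathbf{S}^\epsilon_t,\mathbf{S}_t)\le d_{\mathrm{CC}}(\mathbf{S}^\epsilon_t,\mathbf{S}^\epsilon_{t_i})+d_{\mathrm{CC}}(\mathbf{S}_{t_i},\mathbf{S}_t)\le 4\|\mathbf{S}\|_{\alpha;\mathrm{CC}}\epsilon^\alpha\to0$ as $\epsilon\to0$, uniformly in $t$; since $d_{\mathrm{CC}}$ dominates $|Y^\epsilon_t-Y_t|$ and, by expanding the group product, $|\mathbb{Y}^\epsilon_{0,t}-\mathbb{Y}_{0,t}|\lesssim d_{\mathrm{CC}}(\mathbf{S}^\epsilon_t,\mathbf{S}_t)+d_{\mathrm{CC}}(\mathbf{S}^\epsilon_t,\mathbf{S}_t)^2$, Chen's relation $\mathbb{Y}_{s,t}=\mathbb{Y}_{0,t}-\mathbb{Y}_{0,s}-Y_{0,s}\otimes Y_{s,t}$ upgrades this to uniform convergence on $\Delta_{[0,T]}$.

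The only genuinely non-routine ingredient is the geometric input invoked in the second paragraph --- that $d_{\mathrm{CC}}$-geodesics in $G^2(\mathbb{R}^d)$ exist and are step-$2$ lifts of Lipschitz paths, which is where the Lipschitz realizations $Y^\epsilon$ come from --- and for this I would simply cite \cite[Chapter~5.2]{friz2010multidimensional}; everything else is the elementary telescoping estimate above plus equicontinuity. A self-contained alternative that sidesteps Carnot geometry altogether is to define $Y^\epsilon$ on each cell $[t_i,t_{i+1}]$ explicitly as a straight segment carrying the increment $Y_{t_i,t_{i+1}}$ followed by small planar loops whose enclosed areas reproduce $\tmop{Antisym}(\mathbb{Y}_{t_i,t_{i+1}})$ --- legitimate precisely because \eqref{geometric} pins $\tmop{Sym}(\mathbb{Y}_{t_i,t_{i+1}})=\tfrac12 Y_{t_i,t_{i+1}}\otimes Y_{t_i,t_{i+1}}$, which any path automatically realizes; since a loop of enclosed area $a$ has length $\asymp|a|^{1/2}$, the step-$2$ lift of this piece has group increment $\mathbf{S}_{t_i,t_{i+1}}$ and homogeneous size $\lesssim|Y_{t_i,t_{i+1}}|+|\mathbb{Y}_{t_i,t_{i+1}}|^{1/2}\le C\|\mathbf{S}\|_{\alpha;\mathrm{CC}}\epsilon^\alpha$, after which the H\"older and convergence arguments of the third paragraph go through verbatim.
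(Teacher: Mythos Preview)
Your proposal is correct and follows exactly the geodesic-approximation argument of \cite[Proposition~2.8]{friz2020course} and \cite[Proposition~8.12]{friz2010multidimensional}, which is precisely what the paper invokes (the paper does not give its own proof but simply cites these references). Your sketch is in fact considerably more detailed than what the paper provides, and the optional ``segment-plus-loops'' construction you mention at the end is the standard explicit realization of the abstract geodesics.
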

\begin{remark}\label{weakconvdef} Motivated by the last proposition, we say $\mathbf{Y}^{\epsilon}$ converges weakly to $\mathbf{Y}$ in $\mathscr{C}^{\alpha}([0,T],\mathbb{R}^d)$, in symbols $\mathbf{Y}^{\epsilon} \rightharpoonup \mathbf{Y}$ in $\mathscr{C}^{\alpha}$, if and only if \eqref{uniform convergence} and \eqref{uniform estimates} hold true. Using an interpolation argument for $\alpha$-Hölder norms, see for instance \cite[Proposition 5.5]{friz2010multidimensional}, $\mathbf{Y}^{\epsilon} \rightharpoonup \mathbf{Y}$ in $\mathscr{C}^{\alpha}$ then implies that $\mathbf{Y}^{\epsilon} \rightarrow \mathbf{Y}$ in $\mathscr{C}^{\alpha'}$ for all $\alpha'\in (1/3,\alpha)$, that is $\varrho_{\alpha'}(\mathbf{Y}^{\epsilon},\mathbf{Y}) \rightarrow 0$ as $\epsilon \to 0.$
\end{remark}

A crucial concept in rough path theory, particularly relevant for this paper, is the notion of \textit{rough brackets}. Rough brackets play a similar role for rough path lifts as quadratic variation does in stochastic integration.\begin{definition}
  \label{rBrackets} For any $\tmmathbf{Y} \in \mathscr{C}^{\alpha} ([0, T], V)$
  with $\alpha \in (1 / 3, 1 / 2]$, we define the rough brackets as
  \begin{equation*}
    [\tmmathbf{Y}]_{t} \assign Y_{0,t} \otimes Y_{0,t} - 2 \tmop{Sym}
    (\mathbb{Y}_{0, t}) .
  \end{equation*}
\end{definition} It is not difficult to check that $t \mapsto [\mathbf{Y}]_t \in C^{2\alpha}([0,T],\mathrm{Sym}(V\otimes V))$. The main reason to introduce rough brackets is the following result, which is discussed in more detail in Appendix \ref{Lipbracketappendix}. \begin{lemma}\label{bijectionlemma}
    Let $\alpha \in (1/3,1/2]$. Then every $\alpha$-Hölder rough path $\mathbf{Y} = (Y,\mathbb{Y})\in \mathscr{C}^{\alpha}([0,T],V)$ can be identified uniquely with a pair $(\mathbf{Y}^g,[\mathbf{Y}])$, where  $\mathbf{Y}^g=(Y,\mathbb{Y}^g)$ is a weakly geometric $\alpha$-Hölder rough path with second level $\mathbb{Y}^g = \mathbb{Y}+\frac{1}{2}\delta[\mathbf{Y}]$. In particular, we have the following bijection \begin{equation*}
  \mathscr{C}^{\alpha} ([0, T], V) \longleftrightarrow \mathscr{C}^{\alpha}_g
  ([0, T], V) \oplus C_0^{2 \alpha} ([0, T], \tmop{Sym} (V \otimes V)),
  \label{Bijection}
\end{equation*} where $C^{2\alpha}_0$ denotes the space of $2\alpha$-Hölder continuous paths starting from $0$.
\end{lemma}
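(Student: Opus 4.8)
The plan is to exhibit the claimed correspondence by an explicit construction: build the forward map $\mathbf{Y}\mapsto(\mathbf{Y}^g,[\mathbf{Y}])$, build a candidate inverse, and then verify that each map lands in the stated space and that the two are mutually inverse. Everything hinges on one preliminary identity, which I would establish first. Given $\mathbf{Y}=(Y,\mathbb{Y})\in\mathscr{C}^{\alpha}$, combine Chen's relation \eqref{CHEN} with the bilinear expansion $Y_{0,t}\otimes Y_{0,t}=Y_{0,s}\otimes Y_{0,s}+Y_{0,s}\otimes Y_{s,t}+Y_{s,t}\otimes Y_{0,s}+Y_{s,t}\otimes Y_{s,t}$; using Definition \ref{rBrackets} this yields
\[
  (\delta[\mathbf{Y}])_{s,t}=[\mathbf{Y}]_t-[\mathbf{Y}]_s = Y_{s,t}\otimes Y_{s,t}-2\,\tmop{Sym}(\mathbb{Y}_{s,t}),\qquad 0\le s\le t\le T .
\]
In particular $[\mathbf{Y}]_0=0$, the path is $\tmop{Sym}(V\otimes V)$-valued by construction, and $\|[\mathbf{Y}]\|_{2\alpha}\le\|Y\|_{\alpha}^2+2\|\mathbb{Y}\|_{2\alpha}<\infty$, so $[\mathbf{Y}]\in C^{2\alpha}_0([0,T],\tmop{Sym}(V\otimes V))$ (as already noted above the lemma).

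Next I would define $\mathbb{Y}^g_{s,t}:=\mathbb{Y}_{s,t}+\tfrac12(\delta[\mathbf{Y}])_{s,t}$ and check that $\mathbf{Y}^g=(Y,\mathbb{Y}^g)\in\mathscr{C}^{\alpha}_g$. The $2\alpha$-Hölder bound on $\mathbb{Y}^g$ is immediate from those on $\mathbb{Y}$ and $\delta[\mathbf{Y}]$. Chen's relation for $\mathbb{Y}^g$ holds because $\mathbb{Y}$ satisfies \eqref{CHEN} and the correction term is a \emph{pure increment} of the path $[\mathbf{Y}]$, hence telescopes and contributes nothing to the Chen defect: $(\delta[\mathbf{Y}])_{s,t}=(\delta[\mathbf{Y}])_{s,u}+(\delta[\mathbf{Y}])_{u,t}$. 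Finally, the weakly geometric condition \eqref{geometric} follows at once from the displayed identity, since $\delta[\mathbf{Y}]$ is symmetric-valued and
\[
  \tmop{Sym}(\mathbb{Y}^g_{s,t})=\tmop{Sym}(\mathbb{Y}_{s,t})+\tfrac12(\delta[\mathbf{Y}])_{s,t}=\tfrac12\,Y_{s,t}\otimes Y_{s,t}.
\]
This shows $\mathbf{Y}\mapsto(\mathbf{Y}^g,[\mathbf{Y}])$ maps $\mathscr{C}^{\alpha}$ into $\mathscr{C}^{\alpha}_g\oplus C^{2\alpha}_0$.

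For the inverse, given $(\mathbf{Z},A)$ with $\mathbf{Z}=(Z,\mathbb{Z})\in\mathscr{C}^{\alpha}_g$ and $A\in C^{2\alpha}_0([0,T],\tmop{Sym}(V\otimes V))$, set $\mathbf{Y}:=(Z,\mathbb{Z}-\tfrac12\,\delta A)$. The same telescoping/regularity arguments as above show $\mathbf{Y}\in\mathscr{C}^{\alpha}$. Computing its bracket by Definition \ref{rBrackets}, using that $\mathbf{Z}$ is weakly geometric and $A_0=0$, gives $[\mathbf{Y}]_t=Z_{0,t}\otimes Z_{0,t}-2\tmop{Sym}(\mathbb{Z}_{0,t})+A_t=A_t$; substituting this back, $\mathbb{Y}^g=\mathbb{Z}-\tfrac12\delta A+\tfrac12\delta[\mathbf{Y}]=\mathbb{Z}$, i.e.\ $\mathbf{Y}^g=\mathbf{Z}$. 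Conversely, starting from $\mathbf{Y}\in\mathscr{C}^{\alpha}$, forming $(\mathbf{Y}^g,[\mathbf{Y}])$ and then applying this construction returns $(Y,\mathbb{Y}^g-\tfrac12\delta[\mathbf{Y}])=(Y,\mathbb{Y})=\mathbf{Y}$. Hence the two maps are mutually inverse, which is the asserted bijection; uniqueness of the representation $\mathbf{Y}\leftrightarrow(\mathbf{Y}^g,[\mathbf{Y}])$ is contained in this, and can also be seen directly from the displayed identity, which forces $\delta[\mathbf{Y}]$ (hence $[\mathbf{Y}]$, as it starts at $0$) to be $2\tmop{Sym}(\mathbb{Y}_{s,t})$ minus... rather, determined by $\tmop{Sym}(\mathbb{Y})$, while the antisymmetric part of $\mathbb{Y}^g$ must equal that of $\mathbb{Y}$.

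I do not expect a genuine obstacle here: the entire argument is elementary linear algebra on increments. The one point requiring care is the systematic distinction between $[\mathbf{Y}]$ viewed as a path and $\delta[\mathbf{Y}]$ viewed as a two-parameter increment — one must prove the increment identity via Chen's relation \emph{before} anything else, and then use repeatedly the two stable facts that (i) adding half of a pure increment of a $2\alpha$-Hölder path to the second level of an $\alpha$-Hölder rough path preserves both Chen's relation and the Hölder bounds, and (ii) it shifts only the symmetric part of the second level. The $2\alpha$-regularity of $[\mathbf{Y}]$, resp.\ of $A$, is exactly what keeps all intermediate objects inside $\mathscr{C}^{\alpha}$.
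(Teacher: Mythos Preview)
Your proposal is correct and follows essentially the same approach as the paper: both arguments exhibit the explicit map $(\mathbf{Z},A)\mapsto(Z,\mathbb{Z}-\tfrac12\delta A)$ from $\mathscr{C}^{\alpha}_g\oplus C^{2\alpha}_0$ to $\mathscr{C}^{\alpha}$, verify well-definedness via Chen's relation plus the telescoping of $\delta A$, and then check bijectivity using the bracket construction $[\mathbf{Y}]_t=Y_{0,t}\otimes Y_{0,t}-2\,\tmop{Sym}(\mathbb{Y}_{0,t})$ together with the geometricity condition. The only cosmetic difference is that the paper phrases bijectivity as separate injectivity and surjectivity checks of a single map, whereas you build both maps and verify they are mutual inverses; your explicit derivation of the increment identity $(\delta[\mathbf{Y}])_{s,t}=Y_{s,t}\otimes Y_{s,t}-2\,\tmop{Sym}(\mathbb{Y}_{s,t})$ up front is a nice touch the paper leaves implicit.
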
 The weakly geometric rough path $\mathbf{Y}^g$ is sometimes called \textit{geometrification} of the rough path $\mathbf{Y}$. Notice that the expression $\mathbb{Y}^g = \mathbb{Y} +
\frac{1}{2} \delta[\tmmathbf{Y}]$ is reminiscient of the It{\^o}-Stratonovich
relation in stochastic integration theory, and we will demonstrate how it generalizes this concept. \\ \\
In the remaining part of this section, we focus on the case where $V = \mathbb{R}^d$. There are certain cases where the brackets of a rough path $\mathbf{Y}$ are Lipschitz continuous, rather than only $2\alpha$-Hölder as in the general case. This motivates to introduce the space of rough paths with Lipschitz brackets. We denote by $\mathbb{S}^d$ (resp. $\mathbb{S}_+^d$) the set of symmetric (resp. positive semidefinite symmetric) matrices.
\begin{definition}\label{defLipschitzbrackets}
  Let $\alpha \in (1 / 3, 1 / 2)$ and $\mathbf{Y} \in
  \mathscr{C}^{\alpha} ([0, T], \mathbb{R}^d)$. We say $\tmmathbf{Y}$ is a rough
  path with Lipschitz brackets, in symbols $\mathbf{Y}\in \mathscr{C}^{\alpha,1}([0,T],\mathbb{R}^d)$, if $t \mapsto [\tmmathbf{Y}]_t$ is
  Lipschitz continuous. In this case, we set \begin{equation*}
      \mathbf{V}^{\mathbf{Y}}_t := \frac{d[\mathbf{Y}]_t}{dt}\in \mathbb{S}^d,
  \end{equation*} for almost every $t$. Moreover, we say that $\mathbf{Y}$ has non-decreasing Lipschitz brackets, in symbols $\mathbf{Y}\in \mathscr{C}^{\alpha,1+}([0,T],\mathbb{R}^d)$, if additionally we have $\mathbf{V}^{\mathbf{Y}}_t \in \mathbb{S}^d_+$. In this case we denoted by $\mathbf{v}^{\mathbf{Y}}$ the unique positive semidefinite square-root of $\mathbf{V}^{\mathbf{Y}}$, that is $\mathbf{v}^{\mathbf{Y}}(\mathbf{v}^{\mathbf{Y}})^T =\mathbf{V}^{\mathbf{Y}}$ and $\mathbf{v}^{\mathbf{Y}}\in \mathbb{S}^d_+$.
\end{definition} We refer the interested reader to Appendix \ref{Lipbracketappendix} for more details about the spaces of rough paths with Lipschitz brackets. \\ \\
An important class of rough paths comes from enhancing stochastic
processes. Consider a filtered probability space $(\Omega, \mathcal{F},
(\mathcal{F}_t)_{t \geq 0}, P)$ fulfilling the usual conditions. The most
basic, but probably most important examples are the following two lifts of
standard Brownian motion, see for instance \cite[Chapter 3]{friz2020course} for details.

\begin{example}
  Consider a $d$-dimensional standard Brownian motion
  $B$. It is well-known that we can define the two rough path lifts
  $\tmmathbf{B}^{\tmop{Ito}} = (B, \mathbb{B}^{\tmop{Ito}})$ and $\tmmathbf{B
  }^{\text{Strat}} = (B, \mathbb{B}^{\text{Strat}})$, where
  \[ \mathbb{B }^{\tmop{Ito}}_{s, t} \assign \int_s^t B_{s, r} dB_r,
     \quad \mathbb{B}^{\text{Strat}}_{s, t} = \int_s^t B_{s, r} \circ
     dB_r, \]
  where $dB$ denotes the It{\^o}, and $\circ dB$ the
  Stratonovich-integration. Using standard It{\^o}-calculus, one can check
  that for any $\alpha \in (1/3, 1 / 2)$ we have $\tmmathbf{B}^{\tmop{Ito}}
  (\omega) \in \mathscr{C}^{\alpha} ([0, T], \mathbb{R}^d)$ and
  $\tmmathbf{B}^{\text{Strat}} (\omega) \in \mathscr{C }^{\alpha}_g ([0, T],
  \mathbb{R}^d)$ almost surely.
\end{example}
It is not difficult to extend the last example to continuous local martingales $M$ of the form
\begin{equation*}\label{WienerMG}
  M_t = M_0 + \int_0^t \sigma_s dB_s, \quad 0 \leq t \leq T,
\end{equation*}
where $\sigma \in \mathbb{R}^{d \times d}$ is an $(\mathcal{F}^B_t)$-progressively measurable
process such that almost surely \begin{equation*}
    \Vert \sigma \Vert_{\infty;[0,T]} := \sup_{0\leq t \leq T} | \sigma_t | < \infty.
\end{equation*}
Indeed, we can define $\tmmathbf{M}^{\text{Itô}} = (M,
\mathbb{M}^{\text{Itô}})$ and $\tmmathbf{M }^{\text{Strat}} = (M,
\mathbb{M}^{\text{Strat}})$ by \begin{equation*}
    \mathbb{ M}^{\text{Itô}}_{s, t} = \int_s^t M_{s, r} \otimes dM_r, \quad \mathbb{M}^{\text{Strat}}_{s, t} = \int_s^t M_{s, r} \otimes \circ
   dM_r . 
\end{equation*}
The proof of the following result can be found in Appendix \ref{enhancedAppendix}.
\begin{proposition}\label{Mglift}  For any $\alpha \in (1 / 3, 1 / 2) \tmmathbf{},$ we have $\tmmathbf{M}^{\text{Itô}} \in \mathscr{C}^{\alpha}([0,T],\mathbb{R}^d)$ and $\tmmathbf{M}^{\text{Strat}} \in \mathscr{C}^{\alpha}_g([0,T],\mathbb{R}^d)$ almost surely.
\end{proposition}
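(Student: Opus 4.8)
The plan is to reduce to a uniformly bounded coefficient by localization, then run the classical Kolmogorov-type argument for rough path lifts built on the Burkholder--Davis--Gundy (BDG) inequalities, and finally pass from the It\^o to the Stratonovich lift through an It\^o--Stratonovich correction. For the reduction, set $\tau_N \assign \inf\{t\in[0,T]: |\sigma_t|\ge N\}\wedge T$; these are stopping times (d{\'e}but theorem, using that $(\mathcal{F}_t)$ fulfils the usual conditions and $\sigma$ is progressive), the process $\tilde\sigma \assign \sigma\,\mathbf{1}_{[0,\tau_N)}$ is bounded by $N$, and $M^{\tau_N} = M_0 + \int_0^\cdot \tilde\sigma_s\,dB_s$. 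By compatibility of stochastic integration with stopping, the It\^o (resp.\ Stratonovich) lift of $M^{\tau_N}$ agrees on the event $\{\tau_N = T\}$ with that of $M$, for both levels. Since membership in $\mathscr{C}^{\alpha}$ and in $\mathscr{C}^{\alpha}_g$ is a pathwise property and $\bigcup_N\{\tau_N = T\}$ has full probability (the sample paths of $\sigma$ being bounded), it suffices to establish the statement under the assumption that $|\sigma|\le N$ for a deterministic constant $N$.

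Assume henceforth $|\sigma|\le N$. First, $M$ is $\alpha$-H\"older for every $\alpha<1/2$: since $\langle M^i\rangle_{s,t} = \int_s^t|\sigma^{i\cdot}_r|^2\,dr \le dN^2|t-s|$, BDG gives, for every $p\ge 1$, a constant $C=C(p,d,N)$ with $E[|M_{s,t}|^{2p}]\le C|t-s|^p$, and Kolmogorov's continuity theorem then yields $\|M\|_{\alpha;[0,T]}<\infty$ a.s.\ for every $\alpha<\tfrac12-\tfrac1{2p}$, hence for all $\alpha<1/2$ upon letting $p\to\infty$. The second level $\mathbb{M}^{\text{It\^o},ij}_{s,t}=\int_s^t M^i_{s,r}\,dM^j_r=\int_s^t M^i_{s,r}\sum_k\sigma^{jk}_r\,dB^k_r$ is again a genuine stochastic integral, so BDG, Jensen's inequality and the preceding moment bound give
\[
E\big[|\mathbb{M}^{\text{It\^o}}_{s,t}|^{2p}\big]\le C\,E\Big[\Big(\int_s^t|M_{s,r}|^2|\sigma_r|^2\,dr\Big)^{p}\Big]\le C'|t-s|^{p-1}\int_s^t E\big[|M_{s,r}|^{2p}\big]\,dr\le C''|t-s|^{2p},
\]
while Chen's relation for $\mathbb{M}^{\text{It\^o}}$ follows at once from additivity of the It\^o integral together with $M_{s,r}=M_{s,u}+M_{u,r}$. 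Feeding these two moment estimates and Chen's relation into the rough-path version of Kolmogorov's continuity criterion (see \cite[Chapter 3]{friz2020course}) produces, for every $\alpha<1/2$, an a.s.\ finite bound $\|M\|_{\alpha}+\|\mathbb{M}^{\text{It\^o}}\|_{2\alpha}<\infty$, i.e.\ $\tmmathbf{M}^{\text{It\^o}}\in\mathscr{C}^{\alpha}([0,T],\mathbb{R}^d)$ almost surely.

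For the Stratonovich lift, recall $\mathbb{M}^{\text{Strat}}_{s,t}=\mathbb{M}^{\text{It\^o}}_{s,t}+\tfrac12\int_s^t\sigma_r\sigma_r^T\,dr$. The correction term is Lipschitz in $(s,t)$ (as $|\sigma|\le N$), hence $2\alpha$-H\"older for every $\alpha<1/2$, so $\tmmathbf{M}^{\text{Strat}}\in\mathscr{C}^{\alpha}$ almost surely follows from the previous step, and Chen's relation is inherited. It remains to check the weak geometricity \eqref{geometric}: It\^o's product rule applied to $r\mapsto M^i_{s,r}M^j_{s,r}$ gives $M^i_{s,t}M^j_{s,t}=\mathbb{M}^{\text{It\^o},ij}_{s,t}+\mathbb{M}^{\text{It\^o},ji}_{s,t}+\int_s^t(\sigma_r\sigma_r^T)_{ij}\,dr$, i.e.\ $2\,\tmop{Sym}(\mathbb{M}^{\text{It\^o}}_{s,t})=M_{s,t}\otimes M_{s,t}-\int_s^t\sigma_r\sigma_r^T\,dr$, whence $\tmop{Sym}(\mathbb{M}^{\text{Strat}}_{s,t})=\tmop{Sym}(\mathbb{M}^{\text{It\^o}}_{s,t})+\tfrac12\int_s^t\sigma_r\sigma_r^T\,dr=\tfrac12\,M_{s,t}\otimes M_{s,t}$. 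Hence $\tmmathbf{M}^{\text{Strat}}\in\mathscr{C}^{\alpha}_g([0,T],\mathbb{R}^d)$ almost surely.

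The one step that calls for genuine care is the localization, where one must verify that the iterated-integral second levels of $M$ and of the stopped process $M^{\tau_N}$ really coincide on $\{\tau_N=T\}$; once $\sigma$ is bounded, everything else is routine bookkeeping with BDG and the (rough-path) Kolmogorov criterion, and I anticipate no serious difficulty there.
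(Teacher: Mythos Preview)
Your proof is correct and follows essentially the same approach as the paper's: localize via stopping times to reduce to bounded $\sigma$, establish the increment moment bounds for $M$ and $\mathbb{M}^{\text{It\^o}}$ via BDG, invoke the rough-path Kolmogorov criterion, and handle the Stratonovich lift through the It\^o--Stratonovich correction together with It\^o's product rule for the weak-geometricity identity. Your localization (working pathwise on $\{\tau_N=T\}$) and your second-level moment estimate (Jensen on $\int_s^t E[|M_{s,r}|^{2p}]\,dr$) are slightly cleaner than the paper's corresponding steps, but the overall argument is the same; the point you flag as delicate---that the second levels of $M$ and $M^{\tau_N}$ agree on $\{\tau_N=T\}$---is immediate, since on that event the integrators themselves coincide on $[0,T]$.
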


\begin{remark}
  \label{CorresMG} Using It{\^o}-Stratonovich correction, we can notice that
  $\mathbb{M}^{\text{Itô}}_{s, t} = \mathbb{M}^{\text{Strat}}_{s, t} -
  \frac{1}{2} \delta[M]_{s, t}$, where $[M]$ denotes the matrix of covariations $[M^i, M^j]_{1 \leq
  i, j \leq d}$. From Lemma \ref{bijectionlemma} we know that we can uniquely identify $\tmmathbf{M}^{\text{Itô}}$ with a pair $(\tmmathbf{M}^g,[\tmmathbf{M}^{\text{Itô}}])$. By uniqueness it readily follows
  that $\mathbb{M}^g = \mathbb{M}^{\text{Strat}}$ and
  $[\tmmathbf{M}^{\text{Itô}}]_{s, t} = [M]_{s, t}$ almost surely. This
  motivates to use the following notation $\tmmathbf{M} \assign
  \tmmathbf{M}^{\text{Itô}}$ and $\tmmathbf{M}^g :=
  \tmmathbf{M}^{\text{Strat}}$. Finally, since $[M]_t= \int_0^t\sigma_s \sigma^T_sds$, we have almost surely that the map $t \mapsto [M]_t$ is Lipschitz continuous, and $\frac{d[M]_t}{dt}= \sigma_t\sigma_t^T$ is bounded almost surely, thus by Definition \ref{defLipschitzbrackets}, we have $\mathbf{M}\in \mathscr{C}^{\alpha,1+}([0,T],\mathbb{R}^d)$.
\end{remark}

\section{Pricing in local stochastic volatility models using RPDEs}\label{RPDEsection}
In this section, we present the main results of this paper that establish a connection between general local stochastic volatility models and rough partial differential equations (RPDEs). \subsection{Conditioning in local stochastic volatility dynamics}\label{conditioningsection} Consider two independent standard Brownian motions $W$ and $B$, an $(\mathcal{F}_t^W)$-adapted volatility process $(v_t)_{t\in[0,T]}$, such that Assumption \ref{ass1} holds true, and set $V=v^2$. For $f,g \in C_b^3([0,T]\times \mathbb{R},\mathbb{R})$ and $I_t:=\int_0^tv_sdW_s$, we are interested in the local stochastic volatility model \begin{equation}
    \begin{aligned}
    X^{t,x}_t=x, \quad d X^{t,x} _s = f (s, X^{t,x} _s) dI_s + g(s, X^{t,x} _s) v_s dB_s,
  \quad 0\leq t < s \leq T.\label{SDErep4}
    \end{aligned}
\end{equation} Thanks to \cite[Chapter 3 Theorem 7]{protter2005stochastic} it is clear that the SDE \eqref{SDErep4} has unique strong solutions for all $(t,x)\in [0,T]\times \mathbb{R}$. It should be noted that we focus on one-dimensional dynamics in this presentation for the sake of clarity. However, there is no obstacle in generalizing the approach to multivariate dynamics, as discussed in Section \ref{multivariatesection}. \\ \\ By conditioning on the Brownian motion $W$ in \eqref{SDErep4}, we wish to disintegrate the randomness arising from the $(\mathcal{F}^W_t)$-adapted pair $(I,v)$, and obtain a Markovian nature in $B$ for the conditional dynamics. To make this rigorous, we use the notion of rough paths with non-decreasing Lipschitz brackets introduced in Definition \ref{defLipschitzbrackets}. For such a rough path $\mathbf{Y}\in \mathscr{C}^{\alpha,1+}([0,T],\mathbb{R})$, with Lipschitz continuous rough brackets $t\mapsto[\mathbf{Y}]_t$, see Definition \ref{rBrackets}, we introduced the notation \begin{equation*}
    \left ([\mathbf{Y}]_t,\frac{d[\mathbf{Y}]_t}{dt},\sqrt{\frac{d[\mathbf{Y}]_t}{dt}} \right )= \left ( \int_0^t\mathbf{V}^{\mathbf{Y}}_sds,\mathbf{V}^{\mathbf{Y}}_t,\mathbf{v}^{\mathbf{Y}}_t\right ).
\end{equation*} Recall that the Itô rough path lift $\mathbf{I}=(I,\int \delta IdI)$ of the martingale $I$ constitutes a (random) rough path in $ \mathscr{C}^{\alpha,1+}([0,T],\mathbb{R})$ with brackets $[\mathbf{I}] = [I]=\int V_tdt$, see Proposition \ref{Mglift} and Remark \ref{CorresMG}. Thus we have the following consistency \begin{equation}
   \left([I]_t,\frac{d[I]_t}{dt},\sqrt{\frac{d[I]_t}{dt}} \right )=\left ( \int_0^tV_sds,V_t,|v_t| \right )=\left ( \int_0^t\mathbf{V}^{\mathbf{I}}_sds,\mathbf{V}^{\mathbf{I}}_t,\mathbf{v}^{\mathbf{I}}_t\right ).\label{consistencyIv}
\end{equation} \color{black} Replacing the pair $(I,v)$ in the SDE \eqref{SDErep4} with the deterministic pair $(\mathbf{Y},\mathbf{v}^{\mathbf{Y}})$ for some $\mathbf{Y}\in \mathscr{C}^{\alpha,1+}([0,T],\mathbb{R})$ yields the rough stochastic differential equation (RSDE)\footnote{A comprehensive theory about more general RSDEs can be found in \cite{friz2021rough}.} \begin{equation}
    X_t^{t, x,\mathbf{Y}} = x, \quad dX_s^{t,x,\mathbf{Y}} = f(s,X_s^{t,x,\mathbf{Y}})d\mathbf{Y}_s+g(s,X_s^{t,x,\mathbf{Y}})\mathbf{v}^{\mathbf{Y}}_sdB_s, \quad t< s \leq T.  \label{RSDErep4}
\end{equation}  At least formally, considering the consistency \eqref{consistencyIv}, when choosing $\mathbf{Y}=\mathbf{I}(\omega)$ one expects the solution to equation \eqref{RSDErep4} to coincide in some sense with the solution to equation \eqref{SDErep4}. In order to treat \eqref{RSDErep4} as a genuine rough differential equation (RDE), we define the martingale $M^{\mathbf{Y}}:=\int \mathbf{v}^{\mathbf{Y}}dB$ and we want to define a joint rough path lift of $(M^{\mathbf{Y}}(\omega),Y)$\footnote{The joint-lift method for RSDEs has been studied in \cite{diehl2015levy} for the case of Brownian motion, that is $v^{\mathbf{Y}}\equiv 1$ in our setting, and more recently for general càdlàg and $p$-rough paths in \cite{friz2023rough}.}.  The following definition explains how to lift $(M(\omega),Y)$ for \emph{any} $(\mathcal{F}^B_t)$-local martingale $M$.
\begin{definition}\label{jointlift} Let $\alpha \in (1/3,1/2]$ and $\mathbf{Y}=(Y,\mathbb{Y}) \in \mathscr{C}^{\alpha}([0,T],\mathbb{R})$. For any $(\mathcal{F}^B_t)$-local martingale $M$, we define the joint lift $\mathbf{Z}^{\mathbf{Y}}(\omega)= (Z^{\mathbf{Y}} (\omega), \mathbb{Z}^{\mathbf{Y}} (\omega))$ by \begin{equation}
    Z^{\mathbf{Y}}(\omega):=(M(\omega),Y), \qquad \mathbb{Z}^{\mathbf{Y}}_{s,t}(\omega) \assign \left(\begin{array}{cc}
    \int_s^t M_{s,r}dM_r & \int_s^t M_{s,r}dY_r\\
    \int_s^t Y_{s,r}dM_r & \mathbb{Y}_{s,t}
  \end{array}\right)(\omega),\label{lift}
\end{equation}  where the first entry of $\mathbb{Z}^{\mathbf{Y}}$ is the (canonical) It{\^o} rough path lift of the local
martingale $M$, see Proposition \ref{Mglift}, $\int_s^t Y_{s,r} dM_r$ is a well-defined It{\^o}
integral, and we set $\int_s^t M_{s,r} dY_r : = M_{s,t}Y_{s,t} - \int_s^tY_{s,r} dM_r$, imposing integration by parts.
\end{definition}
The following theorem demonstrates that we can use the joint lift to ensure the well-posedness of equation \eqref{RSDErep4}. Furthermore, it establishes that the unique solution is an $(\mathcal{F}_t^B)$-Markov process, and when $\mathbf{Y} = \mathbf{I}(\omega)$, the conditional distributions, given $\mathcal{F}^W_T\lor \mathcal{F}^B_t$, of the solutions to \eqref{SDErep4} and \eqref{RSDErep4} coincide. The proof of this result is discussed in Appendix \ref{RSDEAPPendix}. \begin{theorem}  \label{consistencycoro}Let $\alpha \in (1 / 3, 1 / 2]$ and assume that $f,g \in C_b^{3}([0,T]\times \mathbb{R},\mathbb{R})$. For any $\tmmathbf{Y} = (Y,\mathbb{Y}) \in
  \mathscr{C}^{\alpha,1+} ([0, T], \mathbb{R})$ and $M^{\mathbf{Y}}_t=\int_0^t\mathbf{v}^{\mathbf{Y}}_sdB_s$, the joint lift $\tmmathbf{Z}^{\mathbf{Y}}$ almost surely defines an
  $\alpha'$-H{\"o}lder rough path for any $\alpha' \in (1 / 3, \alpha)$. Moreover, there exists a unique
  solution to the rough differential equation
  \begin{equation}
    X^{t,x,\mathbf{Y}}_t=x,\quad {dX^{t, x, \tmmathbf{Y}}_s}^{} (\omega) = (g, f) (s, X_s^{t,
    x, \tmmathbf{Y}} (\omega)) d \tmmathbf{Z  }^{\mathbf{Y}}_s (\omega),\quad 0\leq t <s \leq T, \label{RDEStr}
  \end{equation}
  for almost every $\omega$, and $X^{t,x,\mathbf{Y}}$ defines  a time-inhomogeneous Markov process. Under Assumption \ref{ass1}, for the unique solution $X^{t,x}$ to \eqref{SDErep4} it holds that for a.e. $\omega$ we have \begin{equation*}
    \mathrm{Law}\left (\left. X^{t,x} \right |\mathcal{F}^W_T\lor \mathcal{F}^B_t\right)(\omega) = \mathrm{Law}\left (\left. X^{t,x,\mathbf{I}}\right |\mathcal{F}^W_T\lor \mathcal{F}^B_t\right)(\omega) = \mathrm{Law}(\left. X^{t,x,\mathbf{Y}})\right |_{\mathbf{Y}=\mathbf{I}(\omega)}. 
  \end{equation*} %\todo{PB: Could it make sense to align this with the formulation in the introduction?}\color{black} If we additionally assume that $v\geq 0$, then we have indistinguishability $X^{t,x}(\omega) = X^{t,x,\mathbf{I}(\omega)}$ for almost every $\omega \in \Omega$.
\end{theorem}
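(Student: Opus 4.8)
The plan is to establish the three assertions of the theorem one after another. \emph{Step~1 ($\mathbf{Z}^{\mathbf{Y}}$ is an $\alpha'$-H\"older rough path).} Fix first a deterministic $\mathbf{Y}\in\mathscr{C}^{\alpha,1+}([0,T],\mathbb{R})$. The $(1,1)$-block of $\mathbf{Z}^{\mathbf{Y}}$ is the canonical It\^o lift of $M^{\mathbf{Y}}=\int_0^{\cdot}\mathbf{v}^{\mathbf{Y}}_s\,dB_s$, which by Proposition~\ref{Mglift} and Remark~\ref{CorresMG} is almost surely an $\alpha''$-H\"older rough path for every $\alpha''\in(1/3,1/2)$, with Lipschitz bracket $[\mathbf{M}^{\mathbf{Y}}]=[\mathbf{Y}]$; the $(2,2)$-entries $(Y,\mathbb{Y})$ are $\alpha$- and $2\alpha$-H\"older by hypothesis. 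For the entry $\int_s^tY_{s,r}\,dM^{\mathbf{Y}}_r$ — a well-defined It\^o integral, $Y$ being continuous and deterministic — the Burkholder--Davis--Gundy inequality gives, for $p\geq2$, the bound $\mathbb{E}\big[|\int_s^tY_{s,r}\,dM^{\mathbf{Y}}_r|^p\big]\lesssim\mathbb{E}\big[\big(\int_s^t|Y_{s,r}|^2\,d[\mathbf{Y}]_r\big)^{p/2}\big]\lesssim\|Y\|_{\alpha}^p\,\|\mathbf{V}^{\mathbf{Y}}\|_{\infty}^{p/2}\,|t-s|^{p(\alpha+1/2)}$; combined with the algebraic identity $\int_s^tY_{s,r}\,dM^{\mathbf{Y}}_r=\int_s^uY_{s,r}\,dM^{\mathbf{Y}}_r+Y_{s,u}M^{\mathbf{Y}}_{u,t}+\int_u^tY_{u,r}\,dM^{\mathbf{Y}}_r$, the Kolmogorov criterion for two-parameter processes upgrades this to a.s.\ finiteness of the $\beta$-H\"older seminorm for every $\beta<\alpha+1/2$; since $\alpha<1/2$ one may pick $\alpha'\in(1/3,\alpha)$ with $2\alpha'<\alpha+1/2$. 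The entry $\int_s^tM^{\mathbf{Y}}_{s,r}\,dY_r:=M^{\mathbf{Y}}_{s,t}Y_{s,t}-\int_s^tY_{s,r}\,dM^{\mathbf{Y}}_r$ has the same regularity, and Chen's relation \eqref{CHEN} for $\mathbf{Z}^{\mathbf{Y}}$ holds block by block (the diagonal blocks are Chen's relations for the It\^o lift of $M^{\mathbf{Y}}$ and for $\mathbf{Y}$; the off-diagonal ones follow from the displayed additivity and the integration-by-parts convention of Definition~\ref{jointlift}). Hence $\mathbf{Z}^{\mathbf{Y}}\in\mathscr{C}^{\alpha'}([0,T],\mathbb{R}^2)$ a.s. Since $B\perp W$, conditioning on $\mathcal{F}^W_T$ leaves $B$ a Brownian motion while freezing $\mathbf{I}(\omega)$, which lies in $\mathscr{C}^{\alpha,1+}$ a.s.\ by Remark~\ref{CorresMG}; applying the fixed-$\mathbf{Y}$ statement and a Fubini argument shows that $\mathbf{Z}^{\mathbf{I}}$ is a.s.\ an $\alpha'$-H\"older rough path.

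\emph{Step~2 (well-posedness and Markov property).} With $\mathbf{Z}^{\mathbf{Y}}\in\mathscr{C}^{\alpha'}$, $\alpha'\in(1/3,1/2)$, and $(g,f)\in C^3_b\hookrightarrow\mathrm{Lip}^{\gamma}$ for some $\gamma>1/\alpha'$ (the explicit time dependence being absorbed, as usual, by appending the time component to the driving path), Lyons' universal limit theorem \cite{friz2020course} yields, for a.e.\ $\omega$, a unique solution $X^{t,x,\mathbf{Y}}$ to \eqref{RDEStr}, depending continuously, hence measurably, on the driver and therefore on the pair $(\mathbf{Y},B)$. On any subinterval $[t,s]$ the RDE solution is a function of the increments of its driver alone; the increments of $\mathbf{Z}^{\mathbf{Y}}$ over $[t,s]$ are measurable with respect to $(B_r-B_t)_{r\in[t,s]}$, which is independent of $\mathcal{F}^B_t$, and RDE solutions satisfy the flow property $X^{t,x,\mathbf{Y}}_s=X^{u,X^{t,x,\mathbf{Y}}_u,\mathbf{Y}}_s$ for $t\leq u\leq s$; together these give the $(\mathcal{F}^B_t)$-Markov property, which is time-inhomogeneous because $\mathbf{Y}$ and the coefficients depend on time.

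\emph{Step~3 (consistency and identification of the conditional laws).} Because the $(1,1)$-block of the joint lift is the \emph{It\^o} lift of $M^{\mathbf{Y}}$, the RDE \eqref{RDEStr} is consistent with the hybrid equation $dX_s=f(s,X_s)\,d\mathbf{Y}_s+g(s,X_s)\mathbf{v}^{\mathbf{Y}}_s\,dB_s$, where $\int f\,d\mathbf{Y}$ is a rough integral and $\int g\mathbf{v}^{\mathbf{Y}}\,dB$ an It\^o integral: this follows from the consistency between rough integration against an It\^o lift and It\^o integration (see \cite{friz2020course}, or \cite{friz2021rough} for the rough-SDE framework), and can alternatively be obtained via Wong--Zakai, approximating $\mathbf{Y}$ by its geodesic approximations $\mathbf{Y}^{\epsilon}\rightharpoonup\mathbf{Y}$ (Proposition~\ref{weakconvpropo}) and $B$ piecewise linearly, while tracking the single It\^o--Stratonovich correction arising in the $M$-block (the $\int Y\,dM$-block carries none, $Y$ being deterministic). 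Specialising to $\mathbf{Y}=\mathbf{I}(\omega)$, the rough integral $\int f\,d\mathbf{I}$ equals the It\^o integral $\int f\,dI$, while $\mathbf{v}^{\mathbf{I}}_s=|v_s|$ differs from $v_s$ in \eqref{SDErep4} only by a sign; setting $\widehat B_s:=\int_0^s\mathrm{sgn}(v_r)\,dB_r$, which is again a standard Brownian motion with increments over $[t,T]$ independent of $\mathcal{F}^W_T\lor\mathcal{F}^B_t$, we have $g(s,X_s)v_s\,dB_s=g(s,X_s)|v_s|\,d\widehat B_s$, so $X^{t,x}$ solves the same hybrid equation as $X^{t,x,\mathbf{I}}$ but driven by $\widehat B$ in place of $B$; since this equation, started at $t$, sees the driving Brownian motion only through its increments after $t$, and $(B_r-B_t)_{r\geq t}$ and $(\widehat B_r-\widehat B_t)_{r\geq t}$ have the same conditional law, the first claimed equality follows. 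For the second equality, observe that $X^{t,x,\mathbf{Y}}$ depends on $W$ only through $\mathbf{Y}$ (and $\mathbf{v}^{\mathbf{Y}}$, a deterministic functional of $[\mathbf{Y}]$) and, after time $t$, on $B$ only through $(B_r-B_t)_{r\geq t}$; as $\mathbf{I}$ is $\mathcal{F}^W_T$-measurable while $(B_r-B_t)_{r\geq t}$ is independent of $\mathcal{F}^W_T\lor\mathcal{F}^B_t$, the disintegration rule for regular conditional distributions — legitimate thanks to the joint measurability from Step~2 — gives $\mathrm{Law}(X^{t,x,\mathbf{I}}\,|\,\mathcal{F}^W_T\lor\mathcal{F}^B_t)(\omega)=\mathrm{Law}(X^{t,x,\mathbf{Y}})|_{\mathbf{Y}=\mathbf{I}(\omega)}$.

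\emph{Main obstacle.} I expect the genuinely delicate step to be the hybrid It\^o--rough consistency used in Step~3: whichever route is taken, it demands uniform-in-$\epsilon$ rough-path bounds on the smooth joint lifts of $(M^{\mathbf{Y}^{\epsilon}},Y^{\epsilon})$ and a careful accounting of the It\^o--Stratonovich correction in the $M$-block, together with the measurability bookkeeping needed to promote the ensuing pathwise identities to the stated conditional-law identities; the $v$-versus-$|v|$ sign discrepancy is a minor but genuine wrinkle, handled via L\'evy's characterisation as above.
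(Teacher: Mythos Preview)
Your proposal is essentially correct and follows the same architecture as the paper's proof (given in the appendix as Theorem~\ref{consistencycoroA}, supported by Lemmas~\ref{roughIto}--\ref{MarkovProperty}). Steps~1 and~2 match the paper almost verbatim: BDG plus Kolmogorov for the joint lift, then Lyons' well-posedness together with the observation that $\mathbf{Z}^{\mathbf{Y}}|_{[t,s]}$ is $\mathcal{F}^B_{t,s}$-measurable for the flow/Markov property.

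In Step~3 there are two points worth flagging. First, the rough--It\^o consistency you invoke is precisely the content of the paper's Lemma~\ref{consistenyALemma}: for $\mathbf{Y}=\mathbf{I}$ the cross-term $\int M^{\mathbf{I}}_{s,r}\,dI_r$, defined by integration by parts, coincides with the genuine It\^o integral \emph{because} $[M^{\mathbf{I}},I]=0$, so that $\mathbf{Z}^{\mathbf{I}}$ is exactly the It\^o lift of the two-dimensional local martingale $(M^{\mathbf{I}},I)$; the paper then shows directly that the second-order sums $\sum A'_u\mathbb{Z}_{u,v}\to 0$ in probability using the martingale-increment structure of all four blocks. Your reference to \cite{friz2020course,friz2021rough} is legitimate, but the paper spells this out rather than citing it, and you are right that it is the genuinely delicate step.

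Second, for the $v$ versus $|v|=\mathbf{v}^{\mathbf{I}}$ discrepancy you use L\'evy's characterisation via $\widehat{B}=\int\mathrm{sgn}(v)\,dB$, whereas the paper argues more directly that the conditional laws of $\int_t^{\cdot}v_s\,dB_s$ and $\int_t^{\cdot}|v_s|\,dB_s$ given $\mathcal{F}^W_T\lor\mathcal{F}^B_t$ are both centred Gaussian with the same covariance $\int_t^{\cdot}V_s\,ds$. Both routes work in one dimension; note, however, that your sentence ``increments over $[t,T]$ independent of $\mathcal{F}^W_T\lor\mathcal{F}^B_t$'' is not literally correct (the increments of $\widehat{B}$ depend on $\mathrm{sgn}(v)$, hence on $\mathcal{F}^W_T$). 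What is true, and what you use in the next sentence, is that \emph{conditionally on $\mathcal{F}^W_T$} the processes $(\widehat{B}_r-\widehat{B}_t)_{r\geq t}$ and $(B_r-B_t)_{r\geq t}$ have the same law and are independent of $\mathcal{F}^B_t$. The paper's Gaussian-covariance argument avoids this wrinkle and also generalises cleanly to the multivariate setting of Section~\ref{multivariatesection}, where a sign-flip is no longer available.
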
 
\begin{remark}
Our treatment of ``mixed'' rough stochastic differential equations via
what is known as ``joint rough path lift'' method, is neither the only nor the
optimal way to solve such equations, as was pointed out in \cite{friz2021rough} where a more
powerful approach via stochastic sewing was developed. Our preference for the
present methods is, in part due to its less technical demands, but also due
to the fact that in typical (constant correlation!) situations $f, g$ are in
fact proportional to each other, so that, in this situation, there is no
benefit in reducing the regularity assumption of $g$ (vs. $f$), as is made
possible by \cite{friz2021rough}. That said, the setting of
\cite{friz2021rough} would be more natural for analyzing numerical
methods for RSDEs, such as mixed Euler (w.r.t. Brownian motion $B$) and Davie-Milstein (w.r.t to the rough path $\mathbf{Y}$) schemes.

%further numerical analysis of such equations
%should certainly be developed in the intrinsic setting of 
\end{remark}

\subsection{Feynman-Kac representation and rough stochastic differential equations}\label{FeynmanKacSection}
The goal of this section is to exploit the Markovian nature of the unique solution $X^{t,x,\mathbf{Y}}$ to the RDE \eqref{RDEStr}, by relating it to \textit{rough partial differential equations} (RPDEs). As we will explore more detailed in the next section, thinking of $\mathbf{Y}=\mathbf{I}(\omega)$ for some fixed $\omega$, the $(\mathcal{F}^B_t)$-Markov process $X^{t,x,\mathbf{Y}}$ can be interpreted as the LSV price dynamics \eqref{SDErep4} conditioned on the Brownian motion $W$. Thus, it is desirable to characterize expected values of the form $E[\phi(X_T^{t,x,\mathbf{Y}})]$ for European payoff functions $\phi$, which for simplicity we assume to be continuous and bounded. It turns out that $(t,x) \mapsto E[\phi(X^{t,x,\mathbf{Y}}_T)]$ can be obtained from solutions to certain RPDEs, through a Feynman-Kac type of formula. \\ \\More specifically, let $\phi \in C_b^0(\mathbb{R},\mathbb{R})$ and fix $\mathbf{Y}=(Y,\mathbb{Y})\in \mathscr{C}^{\alpha,1+}([0,T],\mathbb{R})$, and recall the unique corresponding pair $(\mathbf{Y}^g,[\mathbf{Y}])$ as described in Lemma \ref{bijectionlemma}. We are interested in the linear second order RPDE
\begin{equation}
\begin{cases}
    - d_t u^{\mathbf{Y}} & =  {L_{t}}  [u^{\mathbf{Y}}] \mathbf{V}^{\mathbf{Y}}_tdt + \Gamma_t [u^{\mathbf{Y}}] d\mathbf{Y}_t^g, \quad (t,x) \in [0,T[ \times \mathbb{R}\\ u^{\mathbf{Y}} (T, x) & =  \phi (x), \quad x\in \mathbb{R}, 
    \end{cases} \label{RPDECl}
\end{equation} where
\begin{eqnarray*}
  {L } _t[u^{\mathbf{Y}}] (x) & \assign & \frac{1}{2} g^2 (t, x)
  \partial^2_{xx} u^{\mathbf{Y}} (t, x) + f_0 (t, x) \partial_x u (t, x)
  \\\
  \Gamma_t [u^{\mathbf{Y}}] (x) & \assign & f (t, x) \partial_x u^{\mathbf{Y}} (t, x),
\end{eqnarray*} and $f_0(t,x) := -\frac{1}{2}f(t,x)\partial_xf(t,x)$. Let us quickly describe how we define solutions to \eqref{RPDECl}. By Proposition \ref{weakconvpropo}, we can find $Y^{\epsilon} \in \mathrm{Lip}([0,T],\mathbb{R})$, such that its (canonical) rough path lift $\mathbf{Y}^{g,\epsilon} \rightharpoonup \mathbf{Y}^g$ in $\mathscr{C}^{\alpha}$ as $\epsilon \to 0$, see also Remark \ref{weakconvdef}. Replacing $d\mathbf{Y}^g_t$ with $dY^{\epsilon}_t= \dot{Y}^{\epsilon}_tdt$ in the RPDE \eqref{RPDECl}, we obtain a classical (backward) PDE
of the form \begin{equation}
\begin{cases}
    - \partial_t u &= L_t [u]\mathbf{V}^{\mathbf{Y}}_t + \Gamma_t[u] \dot{Y}^{\epsilon}_t, \\ u (T, x) & = \phi (x).
    \end{cases}\label{classicalPDE}
\end{equation}
From the Feynman-Kac theorem, we know that any bounded solution $u^{\epsilon}\in C^{1,2}([0,T]\times \mathbb{R})$ to this PDE has the unique representation
\begin{equation}
  u^{\epsilon} (t, x) = E [\phi (X^{t, x,\epsilon}_T)] \label{FeynmanKac},
\end{equation}
where $X^{t, x,\epsilon}$ is the unique strong solution to the SDE \begin{equation*}
 X^{t,x,\epsilon}_t=x, \quad dX_s^{t, x,\epsilon} = g(s, X_s^{t, x,\epsilon}) \mathbf{v}_s^{\mathbf{Y}} dB_s + \left (f_0 (s, X_s^{t, x,\epsilon}) \mathbf{V}^{\mathbf{Y}}_s + f (s, X_s^{t, x,\epsilon})
   \dot{Y}^{\epsilon}_s\right ) ds.
\end{equation*} \begin{remark}
    It is worth to recall that there are situations where we cannot expect solutions to \eqref{classicalPDE} to be in $C^{1,2}$, and one needs a notion for \textit{weak solutions}. However, if $u^{\epsilon}$ defined in \eqref{FeynmanKac} is bounded and continuous on $[0,T]\times \mathbb{R}$, then it still represents a weak-solution to the PDE \eqref{classicalPDE}, and it is in fact the unique \textit{viscosity solution}.We refer to \cite{crandall1992user} for an exposition of viscosity solutions, and details of the stochastic representation of viscosity solutions can be found in \cite{fleming2006controlled}.
\end{remark} For the purpose of this paper, it is enough to know that the Feynman-Kac representation in \eqref{FeynmanKac} is the suitable notion for a general solution to the PDE \eqref{classicalPDE}. In the following theorem we prove that there exists a unique limit of $u^{\epsilon}$ as $\epsilon \to 0$, which we define to be the unique solution to the RPDE \eqref{RPDECl}.

\begin{theorem}
  \label{RPDEmain}Let $\alpha \in (1 / 3, 1 / 2]$ and $\mathbf{Y}\in \mathscr{C}^{\alpha,1+}([0,T],\mathbb{R})$. Consider a sequence $Y^{\epsilon}\in \mathrm{Lip}([0,T],\mathbb{R})$, with (canonical) rough path lift $\mathbf{Y}^{g,\epsilon}$, such that $\tmmathbf{Y}^{g, \epsilon} \rightharpoonup \tmmathbf{Y}^g$ in $\mathscr{C}^{\alpha}$. Moreover, let $f,g\in C^3_b([0,T]\times \mathbb{R})$ and $\phi \in C_b^0(\mathbb{R},\mathbb{R})$. Let $u^{\epsilon}$ be the unique bounded Feynman-Kac (viscosity) solution
  \eqref{FeynmanKac} with respect to $ Y^{\epsilon}$. Then there exists a function $u^{\mathbf{Y}} = u^{\mathbf{Y}} (t, x)
  \in C_b^0 ([0,T]\times \mathbb{R}, \mathbb{R})$, only depending on $\mathbf{Y}$ but not on its approximation, such that $u^{\epsilon} \to u^{\mathbf{Y}}$ pointwise, with Feynman-Kac representation \begin{equation*}
      u^{\mathbf{Y}}(t,x) = E[\phi(X_T^{t,x,\mathbf{Y}})].
  \end{equation*}Moreover, the solution map \begin{align*}
      S: C^0_b (\mathbb{R}, \mathbb{R})
  \times \mathscr{C}^{\alpha,1+}([0,T],\mathbb{R}) & \longrightarrow C^0_b ([0,T]\times \mathbb{R}, \mathbb{R}) \\ (\phi,\mathbf{Y}) & \longmapsto u^{\mathbf{Y}}
  \end{align*}
  is continuous.
\end{theorem}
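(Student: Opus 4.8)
The plan is to prove Theorem \ref{RPDEmain} by a \emph{rough stability} argument: we want to show that the classical Feynman--Kac solutions $u^\epsilon$ along the approximating sequence $Y^\epsilon$ form a Cauchy family (in fact converge) in $C^0_b$, that the limit depends only on $\mathbf{Y}^g$ and $[\mathbf{Y}]$ (equivalently only on $\mathbf{Y}\in\mathscr{C}^{\alpha,1+}$) and not on the chosen approximation, and finally that the map $(\phi,\mathbf{Y})\mapsto u^{\mathbf{Y}}$ is continuous. The natural route is \emph{probabilistic}: at the level of the SDEs $X^{t,x,\epsilon}$ we have the explicit drift $f_0(s,\cdot)\mathbf{V}^{\mathbf{Y}}_s + f(s,\cdot)\dot Y^\epsilon_s$, and the term $f(s,\cdot)\dot Y^\epsilon_s\,ds$ is exactly a smooth rough driver. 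So I would first argue that $X^{t,x,\epsilon}$ converges, in a suitable (say $L^2$, locally uniformly in time) sense, to the RDE solution $X^{t,x,\mathbf{Y}}$ of \eqref{RDEStr}, and then push this convergence through the bounded continuous functional $\phi$ and the expectation.

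The key steps, in order. \textbf{Step 1 (solution as RDE/RSDE limit).} Recall from Remark \ref{CorresMG} that $M^\epsilon := \int \mathbf{v}^{\mathbf{Y}}\,dB$ has Itô lift $\mathbf{M}\in\mathscr{C}^{\alpha,1+}$ and that the canonical lift of $Y^\epsilon$ is $\mathbf{Y}^{\epsilon}$ whose geometrification is $\mathbf{Y}^{g,\epsilon}$; form the joint lift $\mathbf{Z}^{\mathbf{Y},\epsilon}$ of $(M^\epsilon(\omega),Y^\epsilon)$ exactly as in Definition \ref{jointlift}. Since $\mathbf{Y}^{g,\epsilon}\rightharpoonup\mathbf{Y}^g$ in $\mathscr{C}^{\alpha}$ and the $[\mathbf{Y}^\epsilon]$ part is controlled (the cross-bracket of $Y^\epsilon$ vanishes and $[\mathbf{Y}^\epsilon]=Y^\epsilon\otimes Y^\epsilon-2\,\mathrm{Sym}\,\mathbb{Y}^{g,\epsilon}$ converges to $[\mathbf{Y}]$), one checks that $\mathbf{Z}^{\mathbf{Y},\epsilon}\to\mathbf{Z}^{\mathbf{Y}}$ in $\mathscr{C}^{\alpha'}$, in probability, with uniform-in-$\epsilon$ bounds on $\|M^\epsilon\|_\alpha+\|\mathbb{Z}^{\mathbf{Y},\epsilon}\|_{2\alpha}$ (using the BDG-type estimates behind Proposition \ref{Mglift}, which give $L^p$ bounds on the rough path norm of the Itô lift, plus \eqref{uniform estimates}). \textbf{Step 2 (continuity of the RDE solution map).} By the universal limit theorem / local Lipschitz continuity of the Itô--Lyons map for RDEs with $C_b^3$ vector fields $(g,f)$, the solutions of \eqref{RDEStr} driven by $\mathbf{Z}^{\mathbf{Y},\epsilon}$ converge to the one driven by $\mathbf{Z}^{\mathbf{Y}}$, pathwise for a.e.\ $\omega$ along a subsequence and hence, together with uniform moment bounds, in $L^2(\Omega;C([t,T]))$; here one must reconcile the RDE solution driven by the \emph{geometric} lift $\mathbf{Z}^{\mathbf{Y},\epsilon}$ with the \emph{classical} SDE $X^{t,x,\epsilon}$ via the Itô--Stratonovich conversion encoded in $\mathbb{Y}^g=\mathbb{Y}+\tfrac12\delta[\mathbf{Y}]$, $\mathbb{M}^g=\mathbb{M}^{\mathrm{Strat}}$ — this is precisely why the RPDE \eqref{RPDECl} contains the correction $f_0=-\tfrac12 f\partial_x f$ and the bracket term $L_t[u]\mathbf{V}^{\mathbf{Y}}_t$, and should already have been carried out for Theorem \ref{consistencycoro}, so I would simply invoke it. \textbf{Step 3 (pass to prices).} From Step 2, $X^{t,x,\epsilon}_T\to X^{t,x,\mathbf{Y}}_T$ in probability; since $\phi\in C^0_b$, dominated convergence gives $u^\epsilon(t,x)=E[\phi(X^{t,x,\epsilon}_T)]\to E[\phi(X^{t,x,\mathbf{Y}}_T)]=:u^{\mathbf{Y}}(t,x)$ pointwise. \textbf{Step 4 (independence of approximation and boundedness/continuity).} Boundedness $\|u^{\mathbf{Y}}\|_\infty\le\|\phi\|_\infty$ is immediate; continuity in $(t,x)$ follows from continuity of $(t,x)\mapsto X^{t,x,\mathbf{Y}}$ (flow property of the RDE, a consequence of the $C^3_b$ assumption) together with dominated convergence. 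That the limit does not depend on the particular Lipschitz approximation is automatic once the limit is identified with $E[\phi(X^{t,x,\mathbf{Y}}_T)]$, which only sees $\mathbf{Y}$. \textbf{Step 5 (continuity of $S$).} Given $(\phi_n,\mathbf{Y}_n)\to(\phi,\mathbf{Y})$ in $C^0_b\times\mathscr{C}^{\alpha,1+}$, estimate $|u^{\mathbf{Y}_n}(t,x)-u^{\mathbf{Y}}(t,x)|\le E|\phi_n(X^{t,x,\mathbf{Y}_n}_T)-\phi(X^{t,x,\mathbf{Y}_n}_T)|+E|\phi(X^{t,x,\mathbf{Y}_n}_T)-\phi(X^{t,x,\mathbf{Y}}_T)|$; the first term is $\le\|\phi_n-\phi\|_\infty\to0$, and for the second one uses that $\mathbf{Y}_n\to\mathbf{Y}$ in $\mathscr{C}^{\alpha,1+}$ forces $\mathbf{v}^{\mathbf{Y}_n}\to\mathbf{v}^{\mathbf{Y}}$ (square-root of the Lipschitz bracket densities) and $\mathbf{Z}^{\mathbf{Y}_n}\to\mathbf{Z}^{\mathbf{Y}}$ in $\mathscr{C}^{\alpha'}$ in $L^p$, hence $X^{t,x,\mathbf{Y}_n}_T\to X^{t,x,\mathbf{Y}}_T$ in probability, and dominated convergence closes the argument; a diagonal/uniform-bound argument upgrades pointwise to the required mode of continuity and, if one wants uniformity in $(t,x)$ on compacts, one invokes equicontinuity from the uniform flow estimates.

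The main obstacle I anticipate is \textbf{Step 2}, specifically the careful handling of the joint Itô lift of $(M^\epsilon,Y^\epsilon)$ and the convergence $\mathbf{Z}^{\mathbf{Y},\epsilon}\to\mathbf{Z}^{\mathbf{Y}}$ in a rough-path metric with the right uniformity in $\epsilon$: the integrand $M^\epsilon$ depends on $\mathbf{v}^{\mathbf{Y}}$ which is fixed, but the cross-integrals $\int Y^\epsilon_{s,\cdot}\,dM^\epsilon$ and $\int M^\epsilon_{s,\cdot}\,dY^\epsilon$ mix a genuinely stochastic (Itô) piece with the converging-but-only-weakly Lipschitz piece $Y^\epsilon$, so one needs that weak convergence $\mathbf{Y}^{g,\epsilon}\rightharpoonup\mathbf{Y}^g$ plus uniform $\alpha$-Hölder bounds indeed transfers to convergence of these cross terms — this is where the interpolation remark (Remark \ref{weakconvdef}) and a stochastic-sewing or Young/rough estimate for $\int Y^\epsilon\,dM^\epsilon$ is needed, with moment bounds uniform in $\epsilon$ coming from the uniform bound \eqref{uniform estimates}. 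Once that convergence of drivers is in hand with good moments, everything else is a relatively standard application of the continuity of the Itô--Lyons map and dominated convergence; the Itô--Stratonovich bookkeeping that produces $f_0$ and $\mathbf{V}^{\mathbf{Y}}$ is, I expect, already packaged in the proof of Theorem \ref{consistencycoro} and can be cited. I would relegate the detailed rough-path estimates for Steps 1--2 to the appendix (as the paper indeed announces via Appendix \ref{RSDEAPPendix}) and keep the main-text proof at the level of ``identify the limit as $E[\phi(X^{t,x,\mathbf{Y}}_T)]$, then DCT''.
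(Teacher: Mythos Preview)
Your proposal is correct and follows essentially the same route as the paper: identify the classical SDE solution $X^{t,x,\epsilon}$ with the RDE solution $X^{t,x,\mathbf{Y}^\epsilon}$ driven by an approximating joint lift, invoke continuity of the It\^o--Lyons map (packaged in the paper as Lemma \ref{MarkovProperty}, with the It\^o--Stratonovich bookkeeping in Remark \ref{remarkhybridinter}) to get $X^{t,x,\mathbf{Y}^\epsilon}\to X^{t,x,\mathbf{Y}}$ ucp, and then pass through $\phi$ and the expectation by dominated convergence; the continuity of $S$ is handled by the same stability argument. One wording slip worth correcting: the bracket of the canonical lift $\mathbf{Y}^{g,\epsilon}$ is identically zero (it is geometric), not converging to $[\mathbf{Y}]$ --- the paper avoids this by \emph{defining} $\mathbf{Y}^\epsilon:=(Y^\epsilon,\mathbb{Y}^{g,\epsilon}-\tfrac12\delta[\mathbf{Y}])\in\mathscr{C}^{\alpha,1+}$ so that $[\mathbf{Y}^\epsilon]\equiv[\mathbf{Y}]$ and $\mathbf{v}^{\mathbf{Y}^\epsilon}=\mathbf{v}^{\mathbf{Y}}$ are fixed throughout, which is exactly what you do in practice when you keep $M^\epsilon=\int\mathbf{v}^{\mathbf{Y}}\,dB$ independent of $\epsilon$.
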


\begin{comment}
    \begin{remark}
  We will always make use of the correspondence discussed in Section
  \ref{Lipbracketssection}
  \begin{equation}
    \mathscr{C}^{\alpha}_g ([0, T], \mathbb{R}) {\times \mathrm{Lip}}  ([0, T],
    \mathbb{R}) \ni (\tmmathbf{Y}^g, L) \longleftrightarrow \tmmathbf{Y} \in
    \mathscr{C}^{\alpha}_L .
  \end{equation}
  Therefore, a solution to the RPDE only depends on the corresponding rough
  path $\tmmathbf{Y}$, and we will often write $u = u (t, x, \tmmathbf{Y})$.
\end{remark}
\end{comment}

\begin{proof}
First denote by $X^{t,x,\mathbf{Y}}$ the unique solution to the RDE \eqref{RDEStr}, see Theorem \ref{consistencycoro}. It is discussed in Appendix \ref{RSDEAPPendix} Remark \ref{remarkhybridinter}, that we can equivalently write the RSDE \eqref{RSDErep4} as
  \begin{equation*}
  X_t^{t, x, \tmmathbf{Y}}  =  x, \quad dX^{t, x, \tmmathbf{Y}}_s  =  g(s, X^{t, x, \tmmathbf{Y}}_s)
  \mathbf{v}^{\mathbf{Y}}_s dB_s + f_0 (s, X_s^{t, x, \tmmathbf{Y}}) \mathbf{V}^{\mathbf{Y}}_sds
  + f \left( {s, X_s^{t, x, \tmmathbf{Y}}}  \right)d\mathbf{Y}^g_s
\end{equation*} where $f_0(t,x):=-\frac{1}{2}f(t,x)\partial_xf(t,x)$. Denote by $\mathbf{Y}^{\epsilon}$ the rough path in $\mathscr{C}^{\alpha,1+}$ such that $\mathbf{Y}^{\epsilon} =(Y^{\epsilon},\mathbb{Y}^{g,\epsilon}-\frac{1}{2}\delta[\mathbf{Y}]) = (Y^{\epsilon},\mathbb{Y}^{g,\epsilon}-\frac{1}{2}\delta\int \mathbf{V}_t^{\mathbf{Y}}dt)$, see Lemma \ref{LipRPBijectionLemma}. From the same lemma we know that $\Vert \mathbf{Y}^{\epsilon}-\mathbf{Y}\Vert_{\alpha',1+} \longrightarrow 0$ as $\epsilon \to 0$. From \cite[Chapter 3 Theorem 7]{protter2005stochastic} we know that the following (classical) SDE \begin{align*}
      X_t^{t, x, \epsilon} =  x, \quad dX_s^{t, x, \epsilon} = g(s, X^{t, x,
    \epsilon}_s) \mathbf{v}_s^{\mathbf{Y}}dB_s + \left( f_0 (s,
    X_s^{t, x, \epsilon}) \mathbf{V}^{\mathbf{Y}}_s + f \left( {s, X_s^{t, x,
    \epsilon}}  \right) \dot{Y}_s^{\epsilon} \right) ds ,
  \end{align*} has a unique strong solution. Applying Lemma \ref{MarkovProperty}, we have $X^{t,x,\epsilon} = X^{t,x,\mathbf{Y}^{\epsilon}}$ almost surely, where $X^{t,x,\mathbf{Y}^{\epsilon}}$ is the unique solution to the RDE \eqref{RDEStr}, where we replace $\mathbf{Y}$ by $\mathbf{Y}^{\epsilon}$, and we have
  \begin{equation}
    X ^{t, x, \tmmathbf{Y}^{\epsilon}} \longrightarrow X^{t, x, \tmmathbf{Y}}
    %\tmop{ucp} 
    \tmop{as} \epsilon \downarrow 0, \label{ucp}
    \end{equation}
   uniformly on compacts in probability.
  Since $\phi$ is bounded, the random variables $\phi (X_T^{t, x,
  \tmmathbf{Y}^{\epsilon}})$ and $\phi (X_T^{t, x, \tmmathbf{Y}})$ are
  integrable (uniformly in $\epsilon$), and we can define
  \[ u^{\mathbf{Y}} = u^{\mathbf{Y}} (t, x) \assign E [\phi (X_T^{t, x, \tmmathbf{Y}})].\] Combining this with \eqref{ucp}, it
  readily follows that $u^{\epsilon} \longrightarrow u^{\mathbf{Y}}$ pointwise. \\
  Finally, consider the solution map $S(\phi, \mathbf{Y}) =
  u^{\mathbf{Y}}$. Then for any sequence $(\phi^n, \tmmathbf{Y}^{ n})$ converging to
  $(\phi, \tmmathbf{Y})$, we want to show that \begin{equation*}
      u^n = E [\phi^n
  (X_T^{\cdot, \cdot, \tmmathbf{Y}^n}) ] \longrightarrow u^{\mathbf{Y}} = E[\phi 
  (X_T^{\cdot, \cdot, \tmmathbf{Y}})]
  \end{equation*} with respect to the supremum norm on $C^0_b
  ([0,T]\times \mathbb{R}, \mathbb{R})$. But this follows from exactly the same
  arguments as above, namely by constructing the RDE \eqref{RDEStr} for both
  $\mathbf{Y}^n$ and $\mathbf{Y}$, and using the same
  stability arguments.
\end{proof} \begin{remark}
    The pointwise convergence $u^{\epsilon}\rightarrow u^{\mathbf{Y}}$ in the last theorem may be improved to locally uniform convergence. This could be achieved by applying an Arzelà-Ascoli argument, together with local Lipschitz estimates of $(t,x,\mathbf{Y}) \mapsto X^{t,x,\mathbf{Y}}$. The latter, however, needs certain uniform integrability estimates for the joint-lift $\mathbf{Z}^{\mathbf{Y}}$, appearing from the Lipschitz constants in RDE stability results. For the joint-lift with Brownian motion $(B,Y)$, this was for instance done in \cite[Theorem 8]{diehl2015levy}, see also \cite[Theorem 9]{friz2020course}.
\end{remark}
We end this section with a brief discussion about the comparison of RPDEs and SPDEs for the interested reader.
\begin{remark} 
 Formally, replacing $d \mathbf{Y}^g$  in \eqref{RPDECl} with $\circ d I$, the Stratonovich differential of our local martingale $I$, leads to (terminal value) SPDE of the form
 \begin{equation}\label{RPDEintro3}
%\begin{cases}
    - d_t u^{\mathbf{}} =  {L_{t}}  [u^{\mathbf{}}] v_t^2 dt + \Gamma_t [u^{\mathbf{}}] \circ d I, \qquad u^{\mathbf{}} (T, x)  =  \phi (x) .
%    \end{cases} 
\end{equation}
Remarkably, the SPDE \eqref{RPDEintro3} is, to the best of our knowledge, beyond existing SPDEs theory -- despite the fact that the Markovian case, say $v_t \equiv 1$ (otherwise absorb $v=v(t,x)$ into the
coefficients of $L_t$) is very well-understood. Indeed, in this case \eqref{RPDEintro3} simplifies to%
\footnote{Strictly speaking, one should write $\circ d \overleftarrow{W}$ in \eqref{RPDEintro4} to emphasize the use of {\em backward} Stratonovich interpretation.} %emplowe}
\begin{equation}\label{RPDEintro4}
       - d_t u^{\mathbf{}} =  {L_{t}}  [u^{\mathbf{}}] dt + \Gamma_t [u^{\mathbf{}}] \circ d W, \qquad u^{\mathbf{}} (T, x) =  \phi (x).
\end{equation}
Linear SPDEs of similar type (``Zakai equation'') have been studied for decades in non-linear filtering theory. By using {\em backward} (Stratonovich) integration against $W$, one can give honest meaning to \eqref{RPDEintro4}, with {\em backward} adapted solution, meaning that $u=u(t,x; \omega)$ is measurable w.r.t. $\mathcal{F}^W_{t,T} = \sigma ( W_v - W_u : t \le u \le v \le T)$. The issue with \eqref{RPDEintro3} is, complications from {\em forward} adapted $(v_t)$ aside, that there is no general backward integration against local martingales\footnote{... owing to the fact that the class of semimartingales is not stable under time-reversal ...}, rendering previous approaches useless. Thus, to the best of our knowledge, the rough paths approach of \eqref{RPDECl} is the only way to make sense of this SPDE  \eqref{RPDEintro3}. 

%equations of this  type (Zakai equation).  long studied in filtering theory \cite{bain2007filtering}.
% $I \equiv W$ and SPDEs of this type are familiar from filtering theory. To wit,
%one can (in case
 %In particular, (Stratonovich) integration against Brownian motion 
%can be used to give honest meaning to \eqref{RPDEintro3} with backward adapted solution, meaning that $u=u(t,x; \omega)$ is measurable w.r.t. $\mathcal{F}^W_{t,T} = \sigma ( W_v - W_u : t \le u \le v \le T)$. 

%With (forward) adapted $(v_t)$ and lack of backward integration against general local martingales,\footnote{In general, the time-reversal of a semimartingale does not need to be a semimartingale.} to the best of our knowledge \eqref{RPDEintro3} does not fit in exisiting SPDE theory. Given meaning to \eqref{RPDEintro3} - via the RPDE  \eqref{RPDEintro1} - is another contribution of this work. 

%\item 
%Our rough path approach thus seems crucial to give rigorous meaning to \eqref{RPDEintro3}. 
 %}
 % with $d \mathbf{Y}^g$ replaced by 
% 
 %In contrast to \eqref{SDEY}, having a geometric rough driver $\mathbf{Y}^g$ in \eqref{RPDEintro1} is crucial for the well-posedness of such RPDEs. A non-geometric (It\^o-formulation) would lead to ``stochastic parabolicity'' phenomena well-known in SPDE theory, a difficulty we entirely bypass.
As is common in mathematical finance, the SDE dynamics of \eqref{SDErep4} is naturally given in It\^o, not in Stratonovich, form. Yet, all SPDEs (resp. RPDEs) have been written with noise in Stratonovich (resp. geometric rough path) type. At least when $v \equiv 1$, the reason is easy to appreciate. If one rewrites \eqref{RPDEintro4} in backward It\^o form,
 $$
          - d_t u^{\mathbf{}} =  {\tilde{L}_{t}}  [u^{\mathbf{}}] dt + \Gamma_t [u^{\mathbf{}}] d \overleftarrow{W}, \qquad u^{\mathbf{}} (T, x) =  \phi (x) ,
 $$ 
 its well-posedness would require for $\tilde{L}$ to satisfy a so-called {\em stochastic parabolicity} condition, see for instance \cite{krylov2007stochastic} for the latter. Having Stratonovich, resp. geometric rough path, noise in \eqref{RPDEintro4}, resp. \eqref{RPDECl}, bypasses such complications.

\end{remark}

\subsection{European pricing with RPDEs}\label{pricingRPDE}
In this section we use the Feynman-Kac representation obtained in Theorem \ref{RPDEmain}, to price European options in local stochastic volatility models. More precisely, for some $(t,x) \in [0,T]\times \mathbb{R}$, let $X^{t,x}$ be the unique strong solution to the SDE \eqref{SDErep4}, and consider a bounded and continuous payoff function $\phi \in C_b(\mathbb{R},\mathbb{R})$. Our goal is to compute the prices \begin{equation*}
    (t,x) \mapsto E[\phi(X_T^{t,x})].
\end{equation*} Since the $(\mathcal{F}_t^W)$-adapted volatility process $v$ is possibly non-Markovian, we consider the conditional LSV-dynamics, conditioned on the Brownian motion $W$, to disentagle $v$ from the dynamics $X$. In Section \ref{conditioningsection} we discussed how to rigorously do so, and we derived an $(\mathcal{F}_t^B)$-Markov property for the conditional dynamics, see Theorem \ref{consistencycoro}. To make use of the Markovian nature in $B$, that is to apply the Feynman-Kac result Theorem \ref{RPDEmain}, we study \textit{conditional prices}
\begin{equation*}
  u (t, x, \omega) = E [\phi (X_T^{t, x}) | \mathcal{F}_T^W \nobracket](\omega)
  \label{randomfield} .
\end{equation*} The following theorem is the main result of this section, and it shows that $u$ is the pathwise solution to the RPDE \eqref{RPDECl}.
\begin{theorem}
  \label{RPDEpricethm}Let $g, f \in C^3_b ([0,T]\times \mathbb{R}, \mathbb{R})$,
  $\phi \in C_b^0(\mathbb{R},\mathbb{R}_+)$ and let Assumption \ref{ass1} hold. Moreover, for any $\mathbf{Y}=(Y,\mathbb{Y})\in \mathcal{C}^{\alpha,1+}([0,T],\mathbb{R})$, we denote by $u^{\mathbf{Y}} = u^{\mathbf{Y}} (t, x)$
  the solution to the RPDE \eqref{RPDECl}, given by Theorem \ref{RPDEmain}. Then we have $u (t, x, \omega) = u^{\mathbf{Y}}
  (t, x) |_{\tmmathbf{Y} = \tmmathbf{I} (\omega)} \nobracket$
  almost surely.
\end{theorem}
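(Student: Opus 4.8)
The plan is to obtain the identity by chaining the two preceding main results and inserting one elementary conditioning reduction. Theorem~\ref{consistencycoro} already identifies the regular conditional law of the path $X^{t,x}$ given $\mathcal{F}^W_T\vee\mathcal{F}^B_t$ with the law of the RSDE solution $X^{t,x,\mathbf{Y}}$ frozen at $\mathbf{Y}=\mathbf{I}(\omega)$, and Theorem~\ref{RPDEmain} identifies $E[\phi(X^{t,x,\mathbf{Y}}_T)]$ with the RPDE solution $u^{\mathbf{Y}}(t,x)$. So essentially all that is left is to pass from conditioning on $\mathcal{F}^W_T$, as in the definition of the conditional price $u$, to conditioning on the slightly larger $\mathcal{F}^W_T\vee\mathcal{F}^B_t$ appearing in Theorem~\ref{consistencycoro}, and then to keep track of measurability and null sets.

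First I would observe that enlarging the conditioning $\sigma$-algebra by $\mathcal{F}^B_t$ is immaterial here. Since $X^{t,x}$ solves \eqref{SDErep4} started at the deterministic value $x$ at time $t$ and $v$ is $(\mathcal{F}^W_t)$-adapted, $\phi(X^{t,x}_T)$ is measurable with respect to $\mathcal{F}^W_T\vee\sigma(B_r-B_t:\ t\le r\le T)$, and the increment $\sigma$-algebra $\sigma(B_r-B_t:\ t\le r\le T)$ is independent of $\mathcal{F}^W_T\vee\mathcal{F}^B_t$ because $B$ and $W$ are independent and $B$ has independent increments. Hence
\begin{equation*}
  u(t,x,\omega)=E\big[\phi(X^{t,x}_T)\,\big|\,\mathcal{F}^W_T\big](\omega)
  =E\big[\phi(X^{t,x}_T)\,\big|\,\mathcal{F}^W_T\vee\mathcal{F}^B_t\big](\omega)\qquad\text{for a.e. }\omega.
\end{equation*}

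Next I would feed this into Theorem~\ref{consistencycoro}. The Itô lift $\mathbf{I}=(I,\int\delta I\,dI)$ of $I=\int_0^\cdot v_s\,dW_s$ lies in $\mathscr{C}^{\alpha,1+}([0,T],\mathbb{R})$ for a.e.\ $\omega$ by Proposition~\ref{Mglift} and Remark~\ref{CorresMG}, and $\omega\mapsto\mathbf{I}(\omega)$ is measurable into that space. Applying the regular-conditional-distribution identity of Theorem~\ref{consistencycoro} to the bounded continuous path functional $w\mapsto\phi(w_T)$, together with the previous display, gives
\begin{equation*}
  u(t,x,\omega)=\Big(\textstyle\int\phi(w_T)\,\mathrm{Law}(X^{t,x,\mathbf{Y}})(dw)\Big)\Big|_{\mathbf{Y}=\mathbf{I}(\omega)}
  =E\big[\phi(X^{t,x,\mathbf{Y}}_T)\big]\big|_{\mathbf{Y}=\mathbf{I}(\omega)}\qquad\text{for a.e. }\omega,
\end{equation*}
and by Theorem~\ref{RPDEmain} the right-hand side is precisely $u^{\mathbf{Y}}(t,x)|_{\mathbf{Y}=\mathbf{I}(\omega)}$; moreover continuity of the solution map $S$ from Theorem~\ref{RPDEmain} ensures $\omega\mapsto u^{\mathbf{I}(\omega)}(t,x)$ is a genuine random variable, so the equality makes sense $P$-a.s. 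If one wants the identity of random \emph{fields} (all $(t,x)$ simultaneously), I would further use that $u^{\mathbf{Y}}\in C^0_b$ by Theorem~\ref{RPDEmain}, take this continuous object as the canonical version of the conditional price field, and extend the pointwise-a.s.\ equality along a countable dense set of $(t,x)$.

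The main obstacle is not analytic — it is the bookkeeping with regular conditional distributions: one must check that the exceptional null set on which $\mathbf{I}(\omega)\notin\mathscr{C}^{\alpha,1+}$, or on which the regular conditional distribution of $X^{t,x}$ fails to coincide with the frozen RSDE law, is $P$-negligible, and that the substitution $\mathbf{Y}\mapsto\mathbf{I}(\omega)$ is measurable into the rough-path space. Both points are taken care of by Theorem~\ref{consistencycoro} together with the continuity of $S$ in Theorem~\ref{RPDEmain}, so once those two results are available the proof is short.
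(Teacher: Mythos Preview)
Your proposal is correct and follows essentially the same strategy as the paper: combine the conditional-law identification of Theorem~\ref{consistencycoro} with the Feynman--Kac representation of Theorem~\ref{RPDEmain}. The only cosmetic difference is in how the disintegration step is packaged: you first enlarge the conditioning from $\mathcal{F}^W_T$ to $\mathcal{F}^W_T\vee\mathcal{F}^B_t$ via the independence of $\mathcal{F}^B_t$ from $\sigma(\phi(X^{t,x}_T))\vee\mathcal{F}^W_T$, and then invoke directly the frozen-law equality $\mathrm{Law}(X^{t,x}\mid\mathcal{F}^W_T\vee\mathcal{F}^B_t)(\omega)=\mathrm{Law}(X^{t,x,\mathbf{Y}})|_{\mathbf{Y}=\mathbf{I}(\omega)}$ from Theorem~\ref{consistencycoro}; the paper instead passes through the intermediate process $X^{t,x,\mathbf{I}}$, writes $\phi(X^{t,x,\mathbf{I}}_T)=F(\mathbf{I},B)$ for a Borel map $F$, and performs the disintegration explicitly against the law $\mu_B$ of $B$. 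Both routes exploit exactly the same independence structure and arrive at $u(t,x,\omega)=E[\phi(X^{t,x,\mathbf{Y}}_T)]|_{\mathbf{Y}=\mathbf{I}(\omega)}=u^{\mathbf{I}(\omega)}(t,x)$.
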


\begin{proof}
  From Theorem \ref{consistencycoro} we know that $\mathrm{Law}(X^{t, x,\tmmathbf{I}}|\mathcal{F}^W_T) = \mathrm{Law}( X^{t, x}|\mathcal{F}^W_T)$, where $X^{t, x, \tmmathbf{Y}}$ is the
  unique solution to the RDE \eqref{RDEStr}. Thus we have $u (t, x, \omega) = E [\phi (X_T^{t, x, \tmmathbf{I}}) |
  \mathcal{F}^W_T \nobracket](\omega)$ almost surely, and moreover we know that $X^{t,x,\mathbf{I}}$ is adapted to 
  the filtration generated by the (random) rough path $\mathbf{I}$ and the Brownian motion $B$.
% $\mathcal{F}^{\mathbf{I}} \lor \mathcal{F}^B$}. 
Therefore, there exists a Borel measurable function \begin{equation*}
      F: \mathscr{C}^{\alpha,1+}([0,T],\mathbb{R})\times  C^0([0,T],\mathbb{R}) \longrightarrow \mathbb{R},
  \end{equation*} such that $\phi  (X_T^{t,x,\mathbf{I}}) = F(\mathbf{I},B)$, and we can consider $\mathbf{I}$, resp. $B$, as random variables with values in $\mathscr{C}^{\alpha,1+}([0,T],\mathbb{R})$, resp. $C^0([0,T],\mathbb{R})$. By independence of $B$ and $W$, the law of $B$ on $C^0$, denoted by $\mu_B$, defines a regular conditional distribution\footnote{See \cite[Chapter 8 p.167]{kallenberg1997foundations} for the definition of regular conditional distributions.} for $B$ given $\mathcal{F}^W_T$. Applying the disintegration formula for conditional expectations, see for instance \cite[Theorem 8.5]{kallenberg1997foundations}, we have\begin{align*}
      E \left [F(\mathbf{I},B)| \mathcal{F}_T^W\right](\omega) & = \int_{C^0([0,T],\mathbb{R})}F(\mathbf{I}(\omega),b)\mu_B(db) \\ & = \left. \int_{C^0([0,T],\mathbb{R})}F(\mathbf{Y},b)\mu_B(db)\right |_{\mathbf{Y}= \mathbf{I}(\omega)} \\ & = \left. E[\phi(X_T^{t,x,\mathbf{Y}})]\right |_{\mathbf{Y}= \mathbf{I}(\omega)} = u^{\mathbf{I}(\omega)}(t,x),
  \end{align*} where the last equality follows from Theorem \ref{RPDEmain}.
\end{proof}\\
As a direct corollary of Theorem \ref{RPDEpricethm}, we can now present a pricing formula for European claims, using our pathwise solution to the RPDE \eqref{RPDECl}. Let $t=0$ and $x_0 \in \mathbb{R}$, and denote by $X:=X^{0,x_0}$ the unique solution to the SDE \eqref{SDErep4}, starting from $x_0$ at time $t=0$. \begin{corollary}
  \label{priceformula} Under Assumption \ref{ass1}, for $f,g \in C_b^3([0,T]\times \mathbb{R},\mathbb{R})$ and $\phi \in C^0_b(\mathbb{R},\mathbb{R}_+)$, we have
  \begin{equation*}
    y_0 \assign E [\phi (X_T)] = E [u^{\mathbf{I}} (0, x_0) \nobracket] .
  \end{equation*}

\end{corollary}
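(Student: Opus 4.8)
The plan is to derive Corollary \ref{priceformula} as an immediate consequence of Theorem \ref{RPDEpricethm} together with the tower property of conditional expectations. First I would recall that by Theorem \ref{RPDEpricethm}, applied at $(t,x)=(0,x_0)$, we have the almost sure identity $u(0,x_0,\omega) = E[\phi(X_T)\mid \mathcal{F}_T^W](\omega) = u^{\mathbf{Y}}(0,x_0)\big|_{\mathbf{Y}=\mathbf{I}(\omega)} = u^{\mathbf{I}}(0,x_0)$, where $X = X^{0,x_0}$ is the unique strong solution to \eqref{SDErep4}. Here $\mathbf{I}=(I,\int \delta I\, dI)$ is the (random) It\^o rough path lift of $I_t=\int_0^t v_s\,dW_s$, which by Proposition \ref{Mglift} and Remark \ref{CorresMG} lies almost surely in $\mathscr{C}^{\alpha,1+}([0,T],\mathbb{R})$, so that $u^{\mathbf{I}}(0,x_0)$ is well-defined as an $\mathcal{F}_T^W$-measurable random variable.

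Next I would take expectations on both sides of this identity and apply the tower property: since $\phi$ is bounded, $\phi(X_T)$ is integrable, so
\[
E[\phi(X_T)] = E\big[E[\phi(X_T)\mid \mathcal{F}_T^W]\big] = E[u(0,x_0,\cdot)] = E[u^{\mathbf{I}}(0,x_0)].
\]
The only mild technical point worth noting is measurability and integrability of $u^{\mathbf{I}}(0,x_0)$: by Theorem \ref{RPDEmain} the solution map $S:(\phi,\mathbf{Y})\mapsto u^{\mathbf{Y}}$ is continuous on $C_b^0(\mathbb{R},\mathbb{R})\times\mathscr{C}^{\alpha,1+}([0,T],\mathbb{R})$, hence $\mathbf{Y}\mapsto u^{\mathbf{Y}}(0,x_0)$ is continuous, therefore Borel measurable, and composing with the random rough path $\mathbf{I}$ (which, as discussed around Theorem \ref{consistencycoro} and in Appendix \ref{enhancedAppendix}, is a bona fide random variable with values in $\mathscr{C}^{\alpha',1+}$ for $\alpha'<\alpha$) yields an $\mathcal{F}_T^W$-measurable random variable. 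Its integrability is immediate from the uniform bound $|u^{\mathbf{Y}}(0,x_0)| = |E[\phi(X_T^{0,x_0,\mathbf{Y}})]| \le \|\phi\|_\infty$, which holds for every $\mathbf{Y}$.

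There is essentially no obstacle here: the entire content of the corollary has been packaged into Theorems \ref{RPDEmain} and \ref{RPDEpricethm}, and what remains is a one-line conditioning argument. If anything, the only thing to be careful about is to state the identity $u(0,x_0,\omega)=u^{\mathbf{I}(\omega)}(0,x_0)$ as an almost sure equality (not everywhere), so that the expectation of the right-hand side is the expectation of a genuine $\mathcal{F}_T^W$-measurable random variable; then the tower property closes the argument.
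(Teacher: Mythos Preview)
Your proof is correct and matches the paper's approach: the paper states the result as an immediate corollary of Theorem~\ref{RPDEpricethm} without giving an explicit argument, and your use of the tower property applied to the almost sure identity $E[\phi(X_T)\mid\mathcal{F}_T^W]=u^{\mathbf{I}}(0,x_0)$ is exactly what is intended. Your additional remarks on measurability and integrability of $u^{\mathbf{I}}(0,x_0)$ are careful and correct, though the paper evidently considers them too routine to spell out.
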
 This suggests that to compute the price of the payoff $\phi(X_T)$ at time $t=0$, we can take the expected value of the evaluations at $(0,x_0)$ of the pathwise solutions $u^{\mathbf{I}}$ to the random RPDE \eqref{RPDECl} with $\mathbf{Y}= \mathbf{I}$. \\ \\ Another natural question is whether we can utilize the pathwise solutions $u^{\mathbf{I}}$ to the RPDE to characterize the price at any time $t \in [0,T]$. Specifically, we aim to compute the conditional expectation $E[\phi(X_T)|\mathcal{F}_t]$, where $\mathcal{F}$ represents the full filtration, defined as $\mathcal{F}_t:=\mathcal{F}_t^{B}\lor \mathcal{F}_t^W$. The following lemma establishes the relationship between this conditional expectation and our RPDE solution $u^{\mathbf{I}}$. \begin{proposition}
    \label{CBLemma}
     Under Assumption \ref{ass1}, for $f,g \in C_b^3([0,T]\times\mathbb{R},\mathbb{R})$ and $\phi \in C^0_b(\mathbb{R},\mathbb{R}_+)$ we have \begin{equation*}
        E[\phi(X_T)|\mathcal{F}_t] =  \left. E[u^{\mathbf{I}}(t,x)|\mathcal{F}_t^{W}] \right |_{x=X_t} \quad \text{a.s. for all }t\in [0,T].\label{measurablelem}
    \end{equation*}
\end{proposition}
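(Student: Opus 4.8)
The plan is to reduce the statement to the pathwise identity $u(t,x,\omega)=u^{\mathbf{I}(\omega)}(t,x)$ of Theorem~\ref{RPDEpricethm} by combining the flow property of the SDE~\eqref{SDErep4}, the tower property, and two applications of a substitution (freezing) lemma for conditional expectations, cf.~\cite{kallenberg1997foundations}. First I would record the flow/cocycle property: by uniqueness of strong solutions, \cite[Chapter 3 Theorem 7]{protter2005stochastic}, one has $X_T=X_T^{0,x_0}=X_T^{t,X_t}$ almost surely, where $X_t:=X_t^{0,x_0}$ is $\mathcal{F}_t$-measurable. Since $\mathcal{F}_t=\mathcal{F}_t^B\vee\mathcal{F}_t^W\subseteq\mathcal{G}:=\mathcal{F}_t^B\vee\mathcal{F}_T^W$, the tower property gives
\begin{equation*}
  E[\phi(X_T)\mid\mathcal{F}_t]=E\big[\,E[\phi(X_T^{t,X_t})\mid\mathcal{G}]\,\big|\,\mathcal{F}_t\big].
\end{equation*}

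The core step is to identify the inner conditional expectation as $u^{\mathbf{I}}(t,X_t)$. For \emph{deterministic} $x$, Theorem~\ref{consistencycoro} states $\mathrm{Law}(X^{t,x}\mid\mathcal{F}_T^W\vee\mathcal{F}_t^B)(\omega)=\mathrm{Law}(X^{t,x,\mathbf{Y}})|_{\mathbf{Y}=\mathbf{I}(\omega)}$, so integrating $\phi$ and using the Feynman--Kac representation $u^{\mathbf{Y}}(t,x)=E[\phi(X_T^{t,x,\mathbf{Y}})]$ from Theorem~\ref{RPDEmain} gives $E[\phi(X_T^{t,x})\mid\mathcal{G}](\omega)=u^{\mathbf{I}(\omega)}(t,x)$ a.s. To pass to the $\mathcal{G}$-measurable initial value $X_t$, I would invoke the substitution lemma for conditional expectations applied to $h(x,\omega):=\phi(X_T^{t,x}(\omega))$, which is bounded and jointly measurable, and to the version $g(x,\omega):=u^{\mathbf{I}(\omega)}(t,x)$ of $E[h(x,\cdot)\mid\mathcal{G}]$. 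Here $g$ is jointly measurable because $x\mapsto u^{\mathbf{Y}}(t,x)$ is continuous (Theorem~\ref{RPDEmain}) and, for fixed $x$, $\omega\mapsto u^{\mathbf{I}(\omega)}(t,x)$ is measurable as the composition of the Borel-measurable lift $\omega\mapsto\mathbf{I}(\omega)\in\mathscr{C}^{\alpha,1+}$ with the continuous solution map $S$ of Theorem~\ref{RPDEmain}. This yields $E[\phi(X_T^{t,X_t})\mid\mathcal{G}]=u^{\mathbf{I}}(t,X_t)$ a.s.

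It then remains to evaluate $E[u^{\mathbf{I}}(t,X_t)\mid\mathcal{F}_t]$. Freezing the $\mathcal{F}_t$-measurable argument $X_t$ via the substitution lemma gives $E[u^{\mathbf{I}}(t,X_t)\mid\mathcal{F}_t]=\big(E[u^{\mathbf{I}}(t,x)\mid\mathcal{F}_t]\big)\big|_{x=X_t}$. Now $u^{\mathbf{I}}(t,x)$ is $\sigma(\mathbf{I})\subseteq\mathcal{F}_T^W$-measurable, hence $\sigma\!\big(u^{\mathbf{I}}(t,x)\big)\vee\mathcal{F}_t^W$ is independent of $\mathcal{F}_t^B$ because $W\perp B$; consequently $E[u^{\mathbf{I}}(t,x)\mid\mathcal{F}_t^B\vee\mathcal{F}_t^W]=E[u^{\mathbf{I}}(t,x)\mid\mathcal{F}_t^W]$. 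Chaining the three displays gives
\begin{equation*}
  E[\phi(X_T)\mid\mathcal{F}_t]=\big(E[u^{\mathbf{I}}(t,x)\mid\mathcal{F}_t^W]\big)\big|_{x=X_t}\qquad\text{a.s.},
\end{equation*}
which is the claim.

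I expect the main obstacle to be the careful bookkeeping in the two substitution steps rather than any deep new idea: one must ensure that the relevant functionals are jointly measurable in $(x,\omega)$ — in particular that $(x,\omega)\mapsto u^{\mathbf{I}(\omega)}(t,x)$ admits a jointly measurable version, for which the continuity of the solution map $S$ in Theorem~\ref{RPDEmain} and the Borel measurability of the Itô lift $\omega\mapsto\mathbf{I}(\omega)$ are essential — and that the conditional-law identity of Theorem~\ref{consistencycoro}, stated for deterministic initial data, transports correctly to the random, $\mathcal{F}_t$-measurable initial value $X_t$ through the flow property. The remaining measure-theoretic manipulations (tower property and the $\mathcal{F}_t\rightsquigarrow\mathcal{F}_t^W$ reduction via independence of $B$ and $W$) are routine.
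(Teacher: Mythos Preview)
Your proposal is correct and follows essentially the same route as the paper: flow property $X_T=X_T^{t,X_t}$, tower property with the intermediate $\sigma$-algebra $\mathcal{F}_t^B\vee\mathcal{F}_T^W$, identification of the inner conditional expectation as $u^{\mathbf{I}}(t,x)$ via Theorems~\ref{consistencycoro}/\ref{RPDEpricethm}, a substitution step to pass from deterministic $x$ to $X_t$, and finally the reduction $\mathcal{F}_t\rightsquigarrow\mathcal{F}_t^W$ by independence of $B$ and $W$. Your treatment is in fact slightly more explicit than the paper's about the joint measurability needed for the freezing lemma, which the paper handles more tersely by appealing to continuity of $x\mapsto X^{t,x,\mathbf{I}}$.
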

\begin{proof} First, we can notice that $X_T := X_T^{0,x_0} = X_T^{t,X_t^{0,x_0}}$ almost surely. Indeed, using once again \cite[Chapter 3 Theorem 7]{protter2005stochastic}, we know that $X^{t,x}$ is the unique solution to the integral equation\begin{equation*}
    \begin{aligned}
    X_s^{t, x}  = x + \int_t^sf (u,
  X_u^{t, x}) v_udW_u + \int_t^sg(u, X_u^{t, x}) v_u dB_u , \quad t \leq s \leq T.
    \end{aligned}
\end{equation*} Therefore, the process $\tilde{X}:= X^{t,X_t}$ is the unique solution to the integral equation \begin{align*}
    \tilde{X}_s  & = X_t + \int_t^sf (u,
  \tilde{X}_u) v_udW_u + \int_t^sg(u, \tilde{X}_u) v_u dB_u, \quad t \leq s \leq T.
\end{align*} But on the other hand, for $s\in [t,T]$, we have \begin{align*}
    X_s & = x_0 + \int_0^sf (u,
  X_u) v_udW_u + \int_0^sg(u, X_u) v_u dB_u \\ & = X_t + \int_t^sf (u,
  X_u) v_udW_u + \int_t^sg(u, X_u) v_u dB_u.
\end{align*} By uniqueness, it follows that $X= \tilde{X}$ on $[t,T]$ and in particular $X_T=X_T^{t,X_t}$. \\ Next, we recall from Theorem \ref{consistencycoro} that \begin{equation*}
\mathrm{Law}\left (\left. X^{t,x} \right |\mathcal{F}^W_T\lor \mathcal{F}^B_t\right) = \mathrm{Law}\left (\left. X^{t,x,\mathbf{I}}\right |\mathcal{F}^W_T\lor \mathcal{F}^B_t\right).
\end{equation*} By continuity of the RDE solution map $x \mapsto X^{t,x,\mathbf{I}}$ we can find a well-defined version of $x \mapsto E[\phi(X_T^{t,x,\mathbf{I}})|\mathcal{F}^W_T \lor \mathcal{F}_t^B]$. Applying the tower property and the Markov property for $X^{t,x,\mathbf{I}}$ yields\begin{equation*}E[\phi(X_T)|\mathcal{F}_t]= E[\phi(X_T^{t,X_t})|\mathcal{F}_t](\omega) = E[E[\phi(X_T^{t,X_t,\mathbf{I}})|\mathcal{F}^W_T\lor \mathcal{F}^B_t]|\mathcal{F}_t]=E[E[\phi(X_T^{t,x,\mathbf{I}})|\mathcal{F}^W_T]|\mathcal{F}^W_t]|_{x=X_t}.
\end{equation*} From Theorem \ref{RPDEpricethm} we know that $E[\phi(X_T^{t,x,\mathbf{I}})|\mathcal{F}_T^W](\omega)=u^{\mathbf{I}(\omega)}(t,x)$, and thus we can conclude \begin{equation*}
    E[\phi(X_T)|\mathcal{F}_t] = \left. E[u^{\mathbf{I}}(t,x)|\mathcal{F}_t^W]\right|_{x=X_t} \quad \text{almost surely.}
\end{equation*}
\end{proof}\subsection{Special case: stochastic volatility models}\label{RVolsection}
In this section, we discuss two choices of $f$ and $g$, where we can solve the RPDE in Theorem \ref{RPDEpricethm} explicitly. These examples correspond to stochastic volatility models in two distinct forms.
\subsubsection*{Example 1: Stochastic volatility models in Bachelier-form}
First, let $f(t,x) \equiv \rho$ and $g(t,x) \equiv \sqrt{1-\rho^2}$ for some $\rho \in [-1,1]$. In this case we have a stochastic volatility model of the form
\begin{equation*}
  X^{t,x}_t=x,\quad dX^{t,x}_s = v_s \left( \rho dW_s + \sqrt{1 - \rho^2}
dB_s \right), \quad 0 \leq t <s \leq T, \label{priceRB}
\end{equation*}
where $B$ and $W$ are two independent standard Brownian motions, and $v$ satisfies Assmuption \ref{ass1}. The goal now is to show that the pricing RPDE \eqref{RPDECl} reduces to a classical heat equation, and we can give an explicit solution. Let $\phi : \mathbb{R} \rightarrow
\mathbb{R}_+$ be a bounded and continuous function. Since $f$ is constant, we have $f_0 \equiv 0$, and the RPDE \eqref{RPDECl} reduces to \begin{equation}
\begin{cases}
    - d_t u^{\mathbf{Y}} & =  \frac{1}{2}(1-\rho^2)\partial_{xx}^2u^{\mathbf{Y}}_t\mathbf{V}^{\mathbf{Y}}_tdt  + \rho \partial_xu^{\mathbf{Y}}_t d\mathbf{Y}^g_t \\ u^{\mathbf{Y}}(T, x) & =  \phi (x). 
    \end{cases} \label{RPDESVM}
\end{equation}
Choosing the pair $\mathbf{Y}^{g,\epsilon}$, resp. $\mathbf{Y}^{\epsilon}$, as described in the proof of Theorem \ref{RPDEmain}, we know the solution to the RPDE is defined as the unique limit of the sequence of solutions $u^{\epsilon}$ to the PDEs \begin{equation}
\begin{cases}
    - \partial_t u^{\epsilon} & =  \frac{1}{2}(1-\rho^2)\mathbf{V}^{\mathbf{Y}}_t\partial_{xx}^2u^{\epsilon}_t  + \rho  \dot{Y}^{\epsilon}_t \partial_xu^{\epsilon}_t\\u^{\epsilon} (T, x) & =  \phi (x).\label{RPDERT}
    \end{cases} 
\end{equation} But in this case, the PDE can actually be reduced to the heat equation \begin{equation}
\begin{cases}
    - \partial_t v & =  \frac{1}{2}(1-\rho^2)\mathbf{V}^{\mathbf{Y}}_t\partial_{xx}^2v_t \\v (T, x) & =  \phi (x).
    \end{cases} \label{heatequationRPDE}
\end{equation} Indeed, assume $v$ solves \eqref{heatequationRPDE} and define $u^{\epsilon}(t,x) :=v(t,x+\rho Y_{t,T}^{\epsilon})$. Then clearly $u^{\epsilon}(T,x) = v(T,x) = \phi(x)$ and by the chain rule,\begin{align*}
      - \partial_t u^{\epsilon} & =  - \partial_t v(t, y) |_{y = x +
   \rho Y^{\epsilon}_{t,T}} + 
    \rho\dot{Y}^{\epsilon}_t \partial_x v(t,y)|_{y = x +
   \rho Y^{\epsilon}_{t,T}} \\ & = \frac{1}{2} (1-\rho^2)\mathbf{V}^{\mathbf{Y}}_t \partial_{xx}^2 u^{\epsilon}_t
    + \rho \dot{Y}^{\epsilon}_t \partial_x u_t^{\epsilon}.
  \end{align*} Therefore, $u^{\epsilon}$ is indeed the solution to \eqref{RPDERT}. \\ Focusing on the PDE \eqref{heatequationRPDE}, we define the heat-kernel\begin{equation*}
    f(t, x, y) \assign \frac{1}{\sqrt{2 \pi(1-\rho^2) [\mathbf{Y}]_{t,T}}}
   \exp \left\{ - \frac{(y - x)^2}{2 (1-\rho^2) [\mathbf{Y}]_{t,T}} 
   \right\}.
\end{equation*} Then one can check that the function \begin{equation*}
  v(t, x) = \int_{\mathbb{R}} \phi (y) f(t, x, y)^{}dy
  \label{explicitsol}
\end{equation*}
is the unique solution to the heat equation \eqref{heatequationRPDE}. Therefore, the solution to the PDE \eqref{RPDERT} is given by \begin{equation*}
    u^{\epsilon}(t,x) := v(t, x+\rho Y_{t,T}^{\epsilon}) = \int_{\mathbb{R}} \phi (y) \frac{1}{\sqrt{2 \pi(1-\rho^2) [{\mathbf{Y}}]_{t,T}}}
   \exp \left\{ - \frac{(y - x-\rho Y_{t,T}^{\epsilon})^2}{2 (1-\rho^2)[{\mathbf{Y}}]_{t,T}} 
   \right\}dy.
\end{equation*} Using dominated convergence, we finally find the solution to the RPDE is explicitely given by \begin{equation*}
    u^{\mathbf{Y}}(t,x) = \lim_{\epsilon \to 0}u^{\epsilon}(t,x) = \int_{\mathbb{R}} \phi (y) \frac{1}{\sqrt{2 \pi(1-\rho^2) [{\mathbf{Y}}]_{t,T}}}
   \exp \left\{ - \frac{(y - x-\rho Y_{t,T})^2}{2 (1-\rho^2)[{\mathbf{Y}}]_{t,T}} 
   \right\}dy.
\end{equation*} By a direct application of Theorem \ref{RPDEpricethm}, we find the following formulas in the case of stochastic volatility models. \begin{theorem}\label{RTTheorem}
    Under Assumption \ref{ass1}, for $\phi \in C_b^0(\mathbb{R},\mathbb{R})$, the unique solution to the RPDE \eqref{RPDESVM} is given by \begin{equation*}
       u^{\mathbf{Y}}(t,x) =  \int_{\mathbb{R}} \phi (y) \frac{1}{\sqrt{2 \pi(1-\rho^2) [{\mathbf{Y}}]_{t,T}}}
   \exp \left\{ - \frac{(y - x-\rho Y_{t,T})^2}{2 (1-\rho^2)[{\mathbf{Y}}]_{t,T}} 
   \right\}dy.
    \end{equation*} Moreover, recalling that $u(t,x,\omega) = E\left [\phi(X_T^{t,x})|\mathcal{F}_T^W\right ](\omega)$, we have \begin{equation*}
        u(t,x,\omega) =  \int_{\mathbb{R}} \phi (y) \frac{1}{\sqrt{2 \pi(1-\rho^2) [I]_{t,T}(\omega)}}
   \exp \left\{ - \frac{(y - x-\rho I_{t,T}(\omega))^2}{2 (1-\rho^2)[I]_{t,T}(\omega)} 
   \right\}dy.
    \end{equation*}
\end{theorem}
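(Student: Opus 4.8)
Most of the work for the first identity has already been carried out in the discussion preceding the statement; what remains is to organise those steps and to justify the passage $\epsilon\to 0$ rigorously, bearing in mind that for a merely continuous bounded $\phi$ the smoothed equation \eqref{RPDERT} need not possess a classical $C^{1,2}$ solution. The plan is therefore to argue entirely at the level of the Feynman--Kac representation supplied by Theorem \ref{RPDEmain}, never invoking classical PDE well-posedness. Concretely, I would fix $Y^\epsilon\in\mathrm{Lip}([0,T],\mathbb{R})$ with $\mathbf{Y}^{g,\epsilon}\rightharpoonup\mathbf{Y}^g$ in $\mathscr{C}^\alpha$, note that $f\equiv\rho$ forces $f_0\equiv 0$, and observe that the SDE attached to \eqref{RPDERT} then reduces to $dX^{t,x,\epsilon}_s=\sqrt{1-\rho^2}\,\mathbf{v}^{\mathbf{Y}}_s\,dB_s+\rho\,\dot Y^\epsilon_s\,ds$, which integrates explicitly to $X^{t,x,\epsilon}_T=x+\rho Y^\epsilon_{t,T}+\sqrt{1-\rho^2}\int_t^T\mathbf{v}^{\mathbf{Y}}_s\,dB_s$. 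Thus $X^{t,x,\epsilon}_T$ is Gaussian with mean $x+\rho Y^\epsilon_{t,T}$ and variance $(1-\rho^2)\int_t^T(\mathbf{v}^{\mathbf{Y}}_s)^2\,ds=(1-\rho^2)[\mathbf{Y}]_{t,T}$, so that $u^\epsilon(t,x)=E[\phi(X^{t,x,\epsilon}_T)]$ is exactly the convolution of $\phi$ against the advertised Gaussian kernel, only with $Y_{t,T}$ replaced by $Y^\epsilon_{t,T}$.

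To conclude the first identity I would then pass to the limit: weak convergence $\mathbf{Y}^{g,\epsilon}\rightharpoonup\mathbf{Y}^g$ forces uniform convergence $Y^\epsilon\to Y$ on $[0,T]$ (Remark \ref{weakconvdef}), in particular $Y^\epsilon_{t,T}\to Y_{t,T}$, and since $\phi$ is bounded and the Gaussian densities are uniformly bounded in $\epsilon$ --- the variance $(1-\rho^2)[\mathbf{Y}]_{t,T}$ being $\epsilon$-independent --- dominated convergence gives that $u^\epsilon(t,x)$ converges to the expression in the statement. By Theorem \ref{RPDEmain} this pointwise limit is $u^{\mathbf{Y}}$, does not depend on the approximating sequence, and is by construction the unique solution of \eqref{RPDESVM}, i.e.\ of \eqref{RPDECl} for $f\equiv\rho$, $g\equiv\sqrt{1-\rho^2}$. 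The degenerate cases $\rho^2=1$ or $[\mathbf{Y}]_{t,T}=0$ are read off directly from $X^{t,x,\epsilon}_T=x+\rho Y^\epsilon_{t,T}$, giving $u^{\mathbf{Y}}(t,x)=\phi(x+\rho Y_{t,T})$, i.e.\ the Gaussian degenerates to the Dirac mass at its mean.

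For the second identity I would specialise Theorem \ref{RPDEpricethm} to this $f,g$: it yields $u(t,x,\omega)=u^{\mathbf{Y}}(t,x)\big|_{\mathbf{Y}=\mathbf{I}(\omega)}$ almost surely, where $\mathbf{I}$ is the It\^o lift of $I_t=\int_0^t v_s\,dW_s$. By the consistency \eqref{consistencyIv}, $\mathbf{I}\in\mathscr{C}^{\alpha,1+}([0,T],\mathbb{R})$ with $[\mathbf{I}]_t=[I]_t=\int_0^t v_s^2\,ds$, so $[\mathbf{I}]_{t,T}=[I]_{t,T}(\omega)$, while the path level of $\mathbf{I}$ is $I$ itself, so the increment $Y_{t,T}$ becomes $I_{t,T}(\omega)$; plugging $\mathbf{Y}=\mathbf{I}(\omega)$ into the explicit formula for $u^{\mathbf{Y}}$ then produces the stated expression for $u(t,x,\omega)$. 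The only point needing genuine care is the one flagged above: because $\phi$ is only continuous and bounded, classical PDE arguments for \eqref{RPDERT} are unavailable, so the heat-kernel substitution must be verified at the probabilistic level and the limit $\epsilon\to0$ controlled via the weak convergence of the lifts together with the $\epsilon$-independence of the limiting variance; everything else is a routine Gaussian computation combined with Theorems \ref{RPDEmain} and \ref{RPDEpricethm}.
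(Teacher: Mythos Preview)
Your argument is correct and reaches the same conclusion as the paper, but the route is genuinely different. The paper works at the PDE level: it takes the approximating PDE \eqref{RPDERT}, makes the substitution $u^{\epsilon}(t,x)=v(t,x+\rho Y^{\epsilon}_{t,T})$ to reduce it to the heat equation \eqref{heatequationRPDE}, writes down the heat kernel, and only then passes to the limit $\epsilon\to 0$ by dominated convergence. You instead stay entirely on the probabilistic side of the Feynman--Kac representation from Theorem \ref{RPDEmain}: the approximating SDE has constant coefficients and integrates explicitly, so $X^{t,x,\epsilon}_T$ is manifestly Gaussian with mean $x+\rho Y^{\epsilon}_{t,T}$ and $\epsilon$-independent variance $(1-\rho^2)[\mathbf{Y}]_{t,T}$, after which $u^{\epsilon}$ is read off as a Gaussian convolution and the limit is immediate.

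Your approach is slightly more elementary and, as you point out, sidesteps any question of whether \eqref{RPDERT} has a classical $C^{1,2}$ solution for merely continuous bounded $\phi$; the paper's chain-rule computation implicitly uses that the heat-kernel convolution $v$ is smooth for $t<T$, which is of course true when $[\mathbf{Y}]_{t,T}>0$, but your formulation never needs to say so. The paper's substitution, on the other hand, makes the structural reason for the answer transparent (a drift shift in the spatial variable decouples the rough driver from the diffusion), and generalises more visibly to the Black--Scholes case treated next. Both proofs invoke Theorem \ref{RPDEpricethm} in the same way for the second identity.
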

\subsubsection*{Example 2: Stochastic volatility models in Black-Scholes form}
A second example comes from choosing $f(t,x) = x\rho$ and $g(t,x) = x \sqrt{1-\rho^2}$, which leads to the more common form of stochastic volatility models \begin{equation*}
  X^{t,x}_t=x, \quad dX^{t,x}_s = X_s^{t,x}v_s \left( \rho dW_s + \sqrt{1 - \rho^2}
dB_s \right),\quad 0 \leq t <s \leq T. \label{priceSVMBS}
\end{equation*} In this case we have $f_0(t,x) = -\frac{1}{2}\rho^2x$, and thus the RPDE reads \begin{equation}
\begin{cases}
    - d_t u^{\mathbf{Y}} & =  \left (\frac{1}{2}x^2(1-\rho^2)\partial_{xx}^2u^{\mathbf{Y}}_t-\frac{1}{2}\rho^2x\partial_xu^{\mathbf{Y}}_t\right)\mathbf{V}^{\mathbf{Y}}_t dt  + \rho x \partial_xu^{\mathbf{Y}}_t d\mathbf{Y}^g_t\\ u^{\mathbf{Y}}(T, x) & =  \phi (x).
    \end{cases} \label{RPDESVMBS}
\end{equation} Similar as before, the classical PDE with respect to the approximating rough path $\mathbf{Y}^{g,\epsilon}$ as described in the proof of Theorem \ref{RPDEmain}, is given by \begin{equation}
\begin{cases}
    - \partial_t u^{\epsilon} & =  \frac{1}{2}x^2(1-\rho^2)\mathbf{V}^{\mathbf{Y}}_t\partial_{xx}^2u^{\epsilon}_t  + \left(\rho x\dot{Y}^{\epsilon}_t-\frac{1}{2}\rho^2x\mathbf{V}^{\mathbf{Y}}_t \right )\partial_xu^{\epsilon}_t\\u^{\epsilon} (T, x) & =  \phi (x).\label{RPDERTapprox}
    \end{cases} 
\end{equation} We can reduce that PDE to \begin{equation}
\begin{cases}
    - \partial_t v & =  \frac{1}{2}x^2(1-\rho^2)\mathbf{V}^{\mathbf{Y}}_t\partial_{xx}^2v_t \\v (T, x) & =  \phi (x).
    \end{cases} \label{heatequationRPDEBS}
\end{equation} Indeed, define $\psi^{\epsilon}(t):=\rho Y^{\epsilon}_{t,T}-\frac{1}{2}\rho^2[\mathbf{Y}]_{t,T}$ and $u^{\epsilon}(t,x):= v(t,xe^{\psi^{\epsilon}(t)})$. Clearly we have $u^{\epsilon}(T,x)= v(T,x)=\phi(x)$, and by the chain rule \begin{align*}
      - \partial_t u^{\epsilon} & =  - \partial_t v(t, y) |_{y = xe^{\psi^{\epsilon}(t)}}- \partial_x v(t,y)|_{y = xe^{\psi^{\epsilon}(t)}}xe^{\psi^{\epsilon}(t)}\dot{\psi}^{\epsilon}(t)\\ & = \left.\left(\frac{1}{2}y^2 (1-\rho^2) \partial_{xx}^2 v(t,y)\right)\right|_{y = xe^{\psi^{\epsilon}(t)}}\mathbf{V}^{\mathbf{Y}}_t -\left. \left(\partial_xv(t,y)e^{\psi^{\epsilon}(t)}\right)\right|_{y = xe^{\psi^{\epsilon}(t)}} x\dot{\psi}^{\epsilon}(t)\\ & = \frac{1}{2}x^2 (1-\rho^2)\mathbf{V}^{\mathbf{Y}}_t \partial_{xx}^2 u^{\epsilon}_t
    + x\left (\rho \dot{Y}_t^{\epsilon} -\frac{1}{2}\rho^2\mathbf{V}^{\mathbf{Y}}_t \right) \partial_x u_t^{\epsilon},
  \end{align*} and therefore $u^{\epsilon}$ is the solution to \eqref{RPDERTapprox}. \\ Looking directly at the PDE \eqref{heatequationRPDEBS}, we can again directly find a solution, which in this case is given by
      \begin{equation*}
  v(t, x) = \int_{\mathbb{R}_+} \phi (y)\frac{1}{y\sqrt{2 \pi(1-\rho^2) [\mathbf{Y}]_{t,T}}}
   \exp \left\{ - \frac{(\mathrm{ln}(y/x)-\frac{1}{2}(1-\rho^2)[\mathbf{Y}]_{t,T})^2}{2 (1-\rho^2)[\mathbf{Y}]_{t,T}} 
   \right\}dy.
  \label{explicitsolBs}
\end{equation*} Using dominated convergence, and applying Theorem \ref{RPDEmain}, we find in this case \begin{align*}
    u^{\mathbf{Y}}(t,x) & = \lim_{\epsilon \to 0}v(t,xe^{\psi^{\epsilon}(t)}) \\ & = \int_{\mathbb{R}_+} \phi (y)\frac{1}{y\sqrt{2 \pi(1-\rho^2) [\mathbf{Y}]_{t,T}}}
   \exp \left\{ - \frac{(\mathrm{ln}(y/x) -\rho Y_{t,T}+\frac{1}{2}[\mathbf{Y}]_{t,T})^2}{2 (1-\rho^2)[\mathbf{Y}]_{t,T}} 
   \right\}dy.
\end{align*} We summarize the results in the following theorem. \begin{theorem}\label{RTTheoremBS}
    Under Assumption \ref{ass1}, for $\phi \in C_b(\mathbb{R},\mathbb{R}_+)$, the unique solution to the RPDE \eqref{RPDESVMBS} is given by \begin{equation*}
        u^{\mathbf{Y}}(t,x) =  \int_{\mathbb{R}_+} \phi (y)\frac{1}{y\sqrt{2 \pi(1-\rho^2) [\mathbf{Y}]_{t,T}}}
   \exp \left\{ - \frac{(\mathrm{ln}(y/x)-\rho Y_{t,T}+\frac{1}{2}[\mathbf{Y}]_{t,T})^2}{2 (1-\rho^2)[\mathbf{Y}]_{t,T}} 
   \right\}dy.
    \end{equation*} Moreover, recalling that $u(t,x,\omega) = E\left [\phi(X_T^{t,x})|\mathcal{F}^W_T\right ](\omega)$, we have \begin{equation*}
        u(t,x,\omega) =  \int_{\mathbb{R}_+} \phi (y)\frac{1}{y\sqrt{2 \pi(1-\rho^2) [I]_{t,T}(\omega)}}
   \exp \left\{ - \frac{(\mathrm{ln}(y/x) -\rho I_{t,T}(\omega)+\frac{1}{2}[I]_{t,T}(\omega))^2}{2 (1-\rho^2)[I]_{t,T}(\omega)} 
   \right\}dy.
    \end{equation*}
\end{theorem}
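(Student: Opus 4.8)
The plan is to follow exactly the reduction scheme already set up in the surrounding text and in the proof of Theorem \ref{RPDEmain}: express the solution to the RPDE \eqref{RPDESVMBS} as the $\epsilon\to 0$ limit of classical PDE solutions $u^{\epsilon}$, reduce each $u^{\epsilon}$ to a solution of the driftless, rough-bracket-weighted heat equation \eqref{heatequationRPDEBS} via an explicit change of variables, solve that heat equation in closed form, pass to the limit by dominated convergence, and finally invoke Theorem \ref{RPDEpricethm} to identify the pathwise object $u(t,x,\omega)$ by substituting $\mathbf{Y}=\mathbf{I}(\omega)$.

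Concretely, first I would recall from Theorem \ref{RPDEmain} that $u^{\mathbf{Y}}=\lim_{\epsilon\to 0}u^{\epsilon}$ pointwise, where $u^{\epsilon}$ solves \eqref{RPDERTapprox}. Then I would verify the change-of-variables claim already stated: setting $\psi^{\epsilon}(t):=\rho Y^{\epsilon}_{t,T}-\tfrac12\rho^2[\mathbf{Y}]_{t,T}$ and $u^{\epsilon}(t,x):=v(t,xe^{\psi^{\epsilon}(t)})$ with $v$ solving \eqref{heatequationRPDEBS}, a direct application of the chain rule (noting $\dot\psi^{\epsilon}(t)=-\rho\dot Y^{\epsilon}_t+\tfrac12\rho^2\mathbf{V}^{\mathbf{Y}}_t$, using that $[\mathbf{Y}]_{t,T}=\int_t^T\mathbf{V}^{\mathbf{Y}}_sds$) reproduces exactly \eqref{RPDERTapprox} and the terminal condition. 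Next I would solve \eqref{heatequationRPDEBS} explicitly: after the logarithmic substitution $x\mapsto \ln x$ it becomes a constant-coefficient heat equation in the time-changed variable $\tau=(1-\rho^2)[\mathbf{Y}]_{t,T}$, whose fundamental solution is the stated log-normal kernel; one checks by differentiation that $v(t,x)=\int_{\mathbb{R}_+}\phi(y)\,\tfrac{1}{y\sqrt{2\pi(1-\rho^2)[\mathbf{Y}]_{t,T}}}\exp\{-(\ln(y/x)-\tfrac12(1-\rho^2)[\mathbf{Y}]_{t,T})^2/(2(1-\rho^2)[\mathbf{Y}]_{t,T})\}\,dy$ solves it, using boundedness of $\phi$ to justify differentiating under the integral. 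Composing with $xe^{\psi^{\epsilon}(t)}$ and simplifying the exponent (the $\tfrac12(1-\rho^2)$ term combines with $\psi^{\epsilon}$ to yield $-\rho Y^{\epsilon}_{t,T}+\tfrac12[\mathbf{Y}]_{t,T}$ inside the square) gives the formula for $u^{\epsilon}$.

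For the limit, since $\mathbf{Y}^{g,\epsilon}\rightharpoonup\mathbf{Y}^g$ implies $Y^{\epsilon}\to Y$ uniformly (Remark \ref{weakconvdef}), the integrand converges pointwise in $y$ and is dominated by $\|\phi\|_\infty$ times an integrable Gaussian-type density uniformly in $\epsilon$ (using that $[\mathbf{Y}]_{t,T}>0$ and the uniform bounds on $Y^{\epsilon}$); dominated convergence yields $u^{\mathbf{Y}}(t,x)=\lim_{\epsilon\to 0}v(t,xe^{\psi^{\epsilon}(t)})$ equal to the claimed closed-form expression. Finally, Theorem \ref{RPDEpricethm} gives $u(t,x,\omega)=u^{\mathbf{Y}}(t,x)|_{\mathbf{Y}=\mathbf{I}(\omega)}$ almost surely, and substituting $Y_{t,T}\rightsquigarrow I_{t,T}(\omega)$ and $[\mathbf{Y}]_{t,T}\rightsquigarrow [I]_{t,T}(\omega)=\int_t^T v_s^2\,ds$ (by the consistency \eqref{consistencyIv}) produces the second displayed formula.

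I do not expect any genuine obstacle here, since this is an explicitly solvable special case; the only points requiring mild care are (i) checking the chain-rule computation produces precisely the first-order drift in \eqref{RPDERTapprox} with the right sign, including correctly using $\dot\psi^{\epsilon}$ and the fundamental theorem of calculus for $[\mathbf{Y}]_{t,T}$, and (ii) ensuring the dominated-convergence hypotheses hold uniformly in $\epsilon$ on compact $(t,x)$-sets, which follows from the uniform estimates \eqref{uniform estimates} on $\mathbf{Y}^{\epsilon}$ and strict positivity of the bracket. Everything else is bookkeeping, and the statement then follows by assembling these pieces together with Theorem \ref{RPDEpricethm}.
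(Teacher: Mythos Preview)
Your proposal is correct and follows essentially the same route as the paper: reduce to the approximating PDE \eqref{RPDERTapprox}, remove the first-order term via the multiplicative change of variables $u^{\epsilon}(t,x)=v(t,xe^{\psi^{\epsilon}(t)})$ with $\psi^{\epsilon}(t)=\rho Y^{\epsilon}_{t,T}-\tfrac12\rho^2[\mathbf{Y}]_{t,T}$, solve the resulting geometric heat equation \eqref{heatequationRPDEBS} by the log-normal kernel, pass to the limit by dominated convergence, and then apply Theorem \ref{RPDEpricethm} with the consistency \eqref{consistencyIv}. Your added remarks on justifying dominated convergence via the uniform estimates \eqref{uniform estimates} and positivity of the bracket are a welcome clarification beyond what the paper spells out.
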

\begin{remark}\label{RTremark}
  The formulas for $u$ in Theorems \ref{RTTheorem} and \ref{RTTheoremBS} can also be deduced from the so-called Romano-Touzi formula \cite{romano1997contingent}, and are already well-known. In the case where $f$ and $g$ are constant, one can observe that conditional on $W$, we have \begin{equation*}
      X_T^{t,x}=x +\rho \int_t^Tv_sdW_s+ \sqrt{1-\rho^2}\int_t^Tv_sdB_s \sim \mathcal{N}(x+\rho I_{t,T},(1-\rho^2)[I]_{t,T}),
  \end{equation*} which readily gives the representation for $u$ in Theorem \ref{RTTheorem}. Similarly, in the second example, one can notice that conditional on $W$, the random variable $X_T^{t,x}$ has a log-normal distribution, which leads to the latter formula of $u$. The approach presented in this paper therefore generalizes the understanding of the conditional process $X$ given $W$ in generic local stochastic volatility models, and relates them to pathwise PDEs.
\end{remark}

\begin{comment}
    \begin{remark}
Returning to the framework of Section \ref{RVolsection}, we have $f(t,x)=\sqrt{1-\rho^2}$ and $g(t,x)=\rho$, and therefore $f_0(t,x)=0$, and the RDE in \eqref{mainRDE} becomes an explicit expression\begin{equation}
    X_s^{t,x,\mathbf{Y}}=x+\sqrt{1-\rho^2}\int_t^s\sqrt{\dot{L_u}}dB_u+\rho \int_t^sd\mathbf{Y}_u^g.\label{rVolRDE}
\end{equation} Now for an approximating sequence $(Y^\epsilon,L^\epsilon)$ as in the proof of Theorem \ref{RPDEmain}, we can note that the Feynman-Kac solution $u^\epsilon$ corresponds to the solution of the PDE \eqref{PDEpr}, where we have the relation $(\gamma,\psi)=(\sqrt{\dot{L}},Y)$. Since $u$ is the locally uniform limit of $u^\epsilon$, it coincides with the PDE solution in Theorem \ref{Heatthm}. Clearly in this special case, there is no need of introducing the rough path $\mathbf{Y}$, as we do not need the stability results of RDEs, because everything is explicitly defined. 
\end{remark}
\end{comment}

\subsection{Multivariate dynamics}\label{multivariatesection}
We end this section by describing how our approach generalizes for multidimensional dynamics. For $d_X \in \mathbb{N}$, the $d_X$-dimensional LSV price is given by \begin{equation}
    X^{t,x}_t=x\in \mathbb{R}^{d_X}, \quad dX^{t,x}_s = F(s,X_s^{t,x})v_sdW_s+G(s,X_s^{t,x})v_sdB_s \quad t< s \leq T .\label{multidimdynamics_rev}
\end{equation} where \begin{itemize}
\item $F,G:[0,T]\times \mathbb{R}^{d_X} \rightarrow \mathbb{R}^{d_X\times d}$ are smooth local volatility functions, $d \in \mathbb{N}$,
\item $W$ and $B$ are two independent, $d-$dimensional Brownian motions,
\item $(v_t)_{t\in [0,T]}$ is an $\mathbb{R}^{d \times d}-$valued, $(\mathcal{F}^W_t)-$progressive process with bounded samples paths
%\item $\Pi = (I_{d_B},0)\in \mathbb{R}^{d_B\times d}$, that is $\Pi W = (W^1,\dots,W^{d_B})^T$. 
\end{itemize}

\begin{remark}
    In fact, we can generalize further by allowing $B$ and $W$ to be of different dimensions, and by letting $v$ be a non-square matrix. The corresponding price dynamics can then be defined by introducing linear maps $\Pi^W$ and $\Pi^B$, such that $I = \int v \Pi^W \, dW$ and $M = \int v \Pi^B \, dB$ are local martingales of the same dimension. However, to keep this section concise and clear, we restrict our discussion to the quadratic case described above.

\end{remark}
The component-wise meaning for the dynamics in \eqref{multidimdynamics_rev} is given by $$X_s^{t,x,i}=x^{i}+\sum_{j,k=1}^{d}\int_t^s \left  (F^{ij}(u,X^{t,x}_u)v^{jk}_u d W_u^{k} +G^{ij}(u,X^{t,x}_u)v^{jk}_u dB_u^{k}\right ), \quad 1 \leq i \leq d_X.$$ 
As before, the pair $(F,G)$ is chosen for correlation, leverage effects and calibration, and is assumed to be sufficiently nice for \eqref{multidimdynamics_rev} to have unique strong solutions, which can be deduced once again by \cite[Chapter 3 Theorem 7]{protter2005stochastic}.

\begin{example} Taking $d_X = d$, and some correlation matrix $\rho \in \mathbb{R}^{d_X \times d_X}$, we define $\bar{\rho} = \sqrt{I_{d_X}-\rho\rho^{T}}$. Then the Bachelier, resp. Black-Scholes stochastic volatility models considered in the last section, generalize by choosing $F(t,x)=\rho$ and $G(t,x) =\bar{\rho}$, resp. $F(t,x)=\mathrm{diag}(x)\rho$ and $G(t,x) = \mathrm{diag}(x)\bar{\rho}$. \end{example} 

We can proceed similar as in the one-dimensional case, by defining the local martingale $I_t=\int_0^tv_sdW_s \in \mathbb{R}^{d}$, and lift it to the $\alpha$-Hölder rough path $\mathbf{I}=(\delta I,\int(\delta I)\otimes dI)$, with non-decreasing Lipschitz rough path brackets (see Definition \ref{defLipschitzbrackets}) $[\mathbf{I}]=\int v_tv_t^{T}dt \in \mathbb{S}^{d}_+$. %Notice that for all $A\in \mathbb{S}^{d_X}_+$ we can define a unique matrix $\sqrt{A}\in \mathbb{R}^{d_X \times d_B}$, such that $\sqrt{A}\sqrt{A}^{T}=A$.\footnote{Indeed, let $\mathrm{rank}(A)=r$ and write the eigenvalue decomposition $A = U \mathrm{diag}(\lambda^1,\dots,\lambda^{r},0,\dots,0) U^{T}$. Now we define a matrix $V = (\mathrm{diag}(\sqrt{\lambda^1},\dots,\sqrt{\lambda^{r}},0,\dots,0),0)\in \mathbb{R}^{d_X \times d_B}$ and define $\sqrt{A}=U V \in \mathbb{R}^{d_X \times d_B}$. It clearly holds that $\sqrt{A}\sqrt{A}^{T}=UVV^{T}U^{T}=U \mathrm{diag}(\lambda^1,\dots,\lambda^r,0,\dots,0) U^{T} =A.$} 
Now for any fixed rough path $\mathbf{Y} \in \mathscr{C}^{\alpha,1+}([0,T],\mathbb{R}^{d})$, we have $\mathbf{v}^{\mathbf{Y}}_t = \sqrt{\partial_t{[\mathbf{Y}]_t}} \in \mathbb{R}^{d\times d},$ and it follows that $\mathbf{v}_t^{\mathbf{I}}(\mathbf{v}_t^{\mathbf{I}})^T = v_tv_t^{T}$ almost surely. In the spirit of Section \ref{conditioningsection}, we consider the RSDE $$X^{t,x,\mathbf{Y}}_t=x, \quad dX^{t,x,\mathbf{Y}}_s = F(s,X_s^{t,x,\mathbf{Y}})d\mathbf{Y}_s+G(s,X_s^{t,x,\mathbf{Y}})\mathbf{v}_s^{\mathbf{Y}}dB_s, \quad t < s \leq T.$$ By the multivariate version of Theorem \ref{consistencycoro}, that is Theorem \ref{consistencycoroA} in Appendix \ref{RSDEAPPendix}, the unique solution the the latter equation is an $(\mathcal{F}^B_t)-$Markov process such that for $\mathbf{Y}=\mathbf{I}$, its conditional law, given $\mathcal{F}^W_T\lor \mathcal{F}_t^B$, coincides with the conditional law of the price dynamics \eqref{multidimdynamics_rev}. Moreover, using similar arguments as in Section \ref{FeynmanKacSection}, for any bounded and continuous function $\phi: \mathbb{R}^{d_X} \longrightarrow \mathbb{R}$, we can find a unique Feynman-Kac type of solution to the RPDE \begin{equation}
\begin{cases}
    - d_t u & =  {L_{t}}  [u]dt + \Gamma_t [u] d\mathbf{Y}_t^g \quad t\in [0,T[ \times \mathbb{R}^{d_X},\\ u (T, x, \cdot) & =  \phi (x), \quad x \in \mathbb{R}^{d_X},
    \end{cases} \label{RPDEmulti}
\end{equation} where
\begin{eqnarray*}
  {L } _t[u] (x) & = & \frac{1}{2}\mathrm{Tr}\left [G(t,x)(\mathbf{v}^{\mathbf{Y}}_t)(\mathbf{v}^{\mathbf{Y}}_t)^{T}G(t,x)^T\nabla_{xx}^2u(t,x)\right]+\langle F_0(t,x),\nabla_xu(t,x)\rangle 
  \\\
  \Gamma_t [u] (x) & = & \langle F(t, x), \nabla_x u (t, x)\rangle,
\end{eqnarray*} where we define $F_0(t,x) \in \mathbb{R}^{d_X}$ by $F^{i}_0(t,x) := -\frac{1}{2} \sum_{j,k,l} \partial_{x_k} F^{ij}(t,x)F^{kl}(t,x)(\mathbf{v}^{\mathbf{Y}}_t(\mathbf{v}^{\mathbf{Y}}_t)^T)^{lj}$ for $i=1,\dots,d_X$. Combining these observations, by the same arguments used in Theorem \ref{RPDEpricethm}, the unique solution to the RPDE above, denoted by $u=u^{\mathbf{Y}}(t,x)$, satisfies
\begin{equation*}
    u^{\mathbf{I}(\omega)}(t,x)= E\left [\phi(X_T^{t,x})|\mathcal{F}_T^{W}\right](\omega)
\end{equation*} almost surely.

\section{Numerical examples}\label{Numeriksection}

We first present some ideas for numerical methods to price European options in general local stochastic volatility models, using the theory developed in the last sections. Recall the general price dynamics  
\begin{equation*}
    \begin{aligned}
    X_t^{t, x} = x, \quad dX_s^{t, x} = f(s,X_s^{t,x})v_sdW_s + g(s,X_s^{t,x})v_sdB_s, \quad t< s \leq T,
    \end{aligned}
\end{equation*} 
for $(t,x) \in [0,T] \times \mathbb{R}$, where $v$ satisfies Assumption \ref{ass1}, and $B$ and $W$ are independent Brownian motions. For some bounded payoff function $\phi \in C_b(\mathbb{R},\mathbb{R})$, we further recall $u(t,x,\omega):= E[\phi(X_T^{t,x})|\mathcal{F}^W_T](\omega)$, and we know from Theorem \ref{RPDEpricethm} that $u(t,x,\omega) = u^{\mathbf{I}(\omega)}(t,x)$ almost surely. Here $\mathbf{I}$ denotes the Itô rough path lift of $I_t=\int_0^tv_sdW_s$, and $u^{\mathbf{Y}}$ is the solution to the RPDE \eqref{RPDECl}. First we introduce two finite-difference schemes to solve the RPDE.
\subsection{Finite-difference schemes for RPDEs}\label{sec:FD_schemes} To construct numerical methods for the RPDE, let us first replace the domain $\mathbb{R}$ by some bounded interval $[a,b]$, for some $a<b$, and consider a (for simplicity) uniform space-grid $\mathcal{S}:= \{a=x_0<x_1<\dots < x_N =b\}$ with $\Delta x:= (b-a)/N$, and similarly a uniform time-grid $\mathcal{T}:= \{0=t_0<t_1<\dots < t_J = T\}$ with $\Delta t:= T/J$. Moreover, when replacing $\mathbb{R}$ with $[a,b]$ in the RPDE, we need to add appropriate boundary conditions. In order to understand the numerical error incurred by the discretization rather than the choice of boundary condition, we choose Dirichlet boundary conditions, i.e., $\psi_a,\psi_b:[0,T] \longrightarrow \mathbb{R}$ and set $u(t,a)=\psi_a(t)$ and $u(t,b) = \psi_b(t)$. The RPDE we want to solve numerically is then of the form 
\begin{subequations}
\begin{align}
    - d_t u^{\mathbf{Y}} &=   {L_{t}}  [u^{\mathbf{Y}}] \mathbf{V}^{\mathbf{Y}}_tdt + \Gamma_t [u^{\mathbf{Y}}] d\mathbf{Y}_t^g, \quad (t,x) \in [0,T[ \times ]a,b[\label{eq1}\\ 
    u^{\mathbf{Y}}(t,a) &=\psi_a(t), \quad t \in [0,T[ \\ u^{\mathbf{Y}}(t,b) &=  \psi_b(t),\quad t\in [0,T[ \\
    u^{\mathbf{Y}} (T, x) &= \phi (x), \quad x\in [a,b]\label{eq4}.
\end{align}
\end{subequations}

\subsubsection*{First-order finite-difference scheme}
Recall from Section \ref{FeynmanKacSection} Theorem \ref{RPDEmain}, that the unique solution $u^{\mathbf{Y}}$ to the RPDE \eqref{RPDECl} is the unique limit of the solution $u^{\epsilon}$ to the classical PDE \eqref{classicalPDE}. Our first approach is to numerically approximate $u^{\epsilon}$, leaving a detailed error-analysis for $|u^{\epsilon}-u^{\mathbf{Y}}|$ aside here. Then we proceed by using a finite-difference scheme to solve the classical PDE \eqref{classicalPDE}. In particular, we set $u_j^n:=u^{\epsilon}(t_j,x_n)$, and recall the notation $\frac{d[\mathbf{Y}]_t}{dt}:=\mathbf{V}^{\mathbf{Y}}_t$. Integrating over the time interval $[t_j,t_{j+1}]$ leads to\begin{eqnarray}
    u_j^n = u_{j+1}^n+\int_{t_j}^{t_{j+1}}L_s[u^{\epsilon}](x_n)d[\mathbf{Y}]_s + \int_{t_j}^{t_{j+1}}\Gamma_s[u^{\epsilon}](x_n)d{Y}^{\epsilon}_s.\label{numerikrpdefirtsz}
\end{eqnarray} We replace the space derivatives in the differential operators $L$ and $\Gamma$ by finite difference quotients
\begin{equation*}
    \begin{aligned}
        & L_{t_j}[u^{\epsilon}](x_n) \approx L_j^n:= \frac{1}{2}g^2(t_j,x_n)\frac{u_j^{n+1}+u_j^{n-1}-2u_j^{n}}{(\Delta x)^2}+ f_0(t_j,x_n)\frac{u^{n+1}_j-u^n_j}{\Delta x} \\ & \Gamma_{t_j}[u^{\epsilon}](x_n) \approx \Gamma_j^n := f(t_j,x_n)\frac{u^{n+1}_j-u^n_j}{\Delta x}.
    \end{aligned}
\end{equation*} Note that the usual stability analysis for finite difference schemes would lead to \emph{path-dependent} conditions for $\Delta t$ and $\Delta x$ in the fully explicit case.  Such considerations have been carried out in \cite{seeger2020approximation} for a similar scheme in the context of numerically solving SPDEs. Indeed, the corresponding explicit scheme was observed to be unstable unless very small time-steps were chosen. To improve stability, we instead use a implicit-explicit (IMEX) scheme. More precisely,
we apply a left-point approximation for the middle integral, and a right-point approximation for the last integral in \eqref{numerikrpdefirtsz}, which leads to the mixed implicit-explicit, backward finite-difference scheme\begin{equation*}
     u_j^n = u_{j+1}^{n}  +L_{j}^n[\mathbf{Y}]_{t_j,t_{j+1}} + \Gamma_{j+1}^nY^{\epsilon}_{t_j,t_{j+1}},
\end{equation*} for $0 \leq j \leq J-1, 1 \leq n \leq N-1$, with  boundary conditions \begin{equation}
    u^0_j = \psi_a(t_j), \quad u^N_j= \psi_b(t_j), \quad u^n_J = \phi(x_n).\label{FDboundary}
\end{equation}

\subsubsection*{Second-order finite-difference scheme}

The first order approach appears to be very natural considering our definition of solutions to the RPDE. Indeed, we gave the RPDE only a formal meaning by defining solutions as limits of classical PDE solutions. A more direct interpretation of RPDEs in a slightly different setting is given in \cite{diehl2017stochastic}. Extending the notion of \textit{regular solutions} in \cite[Definition 7]{diehl2017stochastic} to our setting, we say a function $u=u^{\mathbf{Y}}(t,x) \in C^{0,2}$ is a regular solution to the RPDE \eqref{RPDECl}, if \begin{equation}
    u^{\mathbf{Y}}(t,x) = \phi(x) +\int_t^TL_s[u^{\mathbf{Y}}]\mathbf{V}_s^{\mathbf{Y}}ds+\int_t^T\Gamma_s[u^{\mathbf{Y}}]d\mathbf{Y}_s^g\label{solRPDE}
\end{equation} and $(\Gamma[u^{\mathbf{Y}}],\Gamma'[u^{\mathbf{Y}}]):=(\Gamma[u^{\mathbf{Y}}], -\Gamma \left[\Gamma [u^{\mathbf{Y}}]\right]) \in \mathscr{D}_{\mathbf{Y}^g}^{2\alpha}([0,T],\mathbb{R})$, where $\mathscr{D}^{2\alpha}_{\mathbf{Y}^g}$ is the space of \textit{controlled rough paths}. The latter is now a well-defined, rough integral given by\begin{equation*}
    \int_t^T\Gamma_s[u^{\mathbf{Y}}]d\mathbf{Y}^g_s= \lim_{|\pi| \to 0}\sum_{[v,w]\subset \pi}\left(\Gamma_{v}[u^{\mathbf{Y}}]Y_{v,w}- \Gamma_v \left[\Gamma[u^{\mathbf{Y}}]\right]\mathbb{Y}^g_{v,w}\right),
\end{equation*}where                     $\pi$ denotes a subdivision of $[t,T]$, and $|\pi|$ denotes the mesh-size. \begin{remark} While we firmly believe that our RPDEs can be treated in a similar manner as done in \cite{diehl2017stochastic}, 
via forward-backward duality, in combination with \cite{friz2021rough},
addressing this goes beyond the scope of this paper. Therefore, we intentionally keep this task, as well as a full numerical analysis for both finite-difference schemes, open for future research.
\end{remark} 

Recalling that $\mathbb{Y}^g_{s,t}= \frac{1}{2}(Y_{s,t})^2$ in the one-dimensional setting, the rough integral can be formulated in a backward representation, see \cite[Chapter 5.4]{friz2020course}, given by 
\begin{equation*}
    \int_t^T\Gamma_s[u^{\mathbf{Y}}]d\mathbf{Y}_s^g= \lim_{|\pi| \to 0}\sum_{[v,w]\subset \pi}\left(\Gamma_{w}[u^{\mathbf{Y}}]Y_{v,w}+ \frac{1}{2}\Gamma_w \left[\Gamma[u^{\mathbf{Y}}]\right]Y^2_{v,w}\right)\label{backward}.
\end{equation*}
By definition of $\Gamma$, we have \begin{align*}
    \Gamma_t \left[\Gamma [u^{\mathbf{Y}}]\right](x)& = f(t,x)\partial_xf(t,x)\partial_xu^{\mathbf{Y}}(t,x) + f^2(t,x)\partial_{xx}u^{\mathbf{Y}}(t,x) \\ &= -2f_0(t,x)\partial_xu^{\mathbf{Y}}(t,x) + f^2(t,x)\partial_{xx}u^{\mathbf{Y}}(t,x).
\end{align*} Now with exactly the same idea as before, by integrating over $[t_j,t_{j+1}]$ we have 
\begin{equation*}
    u^{\mathbf{Y}}(t_j,x_n) = u^{\mathbf{Y}}(t_{j+1},x_n)+\int_{t_j}^{t_{j+1}}L_s[u^{\mathbf{Y}}](x_n)\mathbf{V}_s^{\mathbf{Y}}ds+\int_{t_j}^{t_{j+1}}\Gamma_s[u^{\mathbf{Y}}](x_n)d\mathbf{Y}^g_s.
\end{equation*} 
Denoting $u_j^n:=u^{\mathbf{Y}}(t_j,x_n)$ for $0\leq j \leq J$ and $0 \leq n \leq N$, we use the same approximations for $L$ and $\Gamma$ given in the last section, and define \begin{equation*}
    \Gamma'_{t_j}[u](x_n) \approx (\Gamma')_j^n := -2f_0(t_j,x_n)\frac{u^{n+1}_j-u^n_j}{\Delta x}+f^2(t_j,x_n)\frac{u_j^{n+1}+u_j^{n-1}-2u_j^{n}}{(\Delta x)^2}.
\end{equation*} Similar as in the first-order scheme, we apply a left-point approximation for the middle integral, and a right-point approximation for the rough integral in \eqref{solRPDE}, which leads to the mixed implicit-explicit, second order finite-difference scheme \begin{equation*}
    u_j^n = u_{j+1}^{n}  +L_{j}^n[\mathbf{Y}]_{t_j,t_{j+1}} + \Gamma_{j+1}^nY_{t_j,t_{j+1}}+\frac{1}{2}(\Gamma')_{j+1}^nY_{t_j,t_{j+1}}^2,
\end{equation*} for $0 \leq j \leq J-1, 1 \leq n \leq N-1$, with the same boundary conditions as in \eqref{FDboundary}.

\subsection{European options, Greeks and variance reduction}\label{sec:options}

In this section we test our finite-difference schemes to compute option prices and Greeks in two LSV models, and compare the results with full Monte-Carlo simulation. For this we focus on the following two examples \begin{example}[Bachelier stochastic volatility model]\label{LSVexnum}
    Recall from Section \ref{RVolsection}, that for the choice $f\equiv \rho$ and $g\equiv \sqrt{1-\rho^2}$, we find the stochastic volatility model of the form\begin{equation}
    \begin{aligned}
        X^{t,x}_t=x, \quad dX^{t,x}_s & = v_s \left( \rho dW_s + \sqrt{1 - \rho^2}
dB_s \right), \quad t < s \leq T.\label{RTintro}
    \end{aligned}
 \end{equation} As already mentioned in Remark \ref{RTremark}, conditional on $W$, the price has a normal distribution. Applying similar techniques as in the derivation of the Black-Scholes formula, one can find an explicit expression of $u(t,x,\omega)$.
\end{example}

\begin{example}[SABR local stochastic volatility model]\label{LSVex1num}
    In a second example we consider a model leaving the classical stochastic volatility framework, namely we have the SABR dynamics $f(t,x)= \rho x^{\beta}$ and $g(t,x) = \sqrt{1-\rho^2}x^{\beta}$ for some $\beta \in (1/2,1]$, see \cite{hagan2015probability}. More precisely, we have \begin{equation*}
    X^{t,x}_t=x,\quad dX^{t,x}_s = (X^{t,x}_s)^{\beta}v_s \left( \rho dW_s + \sqrt{1 - \rho^2}
dB_s \right), \quad t < s \leq T.
\end{equation*} Notice that this in particular includes the case of classical stochastic volatility models in Black-Scholes form for $\beta =1$, as described in Section \ref{RVolsection}. For $\beta \neq 1$, we no longer have an exact reference solution for $u(t,x,\omega)$, and we use a full Monte-Carlo simulation for comparison.
\end{example}
We are interested in a European put option written on the asset $X$, that is we consider the payoff function $\phi(x):= \max(K-x,0)$. (In case that our model allows for negative price paths, we might need to truncate the payoff for negative $x$ in a smooth way, in order to retain boundedness.) For all the examples in this section, we choose the rough volatility process as given in the rough Bergomi model \cite{bayer2016pricing}, that is \begin{equation*}
    v_t=\xi_0\mathcal{E}\left (\eta \int_0^t(t-s)^{H-\frac{1}{2}}dW_s \right ),
\end{equation*} where $\mathcal{E}$ denotes the stochastic exponential. We choose $T=1, K=5$, and following \cite{bayer2016pricing}, we choose the parameters $\xi_0 = 0.235^2, \eta = 1.9$ and $H=0.07$, where we recall that the Hurst parameter $H$ determines the roughness of the volatility, that is the Hölder regularity of $v$. Then we simulate $M=10\,000$ paths of the pair $(I,[I])=(\int v_tdW_t,\int V_tdt)$, with each $N=10\,000$ uniform time-steps in $[0,1]$, we refer to \cite[Section 4]{bayer2016pricing} for details about simulation in the rough Bergomi model. \\ \\ Now having the samples $(I^{(m)},[I]^{(m)})$ in hand, we can in particular construct samples of the Itô rough path lift $\mathbf{I}^{(m)}$ for $m=1,\dots,M$. Along every such sample path $\mathbf{I}^{(m)}$, we solve the RPDE \eqref{RPDECl} with our finite-difference schemes, and denote by $u^{\mathbf{I}^{(m)}}$ the exact solution along the $m$-th sample. Recalling the identity $u^{\mathbf{I}^{(m)}}(t,x) = E[\phi(X_T^{t,x,\mathbf{I}^{(m)}})]$, see Theorem \ref{RPDEmain}, it is tempting to think of $u^{\mathbf{I}^{(m)}}$ as the put option price written on the $(\mathcal{F}^B_t)$-price dynamics $X^{t,x,\mathbf{I}^{(m)}}$. It is however important to note that the latter process is typically not an $(\mathcal{F}_t^B)$-martingale, and thus $u^{\mathbf{I}^{(m)}}$ could lead to arbitrage opportunities in this model. On the other hand, we can consider the Monte-Carlo approximation\begin{equation*}
    y_0(x) = E[u^{\mathbf{I}}(0,x)] \approx \frac{1}{M}\sum_{m=1}^Mu^{\mathbf{I}^{(m)}}(0,x)=: y_0^M(x).\label{MonteCarloAnsatz}
\end{equation*} The right-hand side then clearly converges to the (arbitrage-free) price $y_0$ in the corresponding LSV model, as $M$ goes to infinity.  \\ \\ For both Examples \ref{LSVexnum} and \ref{LSVex1num}, we denote by $u^{\mathbf{I}^{(m)}}_{\text{FD}}$ the finite-difference solutions to the RPDEs along the samples $m=1,\dots,M$. 

%In both Examples \ref{LSVexnum} and \ref{LSVex1num}, we apply the finite-difference schemes for the RPDEs along all samples $m=1,\dots,M$, for simplicity we denote both by $u^{\mathbf{I}^{(m)}}_{\text{FD}}$.

\begin{figure}%
  \centering
  
  {\includegraphics[width=0.98\linewidth]{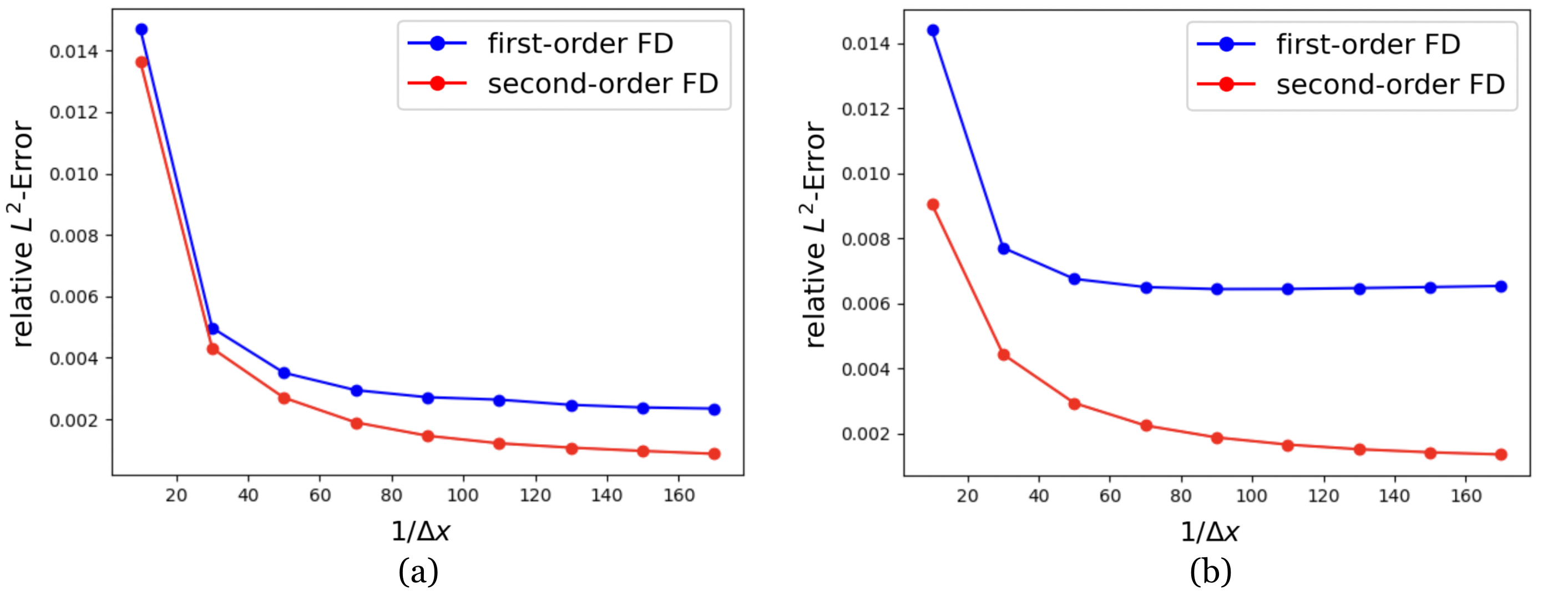}}
  \caption{Strong relative errors for fixed time step-size $\Delta t = 1/30$ and increasing number of space-steps, for both finite-difference scheme. In (a): Bachelier SV-model Example \ref{LSVexnum}, and in (b): SABR SV-model Example \ref{LSVex1num} with $\beta = 0.6$.}%
\label{ok}\end{figure}In Figure \ref{ok} we plot the following strong relative error with respect to the $2$-norm $\Vert h \Vert_{2}:= \sqrt{\sum_{i=0}^Nh(x_i)^2}$ on the space-grid, that is \begin{equation*}
    \epsilon^M:=\frac{1}{M}\sum_{m=1}^M\frac{\Vert u^{\mathbf{I}^{(m)}}(0,\cdot)-u^{\mathbf{I}^{(m)}}_{\text{FD}}(0,\cdot)\Vert _{2}}{\Vert u^{\mathbf{I}^{(m)}}(0,\cdot)\Vert_{2}},
\end{equation*} for both examples and finite-difference schemes. The reference solution in Example \ref{LSVexnum} corresponds to the exact Romano-Touzi solution, and in Example \ref{LSVex1num} we consider a Monte-Carlo solution, obtained along $10\,000$ samples of $(W,B)$. 
    Moreover, we use the reference solutions to specify the boundary conditions in \eqref{eq1}-\eqref{eq4}. It should not come as a surprise that the first-order scheme does not converge to the exact solution $u$, as we recall that this scheme approximates $u^{\epsilon}$ for some fixed $\epsilon$, that is the solution to the classical PDE \eqref{classicalPDE}. Therefore, even if the finite-difference scheme were exact, we would still encounter a fixed error $|u-u^{\epsilon}|$, which is related to accuracy of the piecewise linear approximation $I^{\epsilon}$. In Figure \ref{Samples}, along several samples of $(I,[I])$, we consider the Monte-Carlo solution $u(0,x,\omega)$, obtained via $M=10\,000$ samples with each $N=10\,000$ time-steps, together with both finite-difference solutions along each samples. Moreover, we plot the mean $y_0^M$, which represents an approximation for the price $y_0$. 
    
    \begin{figure}[h]
    \centering
    \includegraphics[width=0.73\textwidth]{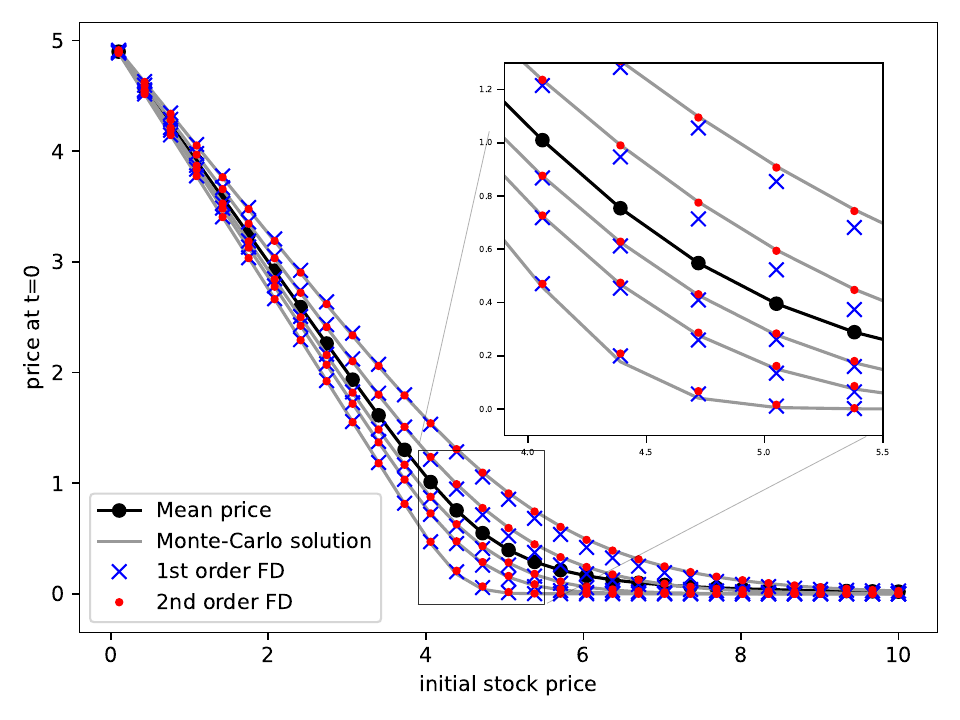}
    \caption{European put option in SABR local stochastic volatility model: Monte-Carlo values of $u^{\mathbf{I}^{(m)}}(0,x)$ vs. finite-difference solutions to the RPDE with first-order (\color{blue}$\times$\color{black}) and second-order (\color{red} $\boldsymbol{\cdot}$\color{black}) schemes, along several sample paths of $(I,[I])$. The black line corresponds to the mean along all $M=10'000$ samples. We choose $\Delta t = 1/120$, $\Delta x = (b-a)/90$, $\rho = -0.4$ and $\beta = 0.6$.}
    \label{Samples}
\end{figure} 
    
Finally, to demonstrate the effects of payoff-regularization and variance reduction in our partial Monte-Carlo approach, we additionally compute the \emph{Delta} and \emph{Gamma} for European put options in the Example \ref{LSVex1num} for the same choices of parameters. More precisely, we consider the following sensitivities with respect to the price \begin{equation}\label{eq:Greeks}
\Delta(x)= \partial_x y_0(x), \quad \Gamma(x)=\partial^2_{xx} y_0(x).
\end{equation} 
Similar as before, we approximate the derivatives by finite-difference quotients $\Delta^h(x)= \frac{1}{h}(y_0(x+h)-y_0(x))$, resp. $\Gamma^h(x)=\frac{1}{h^2}(y_0(x+h)-2y_0(x)+y_0(x-h))$ -- obviously, using the same samples for the computation of all option prices involved. For fixed $h=0.05$, we plot these quantities in Figure \ref{Greeks} for both a full Monte-Carlo simulation of the finite-differences of the payoff, and a partial Monte-Carlo simulation of the finite-differences of the RPDE solutions. For the \emph{Delta}, the first plot in Figure \ref{Greeks} shows similar performance for both methods for the choice of parameters and fixed $h=0.05$. On the other hand, for the $\emph{Gamma}$ we can see that the plain Monte-Carlo method becomes unstable, and smaller $h$ and/or more Monte-Carlo samples are required to get accurate results, while the RPDE approach still appears to be stable. This observation can be explained by the fact that the conditioning in our procedure regularizes the payoff, and by construction reduces the variance. To quantify the latter, we denote by $\sigma^{MC}_{P}(x)$, resp. $\sigma^{MC}_{\Delta^h}(x)$ and $\sigma^{MC}_{\Gamma^h}(x)$, the standard deviations of the put-option, resp. the \emph{Delta} and \emph{Gamma} in the full Monte-Carlo simulation, and similarly we denote by $\sigma^{RPDE}_{P}(x),\sigma^{RPDE}_{\Delta^h}(x)$ and $\sigma_{\Gamma^h}(x)$ the same quantities for the partial Monte-Carlo approach. %that is essentially the standard deviations of $u^{\mathbf{I}}(0,x),\partial_x u^{\mathbf{I}}(0,x)$ and $\partial_{xx}u^{\mathbf{I}}(t,x)$. 
Using the max-norm $\Vert h \Vert_{\infty} = \max_{0\leq j \leq N}|h(x_i)|$, in the performed numerical experiments for Example \ref{LSVex1num} we have\begin{equation}\label{eq:var_reduction}
\left \Vert\frac{\sigma^{MC}_{P}}{\sigma^{RPDE}_{P}}\right \Vert_{\infty} \approx 2.04, \quad \left \Vert\frac{\sigma^{MC}_{\Delta^h}}{\sigma^{RPDE}_{\Delta^{h}}}\right \Vert_{\infty} \approx 5.79, \quad \left \Vert\frac{\sigma^{MC}_{\Gamma^{h}}}{\sigma^{RPDE}_{\Gamma^{h}}}\right \Vert_{\infty} \approx 85.52.
\end{equation} In \eqref{eq:var_reduction} we used the sample standard deviation along the $M=10\,000$ samples, and we observe the claimed reduction, which in particular for the \emph{Gamma} becomes significant. Since the standard deviation represents the constant in the Monte Carlo error, these factors indicate how many more samples are required to achieve the same accuracy when using a full Monte Carlo simulation.
\begin{figure}[h]
    \centering
    \includegraphics[width=1\textwidth]{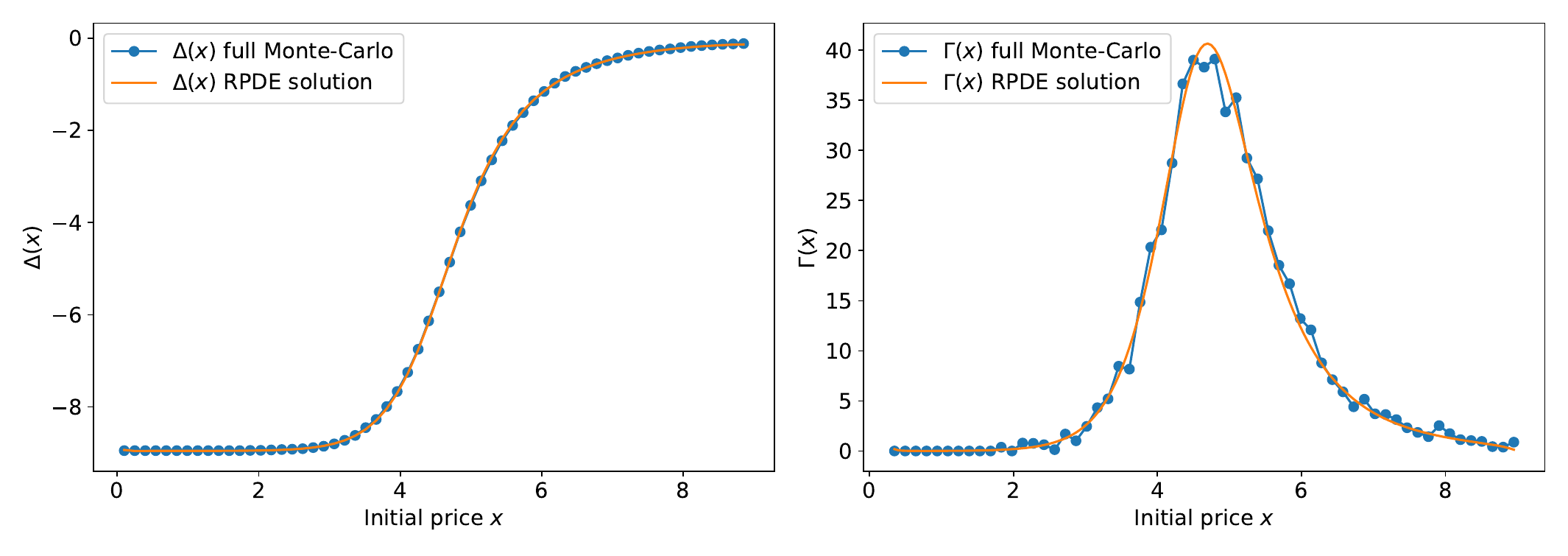}
    \caption{The Greeks \emph{Delta} (left) and \emph{Gamma} (right) for European put options in the SABR model, using full Monte-Carlo simulation vs. partial Monte-Carlo via RPDE solutions. We choose $\Delta t = 1/120$, $\Delta x = (b-a)/90$, $\rho = -0.4$, $\beta = 0.6$ and $h=0.05$.}
    \label{Greeks}
\end{figure}

\subsection*{Data availability statement}
This study is mainly theoretical and does not rely on any real-world datasets. The numerical examples presented in this paper were conducted using synthetic data generated through methods described in the manuscript. The code is available upon request.

\subsection*{Conflict of interest statement}
The authors declare no conflicts of interest.

\appendix
\section{Appendix}\label{RPappendix} In this section we discuss some technical details from rough path theory that we used in the main results of this paper. For any path $Y:[0,T] \rightarrow V$, recall the increment notations \begin{equation*}
    (\delta Y)_{s,t} = Y_{s,t} := Y_t-Y_s, \quad s \leq t.
\end{equation*}
\subsection{Proofs of Section \ref{Presection}}\label{enhancedAppendix}
 \begin{proof}{\textbf{of Lemma \ref{bijectionlemma} }} Consider the map \begin{equation*}
    \mathcal{T}:\mathscr{C}^{\alpha}_g
  ([0, T], V) \oplus C_0^{2 \alpha} ([0, T], \tmop{Sym} (V \otimes V)) \longrightarrow \mathscr{C}^{\alpha} ([0, T], V),
    \end{equation*} such that $\mathcal{T}((Y,\mathbb{Y}^g),\mathbb{H}):= (Y,\mathbb{Y}^g-\frac{1}{2}\delta \mathbb{H})$. It is straightforward to check that \begin{equation*}
        \Vert \mathbb{Y}^g-\frac{1}{2}\delta \mathbb{H} \Vert_{2\alpha}= \sup_{0\leq s < t \leq T}\frac{|\mathbb{Y}_{s,t}^g-\frac{1}{2}\delta (\mathbb{H})_{s,t}|}{|t-s|^{2\alpha}}< \infty.
    \end{equation*} Moreover, for any $s\leq u \leq t$, we have \begin{align*}
        \mathbb{Y}_{s,t} := \mathbb{Y}^g_{s,t}-\frac{1}{2}(\delta\mathbb{H})_{s,t} &= \mathbb{Y}^g_{s,u}+\mathbb{Y}^g_{u,t} + Y_{s,u} \otimes Y_{u,t} -\frac{1}{2} \delta \mathbb{H}_{s,u}-\frac{1}{2}\delta \mathbb{H}_{u,t} \\ & = \mathbb{Y}_{s,u}+\mathbb{Y}_{u,t}+Y_{s,u} \otimes Y_{u,t},
    \end{align*} where we used the additivity of $\delta \mathbb{H}$ and Chen's relation \eqref{CHEN}. Thus it follows that $\mathcal{T}((Y,\mathbb{Y}^g),\mathbb{H}) \in \mathscr{C}^{\alpha}$. Now to see that $\mathcal{T}$ is injective, consider two elements $\left((Y,\mathbb{Y}^g),\mathbb{H}\right),\left((Y',\mathbb{Y'}^g),\mathbb{H'}\right)$ and assume \begin{equation*}
        \mathcal{T}\left((Y,\mathbb{Y}^g),\mathbb{H}\right)=\mathcal{T} \left((Y',\mathbb{Y'}^g),\mathbb{H'}\right).
    \end{equation*}By definition it follows that $Y=Y'$, and using the geometricity \eqref{geometric}, it follows that two rough path lifts of $Y$, $\mathbb{Y}^g$ and $\mathbb{Y}'^g$, can only differ by some $\delta F: \Delta_{[0,T]}\rightarrow V$, where $\delta F$ has trivial symmetric part. Therefore, write $\mathbb{Y}'^{g} = \mathbb{Y}^g+ \delta F$, we find \begin{align*}
    -\frac{1}{2}\delta \mathbb{H} = -\frac{1}{2}\delta( \mathbb{H'}-2F)
    \end{align*} taking the antisymmetric part, it follows that $F$ is constant, hence $\mathbb{Y}^g= \mathbb{Y}'^g$. Similarly, from $\delta \mathbb{H} = \delta \mathbb{H}'$, it follows that $\mathbb{H}= \mathbb{H}'$. \\ On the other hand, let $\mathbf{Y}\in \mathscr{C}^{\alpha}$, we want to show that there exists a pair $(\mathbf{Y}^g,\mathbb{H}) \in \mathscr{C}^{\alpha}_g
  ([0, T], V) \oplus C_0^{2 \alpha} ([0, T], \tmop{Sym} (V \otimes V))$, such that $\mathcal{T}(\mathbf{Y}^g,\mathbb{H}) = \mathbf{Y}$. But we know that \begin{equation*}
      [\mathbf{Y}]_t:= Y_{0,t}\otimes Y_{0,t}-2\mathrm{Sym}(\mathbb{Y}_{0,t}) \in C_0^{2\alpha}([0,T],\mathrm{Sym}(V \otimes V)).
  \end{equation*} Define $\mathbb{Y}^g:= \mathbb{Y}+\frac{1}{2}\delta [\mathbf{Y}]$, by the same argument as above $(Y,\mathbb{Y}^g)$ defines a rough path, and since \begin{equation*}
      \mathrm{Sym}(\mathbb{Y}^g_{s,t})= \mathrm{Sym}(\mathbb{Y}_{s,t})+\frac{1}{2}Y_{s,t} \otimes Y_{s,t}-\mathrm{Sym}(\mathbb{Y}_{s,t}) = \frac{1}{2}(Y_{s,t}\otimes Y_{s,t}),
  \end{equation*} it is weakly geometric, and therefore the claim follows.
\end{proof} \\ \begin{proof}{\textbf{of Proposition \ref{Mglift}}}
  From the additivity of the It\^{o}-integral, it is straightforward to check \textit{Chen's relation}, see \eqref{CHEN}, for both $\tmmathbf{M}^{\text{It\^{o}}}$ and $\tmmathbf{M}^{\text{Strat}}$. Moreover, applying It\^{o}'s formula to $M \otimes M$, one can notice that $\mathrm{Sym}\left(\mathbb{M}^{\text{Strat}}_{s,t}\right ) = \frac{1}{2}M_{s,t}\otimes M_{s,t}$. \\ Next we consider a sequence of stopping times $(\tau_n)_{n\in \mathbb{N}}$, given by \begin{equation*}
      \tau_n:=\inf\left\{t\geq 0: \Vert \sigma \Vert_{\infty;[0,t]} \geq n\right \}.
  \end{equation*} Since $M$ is continuous, and $\sigma \in L^{\infty}$ almost surely, we have $\tau_n \rightarrow \infty$ as $n\to \infty$. Define the localization $M^n_{\cdot}:= M_{\tau_n \land \cdot}$ and similarly $\mathbb{M}^{\text{It\^{o}},n}$ and $\mathbb{M}^{\text{Strat},n}$. Applying the Burkholder-Davis-Gundy (BDG) inequalities, see for instance \cite[Theorem 18.7]{kallenberg1997foundations}, for all $q\geq 2$ we have \begin{equation}
      E\left [|M^n_{s,t}|^q\right] \lesssim  E\left [\left(\int_{s\land \tau_n}^{t\land \tau_n}|\sigma_u|^2du\right)^{q/2}\right] \lesssim n^q|t-s|^{q/2}.\label{Hoelder1}
  \end{equation} From a classical Kolmogorov continuity criterion, see \cite[Theorem 4.23]{kallenberg1997foundations}, it readily follows that $M^n$ has a Hölder continuous modification for every exponent $\alpha < 1/2$, which for all $q\geq 2$ satisfies \begin{equation}
      E\left [\left(\sup_{0\leq s \leq t \leq T}\frac{|M^n_t-M^n_s|}{|t-s|^{\alpha}}\right)^q\right] < \infty.\label{hoelderbounds}
  \end{equation} Next, applying again BDG, we have \begin{align*}
      E\left [|\mathbb{M}^{\text{It\^{o}},n}_{s,t}|^{q/2}\right]\lesssim E\left [\left( \int_{s\land \tau_n}^{t\land \tau_n}(M_{s,u})^2|\sigma_u|^2du\right)^{q/4}\right] \leq n^{q/2}E\left [\left( \int_{s\land \tau_n}^{t\land \tau_n}(M_{s,u})^2du\right)^{q/4}\right].
  \end{align*} But using \eqref{hoelderbounds}, for all $\alpha < 1/2$ we have \begin{equation*}
      E\left [\left( \int_{s\land \tau_n}^{t\land \tau_n}(M_{s,u})^2du\right)^{q/4}\right] \leq \frac{1}{2\alpha+1}|t-s|^{q/4(2\alpha+1)}E\left [\left(\sup_{0\leq s \leq t \leq T}\frac{|M^n_t-M^n_s|}{|t-s|^{\alpha}}\right)^{q/2}\right].
  \end{equation*} Since $\alpha < 1/2$, we can conclude that \begin{equation}
      E\left [|\mathbb{M}^{\text{It\^{o}},n}_{s,t}|^{q/2}\right] \leq C_2|t-s|^{q\alpha}.\label{Hoelder2}
  \end{equation} Finally, using the It\^{o}-Stratonovich correction, we find \begin{equation}
      E\left [|\mathbb{M}^{\text{Strat},n}_{s,t}|^{q/2}\right] \lesssim C_2|t-s|^{q\alpha} + E\left[\left (\int_{s\land \tau_n}^{t\land \tau_n}|\sigma_u|^2du\right )^{q/2}\right] \leq C_3|t-s|^{q\alpha}.\label{Hoelder3} 
  \end{equation} Combining \eqref{Hoelder1},\eqref{Hoelder2} and \eqref{Hoelder3}, we have for all $\alpha<1/2$ \begin{equation*}
      \Vert M^n_{s,t}\Vert_{L^q} \leq C|t-s|^{\alpha},\quad  \Vert \mathbb{M}^{\text{Strat},n}_{s,t}\Vert_{L^{q/2}} \leq C|t-s|^{2\alpha}, \quad \Vert \mathbb{M}^{\text{It\^{o}},n}_{s,t}\Vert_{L^{q/2}} \leq C|t-s|^{2\alpha}, 
  \end{equation*} for all $q \geq 2$ and some constant $C$. Applying the Kolmogorov criterion for rough paths, see \cite[Theorem 3.1]{friz2020course}, it follows that for all $\alpha \in (1/3,1/2)$, there exist random variables $K_{\alpha}^n \in L^q$, $\mathbb{K}_{\alpha}^{\text{It\^{o}},n} \in L^{q/2}$ and $\mathbb{K}_{\alpha}^{\text{Strat},n} \in L^{q/2}$ for all $q\geq 2$, such that for all $s,t \in [0,T]$ we have \begin{equation*}
      \frac{|M^n_t-M^n_s|}{|t-s|^{\alpha}} \leq K_{\alpha}^n,\quad  \frac{|\mathbb{M}^{\text{It\^{o}},n}_{s,t}|}{|t-s|^{2\alpha}} \leq \mathbb{K}_{\alpha}^{\text{It\^{o}},n}, \quad \frac{|\mathbb{M}^{\text{Strat},n}_{s,t}|}{|t-s|^{2\alpha}} \leq \mathbb{K}_{\alpha}^{\text{Strat},n}.
  \end{equation*} Finally, for any $\epsilon > 0$, we can use the localization to conclude that for all $\alpha \in (1/3,1/2)$, and $R,n$ large enough, we have\begin{equation*}
      P\left [\sup_{0\leq s\leq t \leq T}\frac{|M_t-M_s|}{|t-s|^{\alpha}}>R\right ] \leq \frac{1}{R}E\left[K^n_{\alpha}\right ] + P[\tau_n > T] \leq \epsilon.
  \end{equation*} In the same way we can show that \begin{equation*}
      P\left [\sup_{0\leq s\leq t \leq T}\frac{|\mathbb{M}^{\text{It\^{o}}}_{s,t}|}{|t-s|^{2\alpha}}>R\right ] \leq \epsilon \text{ and } P\left [\sup_{0\leq s\leq t \leq T}\frac{|\mathbb{M}^{\text{Strat}}_{s,t}|}{|t-s|^{2\alpha}}>R\right ] \leq \epsilon.
  \end{equation*} Therefore, by definition of the spaces $\mathscr{C}^{\alpha}$ and $\mathscr{C}^{\alpha}_g$, the claim follows.
\end{proof}
\subsection{Rough paths with Lipschitz brackets}\label{Lipbracketappendix} In this section we discuss some details about rough paths with Lipschitz brackets, which were introduced in Section \ref{Presection}, Definition \ref{defLipschitzbrackets}. 
For $V=\mathbb{R}^d$, the correspondence from Lemma \ref{bijectionlemma} for the space $\mathscr{C}^{\alpha,1+}$ reads  \begin{equation*}
  \mathscr{C}^{\alpha,1+} ([0, T], \mathbb{R}^d) \longleftrightarrow
  \mathscr{C}_g^{\alpha} ([0, T], \mathbb{R}^d) \oplus\mathrm{Lip}_0([0, T],
  \mathbb{S}^d_+),\label{Bijection2}
\end{equation*} where $\mathrm{Lip}_0$ denotes the space of Lipschitz continuous paths starting from $0$. The right hand-side can be seen as a closed subspace of $C^{\alpha} \oplus C_2^{2\alpha}\oplus \mathrm{Lip}_0$, and is thus complete with respect to the norm \begin{equation*}
    \Vert \mathbf{Y}\Vert_{\alpha,1+}:= \Vert Y\Vert_{\alpha} + \Vert \mathbb{Y}^g \Vert_{2\alpha} + \Vert[\mathbf{Y}] \Vert_{\mathrm{Lip}}. \label{RPlipnorm}
\end{equation*} Moreover, we may identify $\mathrm{Lip}_0([0, T],
  \mathbb{S}^d_+)$ with $L_0^{\infty}([0,T],\mathbb{S}^d_+)$, with the Banach isometry $\Vert[\mathbf{Y}] \Vert_{\mathrm{Lip}} = \Vert\mathbf{V}^{\mathbf{Y}} \Vert_{\infty}$. The following lemma sums up the bijective relations and equivalent norms on the space of rough paths with non-decreasing Lipschitz-brackets. \begin{lemma}\label{LipRPBijectionLemma}
      Let $\alpha \in (1 / 3, 1 / 2)$ and $\mathbf{Y} \in
  \mathscr{C}^{\alpha,1+} ([0, T], \mathbb{R}^d)$. Then we have bijections $\mathscr{C}^{\alpha,1+} \longleftrightarrow
  \mathscr{C}_g^{\alpha} \oplus \mathrm{Lip}_0\longleftrightarrow
  \mathscr{C}_g^{\alpha} \oplus L^{\infty}_0$, with \begin{align*}
      \mathbf{Y}_{s,t} = ((\delta Y)_{s,t},\mathbb{Y}_{s,t}) &= ((\delta Y)_{s,t}, \overbrace{\mathrm{Anti}(\mathbb{Y}_{s,t})+(\delta Y)_{s,t}^{\otimes 2}/2}^{=\mathbb{Y}_{s,t}^g}-\frac{1}{2}(\overbrace{(\delta Y)_{s,t}^{\otimes 2}+2\mathrm{Sym}(\mathbb{Y}_{s,t}))}^{=(\delta[\mathbf{Y}])_{s,t}}) \\ & =((\delta Y)_{s,t},\mathbb{Y}_{s,t}^g-\frac{1}{2}\int_s^t \mathbf{V}_u^{\mathbf{Y}}du),
  \end{align*} for $0 \leq t \leq s \leq T$. Moreover, we have \begin{equation*}
      \Vert \mathbf{Y}\Vert_{\alpha,1+}:= \Vert Y\Vert_{\alpha} + \Vert \mathbb{Y}^g \Vert_{2\alpha} + \Vert\delta [\mathbf{Y}] \Vert_{\mathrm{Lip}} =  \Vert Y\Vert_{\alpha} + \Vert \mathbb{Y}^g \Vert_{2\alpha} + \Vert \delta \mathbf{V}^{\mathbf{Y}} \Vert_{\infty},
  \end{equation*} and the embedding $(\mathscr{C}^{\alpha,1+},\Vert \cdot \Vert_{\alpha,1+}) \hookrightarrow (\mathscr{C}^{\alpha},\Vert \cdot \Vert_{\alpha})$ is continuous. \end{lemma}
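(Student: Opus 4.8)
The plan is to deduce everything from the bijection of Lemma~\ref{bijectionlemma} by keeping track of the extra regularity of the bracket. Recall that for $V=\mathbb{R}^d$ that lemma identifies $\mathbf{Y}=(Y,\mathbb{Y})\in\mathscr{C}^{\alpha}([0,T],\mathbb{R}^d)$ bijectively with a pair $(\mathbf{Y}^g,[\mathbf{Y}])\in\mathscr{C}^{\alpha}_g([0,T],\mathbb{R}^d)\oplus C^{2\alpha}_0([0,T],\mathrm{Sym}(\mathbb{R}^d\otimes\mathbb{R}^d))$ via $\mathbb{Y}^g=\mathbb{Y}+\tfrac{1}{2}\delta[\mathbf{Y}]$, equivalently $\mathbb{Y}=\mathbb{Y}^g-\tfrac{1}{2}\delta[\mathbf{Y}]$, with $[\mathbf{Y}]$ the rough bracket of Definition~\ref{rBrackets}. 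The displayed chain of identities for $\mathbf{Y}_{s,t}$ is just this relation, rewritten using the antisymmetric/symmetric decomposition of $\mathbb{Y}_{s,t}$, the weak geometricity $\mathrm{Sym}(\mathbb{Y}^g_{s,t})=\tfrac{1}{2}(\delta Y)_{s,t}^{\otimes 2}$, and, for $\mathbf{Y}\in\mathscr{C}^{\alpha,1+}$, the absolute continuity of the Lipschitz path $[\mathbf{Y}]$, which gives $(\delta[\mathbf{Y}])_{s,t}=\int_s^t\mathbf{V}^{\mathbf{Y}}_u\,du$.

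First I would check that the bijection of Lemma~\ref{bijectionlemma} restricts to $\mathscr{C}^{\alpha,1+}\longleftrightarrow\mathscr{C}^{\alpha}_g\oplus\mathrm{Lip}_0([0,T],\mathbb{S}^d_+)$. By Definition~\ref{defLipschitzbrackets}, $\mathbf{Y}\in\mathscr{C}^{\alpha,1+}$ says precisely that the $C^{2\alpha}_0$-component $[\mathbf{Y}]$ is Lipschitz with a.e.\ derivative $\mathbf{V}^{\mathbf{Y}}$ valued in $\mathbb{S}^d_+$; since such $[\mathbf{Y}]$ is absolutely continuous with $[\mathbf{Y}]_{s,t}=\int_s^t\mathbf{V}^{\mathbf{Y}}_u\,du$, requiring $\mathbf{V}^{\mathbf{Y}}_u\in\mathbb{S}^d_+$ a.e.\ is equivalent to all increments $[\mathbf{Y}]_{s,t}$ lying in $\mathbb{S}^d_+$, i.e.\ to $[\mathbf{Y}]\in\mathrm{Lip}_0([0,T],\mathbb{S}^d_+)$ --- the implication from positive-semidefinite increments to positive-semidefinite density following by taking $h\downarrow0$ in $\tfrac{1}{h}[\mathbf{Y}]_{u,u+h}\to\mathbf{V}^{\mathbf{Y}}_u$ at Lebesgue points. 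As Lemma~\ref{bijectionlemma} already produces $\mathbf{Y}^g\in\mathscr{C}^{\alpha}_g$, this gives the first bijection.

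Next I would upgrade the bracket component by differentiation: $L\mapsto\dot L$ is a linear bijective isometry from $\mathrm{Lip}_0([0,T],\mathbb{S}^d_+)$ onto $L^\infty([0,T],\mathbb{S}^d_+)$ (written $L^\infty_0$ to match notation). A Lipschitz $L$ with $L_0=0$ is differentiable a.e.\ with $\dot L\in L^\infty$, $L_t=\int_0^t\dot L_u\,du$, and $\Vert\dot L\Vert_\infty\le\mathrm{Lip}(L)$; conversely any $V\in L^\infty([0,T],\mathbb{S}^d_+)$ gives $L_t:=\int_0^tV_u\,du\in\mathrm{Lip}_0([0,T],\mathbb{S}^d_+)$ with $\dot L=V$ a.e., and $|\dot L_u|=\lim_{h\downarrow0}|L_{u,u+h}|/h\le\mathrm{Lip}(L)$ at Lebesgue points forces $\Vert\dot L\Vert_\infty\ge\mathrm{Lip}(L)$; hence $\Vert[\mathbf{Y}]\Vert_{\mathrm{Lip}}=\Vert\mathbf{V}^{\mathbf{Y}}\Vert_\infty$. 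Composing with the first bijection yields $\mathscr{C}^{\alpha,1+}\longleftrightarrow\mathscr{C}^{\alpha}_g\oplus L^\infty_0$, and since the summands $\Vert Y\Vert_\alpha+\Vert\mathbb{Y}^g\Vert_{2\alpha}$ are common to both, the asserted equality of the two forms of $\Vert\mathbf{Y}\Vert_{\alpha,1+}$.

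Finally, for the continuous embedding $(\mathscr{C}^{\alpha,1+},\Vert\cdot\Vert_{\alpha,1+})\hookrightarrow(\mathscr{C}^{\alpha},\Vert\cdot\Vert_\alpha)$, I would bound the $\alpha$-Hölder quantities of $\mathbf{Y}=(Y,\mathbb{Y}^g-\tfrac{1}{2}\delta[\mathbf{Y}])$ by those in $\Vert\mathbf{Y}\Vert_{\alpha,1+}$: $\Vert Y\Vert_\alpha$ is literally a summand, and on the second level, using $\alpha<1/2$ so $1-2\alpha>0$ and $|t-s|\le T$,
\begin{equation*}
\Vert\mathbb{Y}\Vert_{2\alpha}\le\Vert\mathbb{Y}^g\Vert_{2\alpha}+\tfrac{1}{2}\Vert\delta[\mathbf{Y}]\Vert_{2\alpha}\le\Vert\mathbb{Y}^g\Vert_{2\alpha}+\tfrac{1}{2}\,T^{1-2\alpha}\,\Vert[\mathbf{Y}]\Vert_{\mathrm{Lip}},
\end{equation*}
since $|[\mathbf{Y}]_{s,t}|\le\Vert[\mathbf{Y}]\Vert_{\mathrm{Lip}}\,|t-s|$ and $|t-s|^{1-2\alpha}\le T^{1-2\alpha}$; hence $\Vert Y\Vert_\alpha+\Vert\mathbb{Y}\Vert_{2\alpha}\le\max\big(1,\tfrac{1}{2}T^{1-2\alpha}\big)\Vert\mathbf{Y}\Vert_{\alpha,1+}$, and applying the same bound to increments of differences of the (linear) rough-path components shows the embedding is Lipschitz for the associated metrics, hence continuous. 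There is no rough-path obstacle beyond Lemma~\ref{bijectionlemma}: the statement is a bookkeeping refinement of it, and the only inputs needing care are the classical Lipschitz--$L^\infty$ isometry and the equivalence between $[\mathbf{Y}]$ having increments in $\mathbb{S}^d_+$ and $\mathbf{V}^{\mathbf{Y}}$ being $\mathbb{S}^d_+$-valued a.e.\ (absolute continuity of Lipschitz maps, Lebesgue differentiation), together with keeping the two $\mathbb{S}^d_+$-identifications consistent.
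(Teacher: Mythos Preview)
Your proof is correct and follows essentially the same approach as the paper: both restrict the bijection of Lemma~\ref{bijectionlemma} to the Lipschitz-bracket subspace, invoke the $\mathrm{Lip}_0\cong L^\infty_0$ isometry, and bound $\Vert\mathbb{Y}\Vert_{2\alpha}\le\Vert\mathbb{Y}^g\Vert_{2\alpha}+C\Vert[\mathbf{Y}]\Vert_{\mathrm{Lip}}$ for the continuous embedding. Your version is in fact more self-contained, spelling out the $\mathbb{S}^d_+$-equivalence and the Lipschitz--$L^\infty$ isometry that the paper delegates to a citation.
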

  \begin{proof}
     The bijection $\mathscr{C}^{\alpha,1+} \longleftrightarrow
  \mathscr{C}_g^{\alpha} \oplus \mathrm{Lip}_0$ follows from Lemma \ref{bijectionlemma}, where we replace $C_0^{2\alpha}$ by $\mathrm{Lip}_0$, which follows directly by definition of the space $\mathscr{C}^{\alpha,1+}$. The second bijection, $\mathscr{C}_g^{\alpha} \oplus \mathrm{Lip}_0 \longleftrightarrow \mathscr{C}^{\alpha}_g \oplus L^{\infty}_0$ follows from the fact that $L^{\infty} \cong \mathrm{Lip}$. This and the fact that $\Vert[\mathbf{Y}] \Vert_{\mathrm{Lip}} = \Vert \mathbf{V}^{\mathbf{Y}} \Vert_{\infty} $  can for instance be found in \cite[Proposition 1.37]{friz2010multidimensional}. Finally, for all $\mathbf{Y} \in \mathscr{C}^{\alpha,1+}$, we have \begin{align*}
      \Vert \mathbf{Y} \Vert_{\alpha} :=  \Vert Y \Vert_{\alpha} +  \Vert \mathbb{Y} \Vert_{2\alpha} &= \Vert Y \Vert_{\alpha} +  \Vert \mathbb{Y}^g-\frac{1}{2}\delta[\mathbf{Y}] \Vert_{2\alpha} \\ & \leq \Vert Y \Vert_{\alpha} +  \Vert \mathbb{Y}^g \Vert_{2\alpha}+C\Vert\delta[\mathbf{Y}] \Vert_{\mathrm{Lip}}  \\ & \leq C \Vert \mathbf{Y}\Vert_{\alpha,1+},
  \end{align*} where the constant $C$ changed from the second to the last line. 
  \end{proof} \\ As discussed in Remark \ref{CorresMG}, the motivation to study the space $\mathscr{C}^{\alpha,1+}$ comes from the fact that the It\^{o} rough path lifts of local martingales, see Proposition \ref{Mglift}, constitute examples of (random) rough paths with non-decreasing Lipschitz brackets. This can also be deduced from the following general result. \begin{lemma}
    Let $\alpha \in (1/3,1/2)$ and $\mathbf{Y}^g = (Y,\mathbb{Y}^g) \in \mathscr{C}_g^{\alpha}([0,T],\mathbb{R}^d)$. For any $\mathbf{v}\in L^{\infty}([0,T],\mathbb{R}^{d \times k})$, set $\mathbf{V}:=\mathbf{v}\mathbf{v}^T $ and \begin{equation*}
        \mathbf{Y}_t:= \left (Y_t,\mathbb{Y}^g_{0,t}-\frac{1}{2}\int_0^t\mathbf{V}_sds\right ).
    \end{equation*} Then $\mathbf{Y}\in \mathscr{C}^{\alpha,1+}([0,T],\mathbb{R}^d)$ and $\mathbf{V}^{\mathbf{Y}}= \mathbf{V}$.
\end{lemma}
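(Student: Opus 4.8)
The plan is to read this statement off from the bijection of Lemma~\ref{bijectionlemma}, applied with the specific $2\alpha$-Hölder bracket candidate $\mathbb{H}_t := \int_0^t \mathbf{V}_s\,ds$, and then to upgrade the resulting bracket from merely $2\alpha$-Hölder to Lipschitz with positive semidefinite derivative. First I would make the implicit two-parameter definition precise by setting $\mathbb{Y}_{s,t} := \mathbb{Y}^g_{s,t} - \tfrac12\int_s^t \mathbf{V}_u\,du$ for $(s,t)\in\Delta_{[0,T]}$, which is consistent with the stated value of $\mathbb{Y}_{0,t}$. Writing $\mathbb{H}_t := \int_0^t\mathbf{V}_u\,du$, one notes $\mathbb{H}_0=0$, that $\mathbb{H}$ is $\mathrm{Sym}(\mathbb{R}^d\otimes\mathbb{R}^d)$-valued because $\mathbf{V}=\mathbf{v}\mathbf{v}^T$ is symmetric, and that $|\mathbf{V}_u| \le \Vert\mathbf{v}\Vert_{\infty}^2<\infty$ for a.e.\ $u$, so $\mathbb{H}\in\mathrm{Lip}_0([0,T],\mathrm{Sym}(\mathbb{R}^d\otimes\mathbb{R}^d)) \subset C_0^{2\alpha}$ (using $2\alpha<1$). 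Hence $(\mathbf{Y}^g,\mathbb{H})$ lies in the domain of the map $\mathcal{T}$ from the proof of Lemma~\ref{bijectionlemma}, and $\mathcal{T}(\mathbf{Y}^g,\mathbb{H})=\mathbf{Y}$; in particular $\mathbf{Y}=(Y,\mathbb{Y})\in\mathscr{C}^{\alpha}([0,T],\mathbb{R}^d)$, with Chen's relation \eqref{CHEN} and the $\Vert\cdot\Vert_{2\alpha}$ bound on $\mathbb{Y}$ inherited from that lemma.

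Next I would compute the rough bracket. By Definition~\ref{rBrackets}, $[\mathbf{Y}]_t = Y_{0,t}\otimes Y_{0,t} - 2\,\mathrm{Sym}(\mathbb{Y}_{0,t})$. Since $\mathbb{Y}_{0,t}=\mathbb{Y}^g_{0,t}-\tfrac12\mathbb{H}_t$ with $\mathbb{H}_t$ symmetric, $\mathrm{Sym}(\mathbb{Y}_{0,t})=\mathrm{Sym}(\mathbb{Y}^g_{0,t})-\tfrac12\mathbb{H}_t$, and weak geometricity \eqref{geometric} of $\mathbf{Y}^g$ gives $\mathrm{Sym}(\mathbb{Y}^g_{0,t})=\tfrac12 Y_{0,t}\otimes Y_{0,t}$. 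Substituting, the $Y_{0,t}\otimes Y_{0,t}$ terms cancel and $[\mathbf{Y}]_t=\mathbb{H}_t=\int_0^t\mathbf{V}_u\,du$, which is Lipschitz in $t$; hence $\mathbf{Y}\in\mathscr{C}^{\alpha,1}([0,T],\mathbb{R}^d)$ by Definition~\ref{defLipschitzbrackets}. By the Lebesgue differentiation theorem, $\mathbf{V}^{\mathbf{Y}}_t=\tfrac{d}{dt}[\mathbf{Y}]_t=\mathbf{V}_t=\mathbf{v}_t\mathbf{v}_t^T$ for a.e.\ $t$, and $\mathbf{v}_t\mathbf{v}_t^T\in\mathbb{S}^d_+$; this promotes membership to $\mathbf{Y}\in\mathscr{C}^{\alpha,1+}([0,T],\mathbb{R}^d)$ and shows $\mathbf{V}^{\mathbf{Y}}=\mathbf{V}$, completing the argument.

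I do not expect any genuine obstacle here: the statement is essentially a bookkeeping exercise built on Lemma~\ref{bijectionlemma}. The only two points deserving a sentence of care are that $\mathbf{v}\in L^\infty$ indeed forces $\mathbf{V}=\mathbf{v}\mathbf{v}^T\in L^\infty$, so that $t\mapsto\int_0^t\mathbf{V}_s\,ds$ is genuinely Lipschitz (not merely absolutely continuous), and that the a.e.\ derivative $\mathbf{v}_t\mathbf{v}_t^T$ is automatically symmetric positive semidefinite, which is precisely the condition distinguishing $\mathscr{C}^{\alpha,1+}$ from $\mathscr{C}^{\alpha,1}$.
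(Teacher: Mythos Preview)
Your proposal is correct and follows essentially the same route as the paper's own proof: invoke Lemma~\ref{bijectionlemma} with the symmetric perturbation $\mathbb{H}_t=\int_0^t\mathbf{V}_s\,ds$ to see that $\mathbf{Y}\in\mathscr{C}^\alpha$, then compute $[\mathbf{Y}]_t$ directly from Definition~\ref{rBrackets} using weak geometricity of $\mathbf{Y}^g$ to obtain $[\mathbf{Y}]_t=\int_0^t\mathbf{V}_s\,ds$, and finally observe that this is Lipschitz with positive semidefinite derivative $\mathbf{V}$. Your write-up is, if anything, slightly more explicit about the regularity checks (the $L^\infty$ bound on $\mathbf{V}$ and the Lebesgue differentiation step) than the paper's short argument.
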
 
\begin{proof}
    Defining $\mathbb{H}_{s,t} := -\frac{1}{2}\int_s^t\mathbf{V}_udu$, it follows from Lemma \ref{bijectionlemma} that $\mathbf{Y}\in \mathscr{C}^{\alpha}$. Now by Definition \ref{rBrackets}, we have \begin{align*}
        [\mathbf{Y}]_t := Y_{0,t} \otimes Y_{0,t}-2\mathrm{Sym}(\mathbb{Y}_{0,t}) & = Y_{0,t} \otimes Y_{0,t}-2\mathrm{Sym}(\mathbb{Y}^g_{0,t})+\int_0^t\mathbf{V}_sds \\ & = \int_0^t\mathbf{V}_sds,
    \end{align*} where we used that $\mathbb{H}_{0,t}$ is symmetric, and $\mathbf{V}:= \mathbf{v}\mathbf{v}^T \in \mathbb{S}^d_+$, and the fact that $\mathbf{Y}^g$ is a geometric rough path. But since $\mathbf{V}\in L^{\infty}([0,T],\mathbb{S}^d_+)$, it readily follows that  $\int_0^t\mathbf{V}_sds$ is Lipschitz with \begin{equation*}
        \mathbf{V}^{\mathbf{Y}}_t:= \frac{d[\mathbf{Y}]_t}{dt}=\mathbf{V}_t\in \mathbb{S}^d_+,
    \end{equation*} which finishes the proof.
\end{proof}
\begin{remark}
    In the case of scalar rough path spaces, that is $d=1$, it is not hard to
see that every $\alpha$-H{\"o}lder continuous path $Y: [0, T]
\rightarrow \mathbb{R}$ has a trivial geometric lift given by $\mathbb{Y}_{s,
t} \assign \frac{1}{2} (Y_{s, t})^2$. Thus, one can simply identify $\mathscr{C}^{\alpha}_g ([0, T], \mathbb{R})$ with
$C^{\alpha} ([0, T], \mathbb{R})$, and in view of the one-to-one correspondences above, we have $ \mathscr{C}^{\alpha,1+} \longleftrightarrow
  C^{\alpha}([0,T],\mathbb{R})\oplus\mathrm{Lip}_0([0,T],\mathbb{R}_+)\longleftrightarrow
  C^{\alpha}([0,T],\mathbb{R})\oplus L_0^{\infty}([0,T],\mathbb{R}_+)$, and the norms can be reduced to \begin{equation*}
       \Vert \mathbf{Y}\Vert_{\alpha,1+}:= \Vert Y\Vert_{\alpha} + \Vert[\mathbf{Y}] \Vert_{\mathrm{Lip}} =  \Vert Y\Vert_{\alpha} +  \Vert \mathbf{V}^{\mathbf{Y}} \Vert_{\infty}.
  \end{equation*}
\end{remark}

\subsection{Proof of Theorem \ref{consistencycoro}}\label{RSDEAPPendix} In this section we prove a multivariate generalization of Theorem \ref{consistencycoro}. Let $\mathbf{Y}\in \mathscr{C}^{\alpha,1+}([0,T],\mathbb{R}^d)$ for some $d\geq 1$ and $\alpha \in (1/3,1/2)$, and recall the notation in Definition \ref{defLipschitzbrackets}\begin{equation*}
    \left ([\mathbf{Y}]_t,\frac{d[\mathbf{Y}]_t}{dt},\sqrt{\frac{d[\mathbf{Y}]_t}{dt}} \right )=: \left ( \int_0^t\mathbf{V}^{\mathbf{Y}}_sds,\mathbf{V}^{\mathbf{Y}}_t,\mathbf{v}^{\mathbf{Y}}_t\right ).\end{equation*}
 Let $d_X \geq 1$ and consider two functions $f:[0,T]\times \mathbb{R}^{d_X} \longrightarrow \mathbb{R}^{d_X\times d}$ and $g:[0,T]\times \mathbb{R}^{d_X} \longrightarrow \mathbb{R}^{d_X\times d}$ in $C_b^3$, and let $B$ be a $d$-dimensional Brownian motion. As motivated in Section \ref{multivariatesection}, the $d_X$-dimensional version of \eqref{RSDErep4} is given by \begin{equation}
    \begin{aligned}
        X_t^{t, x, \tmmathbf{Y}} & =  x \in \mathbb{R}^{d_X}, \quad dX^{t, x, \tmmathbf{Y}}_s  = f( {s, X_s^{t, x, \tmmathbf{Y}}}  )d\mathbf{Y}_s + g(s, X^{t, x, \tmmathbf{Y}}_s)
  \mathbf{v}^{\mathbf{Y}}_s dB_s,  \quad t<s\le T.
  \label{HRSDEA}
    \end{aligned}
\end{equation}

Now consider the $d$-dimensional local martingale \begin{equation}
    M^{\mathbf{Y}}_t = \int_0^t\mathbf{v}^{\mathbf{Y}}_sdB_s = \left ( \sum_{j=1}^d\int_0^t(\mathbf{v}^{\mathbf{Y}}_s)^{ij}dB^j_s\right )_{i=1}^d\label{MGG}.
\end{equation} The general version of the joint-lift of $\begin{pmatrix}
    M^{\mathbf{Y}} \\ Y
\end{pmatrix}\in \mathbb{R}^{2d}$ from Definition \ref{jointlift} is given by
\begin{equation}
  Z_{s,t}^{\mathbf{Y}}:=\begin{pmatrix}
    M_{s,t}^{\mathbf{Y}} \\ Y_{s,t}
\end{pmatrix}, \quad \mathbb{Z}^{\mathbf{Y}}_{s,t} \assign \left(\begin{array}{cc}
    \int_s^t M^{\mathbf{Y}}_{s,r}\otimes dM^{\mathbf{Y}}_r & \int_s^t M^{\mathbf{Y}}_{s,r}\otimes dY_r\\
    \int_s^t Y_{s,r}\otimes dM^{\mathbf{Y}}_r & \mathbb{Y}_{s,t}
  \end{array}\right), \quad 0\leq s\leq t \leq T, \label{lift_appendix}
\end{equation} where the first entry is the (canonical) It{\^o} rough path lift of the local
martingale $M^{\mathbf{Y}}$, see Proposition \ref{Mglift}, $\int_s^t Y_{s,r} \otimes dM^{\mathbf{Y}}_r$ is a well-defined It{\^o}
integral, and set $\int_s^t  M^{\mathbf{Y}}_{s,r}\otimes dY_r : = M^{\mathbf{Y}}_{s,t}\otimes Y_{s,t} - \int_s^tY_{s,r}\otimes dM^{\mathbf{Y}}_r$, imposing integration by parts. A general version of Theorem \ref{consistencycoro} can be stated as follows. \begin{theorem}
  \label{consistencycoroA}Let $\alpha \in (1 / 3, 1 / 2)$ and $\tmmathbf{Y} \in
  \mathscr{C}^{\alpha,1+} ([0, T], \mathbb{R}^d)$, and $M^{\mathbf{Y}}$ given in \eqref{MGG}. Then $\tmmathbf{Z}^{\mathbf{Y}} (\omega) = (Z^{\mathbf{Y}}(\omega),\mathbb{Z}^{\mathbf{Y}}(\omega))$ defines an
  $\alpha'$-H{\"o}lder rough path for any $\alpha' \in (1 / 3, \alpha)$, and for $f,g \in C_b^3$, there exists a unique
  solution to the rough differential equation
  \begin{equation}
    X^{t,x,\mathbf{Y}}_t=x  \in \mathbb{R}^{d_X}, \quad dX^{t, x, \tmmathbf{Y}}_s (\omega) = (g, f) (s, X_s^{t,
    x, \tmmathbf{Y}} (\omega)) d \tmmathbf{Z  }^{\mathbf{Y}}_s (\omega), \quad t <s \leq T, \label{RDEStrA}
  \end{equation}
  for almost every $\omega$, and $X^{t,x,\mathbf{Y}}$ defines a time-inhomogeneous Markov process. Moreover, let $W$ be another $d$-dimensional Brownian motion independent of $B$, and assume $v$ is $\mathbb{R}^{d\times d}-$valued and $(\mathcal{F}^W_t)-$adapted such that Assumption \ref{ass1} holds true. For $I_t = \int_0^tv_sdW_s$, if we choose $\tmmathbf{Y} =
  \tmmathbf{I} (\omega)$ the It\^{o} rough path lift of $I$, then, for a.e. $\omega$ we have \begin{equation*}
      \mathrm{Law}\left (\left. X^{t,x} \right |\mathcal{F}^W_T\lor \mathcal{F}_t^B\right)(\omega) = \mathrm{Law}\left (\left. X^{t,x,\mathbf{I}}\right |\mathcal{F}^W_T \lor \mathcal{F}_t^B\right)(\omega) =\mathrm{Law}(\left. X^{t,x,\mathbf{Y}})\right |_{\mathbf{Y=\mathbf{I}(\omega)}}, 
  \end{equation*}where $X^{t,x}$ is the unique strong solution to \begin{equation}
    \begin{aligned}
      X_t^{t, x} = x, \quad dX_s^{t, x}  = f(s,
  X_s^{t, x}) v_s dW_s + g(s, X_s^{t, x}) v_s dB_s , \quad t< s \leq T.\label{dynamicsA}
    \end{aligned}
\end{equation} Finally, if $v\in \mathbb{S}^{d}_+$, then we have indistinguishability $X^{t,x}(\omega) = X^{t,x,\mathbf{I}(\omega)}$ for a.e. $\omega$. 
\end{theorem}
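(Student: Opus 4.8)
The plan is to establish the four assertions in turn: that $\mathbf{Z}^{\mathbf{Y}}$ is a rough path; well‑posedness and the Markov property of the RDE \eqref{RDEStrA}; the two conditional‑law identities; and the indistinguishability statement when $v$ is symmetric positive semidefinite. Fix $\alpha'\in(1/3,\alpha)$. For the first point, since $[\mathbf{Y}]$ is Lipschitz, $\mathbf{v}^{\mathbf{Y}}$ is bounded, and as $\mathbf{Y}$ is deterministic the martingale $M^{\mathbf{Y}}=\int\mathbf{v}^{\mathbf{Y}}dB$ falls under Proposition \ref{Mglift}, so its canonical It\^o lift lies a.s.\ in $\mathscr{C}^{\alpha'}$; the $\mathbb{Y}$‑block is $2\alpha$‑H\"older by hypothesis. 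For the mixed blocks, $\int_s^tY_{s,r}\otimes dM^{\mathbf{Y}}_r$ is a genuine It\^o integral, and a Burkholder--Davis--Gundy estimate using $|Y_{s,r}|\le\|Y\|_\alpha|r-s|^\alpha$ together with the rough‑path Kolmogorov criterion (exactly as in the proof of Proposition \ref{Mglift}) produces an a.s.\ H\"older exponent arbitrarily close to $\alpha+\tfrac12$, hence $\ge2\alpha'$ because $2\alpha'<2\alpha\le\alpha+\tfrac12$; the complementary block, being defined by integration by parts, inherits the bound from $M^{\mathbf{Y}}_{s,t}\otimes Y_{s,t}$ and the previous block. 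Chen's relation holds on the diagonal blocks by Proposition \ref{Mglift} and the hypothesis on $\mathbf{Y}$, and on the off‑diagonal blocks it follows from additivity of the It\^o integral combined with $Y_{s,r}=Y_{s,u}+Y_{u,r}$ (and from integration by parts for the transposed block).

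Next, with $\mathbf{Z}^{\mathbf{Y}}(\omega)\in\mathscr{C}^{\alpha'}$ a.s., $\alpha'>1/3$, and $(g,f)\in C^3_b$ — comfortably within the regularity needed for RDE well‑posedness — existence and uniqueness of the solution to \eqref{RDEStrA}, and the flow identity $X^{t,x,\mathbf{Y}}_s=X^{r,X^{t,x,\mathbf{Y}}_r,\mathbf{Y}}_s$ for $t\le r\le s$, are standard rough‑path theory. The Markov property then follows because $\mathbf{Y}$ is deterministic: the increments of $\mathbf{Z}^{\mathbf{Y}}$ over $[r,s]$ are independent of $\mathcal{F}^B_r$, so conditionally on $X^{t,x,\mathbf{Y}}_r=y$ the restriction to $[r,s]$ is distributed as $X^{r,y,\mathbf{Y}}$; time‑inhomogeneity reflects the explicit time‑dependence of $f,g,\mathbf{v}^{\mathbf{Y}}$. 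The second equality in the theorem is the disintegration already used in the proof of Theorem \ref{RPDEpricethm}: writing $X^{t,x,\mathbf{I}}=\Psi(\mathbf{I},B)$ for a measurable $\Psi$, using that $\mathbf{I}$ is $\mathcal{F}^W_T$‑measurable, that $B$ is independent of $W$, and that the solution on $[t,T]$ uses only the $B$‑increments over $[t,T]$ (independent of $\mathcal{F}^B_t$), one obtains $\mathrm{Law}(X^{t,x,\mathbf{I}}\mid\mathcal{F}^W_T\vee\mathcal{F}^B_t)(\omega)=\mathrm{Law}(\Psi(\mathbf{Y},B))\big|_{\mathbf{Y}=\mathbf{I}(\omega)}=\mathrm{Law}(X^{t,x,\mathbf{Y}})\big|_{\mathbf{Y}=\mathbf{I}(\omega)}$.

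For the first equality, I would first identify the RSDE solution with an It\^o SDE. Because the joint lift of Definition \ref{jointlift} is assembled entirely from It\^o integrals (and integration by parts), and because the cross‑bracket $[I,M^{\mathbf{I}}]$ vanishes ($I$ is $W$‑driven, $M^{\mathbf{I}}$ is $B$‑driven), the consistency of rough integration against an It\^o lift with It\^o integration — the lemma invoked in the proof of Theorem \ref{RPDEmain}, cf.\ also \cite{diehl2015levy,friz2023rough} — shows that $X^{t,x,\mathbf{I}}$ solves a.s.\ the genuine It\^o SDE $dX_s=f(s,X_s)\,dI_s+g(s,X_s)\,\mathbf{v}^{\mathbf{I}}_s\,dB_s=f(s,X_s)\,v_s\,dW_s+g(s,X_s)\,\mathbf{v}^{\mathbf{I}}_s\,dB_s$, with no correction term. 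Comparing with \eqref{dynamicsA}, $X^{t,x}$ and $X^{t,x,\mathbf{I}}$ thus solve It\^o SDEs driven by $(W,B)$ that coincide except in the $B$‑coefficient — $v_s$ versus $\mathbf{v}^{\mathbf{I}}_s$ — and these have the same square $v_sv_s^T=\mathbf{v}^{\mathbf{I}}_s(\mathbf{v}^{\mathbf{I}}_s)^T=V_s$ by \eqref{consistencyIv}. It then remains to pass to conditioning on $\mathcal{F}^W_T$, under which $v,V,I$ become deterministic while $B$ stays a Brownian motion (independence): both processes become strong solutions of one and the same class of RSDE driven by $B$ with the frozen rough input $\mathbf{I}(\omega)$ and a $B$‑coefficient equal to an arbitrary measurable square root of $V(\omega)$, and since uniqueness in law for such equations is insensitive to the choice of square root (a deterministic orthogonal change of the driving Brownian motion), the conditional laws agree; enlarging the conditioning by $\mathcal{F}^B_t$ changes nothing, as the solutions on $[t,T]$ see only the $B$‑increments over $[t,T]$.

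Finally, when $v\in\mathbb{S}^d_+$ one has $v_sv_s^T=v_s^2$, whose unique positive semidefinite square root is $v_s$, so $\mathbf{v}^{\mathbf{I}}=v$ and $X^{t,x},X^{t,x,\mathbf{I}}$ solve the identical It\^o SDE, hence are indistinguishable by pathwise uniqueness. The main obstacle is exactly the conditional identification in the previous paragraph: recognizing the genuine It\^o solution $X^{t,x}$ — whose $W$‑integral $\int f(s,X_s)v_s\,dW_s$ has no classical meaning once $W$ is frozen — as a solution of the RSDE with the frozen rough path $\mathbf{I}(\omega)$. The two ingredients that make this work are the absence of an It\^o--Stratonovich‑type correction when integrating against the It\^o lift $\mathbf{I}$ (hinging on $[I,M^{\mathbf{I}}]=0$) and the square‑root invariance of the law of the $B$‑driven part; the remaining effort lies in the rigorous conditional bookkeeping, either via the stochastic‑sewing approach of \cite{friz2021rough} or via a Wong--Zakai approximation of $W$ (with the geometric‑to‑It\^o correction of the lift handled as in the proof of Theorem \ref{RPDEmain}), together with measurable selection of square roots and control of $\omega$‑dependent null sets.
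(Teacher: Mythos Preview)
Your proposal is correct and largely follows the paper's route: the BDG/Kolmogorov argument for the joint lift, standard RDE well-posedness, the Markov property via $\mathcal{F}^B_{t,s}$-measurability of $\mathbf{Z}^{\mathbf{Y}}|_{[t,s]}$ and the flow property, the consistency of rough and It\^o integration hinging on $[I,M^{\mathbf{I}}]=0$, and the disintegration for the second law identity all match the paper's Lemmas \ref{roughIto}--\ref{MarkovProperty} and the final proof.

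The one place where the paper is more direct is the first conditional-law identity. You try to freeze $W$ and then recognize the genuine It\^o solution $X^{t,x}$ as an RSDE with rough driver $\mathbf{I}(\omega)$, which you rightly flag as the ``main obstacle'' and propose to handle via Wong--Zakai or stochastic sewing. The paper avoids this obstacle entirely: having already shown that $X^{t,x,\mathbf{I}}$ is indistinguishable from the genuine It\^o solution $\hat{X}^{t,x}$ of $d\hat{X}=f(s,\hat{X})v_s\,dW_s+g(s,\hat{X})\sqrt{V_s}\,dB_s$, it observes that, conditionally on $\mathcal{F}^W_T\vee\mathcal{F}^B_t$, the processes $\int_t^\cdot v_s\,dB_s$ and $\int_t^\cdot\sqrt{V_s}\,dB_s$ are both Gaussian with the same covariance $\int_t^\cdot V_s\,ds$, hence have the same conditional law jointly with $\int_t^\cdot v_s\,dW_s$. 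Strong uniqueness of the SDE \eqref{dynamicsA} then transfers this to $\mathrm{Law}(\hat{X}^{t,x}\mid\mathcal{F}^W_T\vee\mathcal{F}^B_t)=\mathrm{Law}(X^{t,x}\mid\mathcal{F}^W_T\vee\mathcal{F}^B_t)$. So the paper compares two genuine It\^o SDEs via their driving noises rather than trying to interpret $X^{t,x}$ pathwise after freezing $W$; this eliminates the need for any Wong--Zakai or stochastic-sewing detour.
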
 We split the proof of Theorem \ref{consistencycoroA} into the following three lemmas.

\begin{lemma}
  \label{roughIto}Consider a $d$-dimensional local martingale of the form \begin{equation*}
      M_t=\int_0^t\sigma_sdB_s,
  \end{equation*}where $\sigma$ is progressively measurable and in $L^{\infty}([0,T],\mathbb{R}^{d\times d})$ almost surely. Then \begin{enumerate}
      \item The lift $\tmmathbf{Z}^{\mathbf{Y}} (\omega)$, similarly constructed as in \eqref{lift_appendix} with $M$, defines an
  $\alpha'$-H{\"o}lder rough path for any $\alpha' \in (1 / 3, \alpha)$, that is $\mathbf{Z}^{\mathbf{Y}} \in \mathscr{C}^{\alpha'}([0,T],\mathbb{R}^{2d})$.
  \item Consider a pathwise controlled rough path $(A (\omega), A'
  (\omega)) \in \mathscr{D}_{\mathbf{Z}^{\mathbf{Y}} (w) }^{2 \alpha'}([0,T],\mathbb{R}^{d_X \times 2d})$, see \cite[Chapter 4]{friz2020course} for the definition of the space of controlled rough paths $\mathscr{D}$. Then, the rough integral \begin{equation*}
      \mathcal{J} (T, \omega, \tmmathbf{Y})  \assign \int_0^T A_s (\omega)d\mathbf{Z}_s^{\mathbf{Y}}(\omega) = \lim_{| \pi | \rightarrow 0}\sum_{[u, v] \in \pi} A_u (\omega) Z_{u, v} (\omega) + A'_u (\omega)
     \mathbb{Z}_{u, v} (\omega)
  \end{equation*}
  exists on a set $\Omega_0 \subseteq \Omega$ with full measure.
  \end{enumerate}
   
\end{lemma}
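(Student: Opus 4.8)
The plan is to treat the two assertions separately, the first being the substantial one. For Part~1 I would follow the template of the proof of Proposition~\ref{Mglift}: first localize by the stopping times $\tau_n := \inf\{t \geq 0 : \|\sigma\|_{\infty;[0,t]} \geq n\}$ (so that the a.s.\ boundedness of $\sigma$ becomes a uniform bound), establish moment estimates for the components of $\mathbb{Z}^{\mathbf{Y}}$ on the localized process $M^n := M_{\cdot\wedge\tau_n}$, apply the Kolmogorov criterion for rough paths \cite[Theorem 3.1]{friz2020course}, and finally let $n \to \infty$ exactly as in that proof. The four blocks of $\mathbb{Z}^{\mathbf{Y}}$ are handled as follows. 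The top-left block is the Itô lift $\mathbb{M}^{\text{Itô}}$ of $M$, which lies in $\mathscr{C}^{\alpha'}$ for every $\alpha' \in (1/3,1/2)$ by Proposition~\ref{Mglift}; the bottom-right block is $\mathbb{Y}$, which is $2\alpha$-Hölder by hypothesis. For the bottom-left cross term, BDG gives, on the localized process and for $q \geq 2$,
\[
E\Big[\Big|\int_s^t Y_{s,r} \otimes dM_r\Big|^{q}\Big] \lesssim E\Big[\Big(\int_{s\wedge\tau_n}^{t\wedge\tau_n} |Y_{s,r}|^2 |\sigma_r|^2\, dr\Big)^{q/2}\Big] \lesssim n^q \|Y\|_\alpha^q\, |t-s|^{q(\alpha + 1/2)},
\]
and since $\alpha + 1/2 > 2\alpha$ this controls the $2\alpha'$-Hölder norm for any $\alpha' < \alpha$. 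The top-right block is defined through integration by parts, $\int_s^t M_{s,r} \otimes dY_r := M_{s,t} \otimes Y_{s,t} - \int_s^t Y_{s,r} \otimes dM_r$, so its regularity follows from that of the product $M_{s,t}\otimes Y_{s,t}$ (of order $\alpha' + \alpha \geq 2\alpha'$) together with the previous bound.

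Next I would verify Chen's relation \eqref{CHEN} block by block. The diagonal blocks are immediate from the corresponding relations for $\mathbb{M}^{\text{Itô}}$ and $\mathbb{Y}$. For the bottom-left block, additivity of the Itô integral together with $Y_{s,r} = Y_{s,u} + Y_{u,r}$ for $r \in [u,t]$ yields $\int_s^t Y_{s,r}\otimes dM_r = \int_s^u Y_{s,r}\otimes dM_r + \int_u^t Y_{u,r}\otimes dM_r + Y_{s,u} \otimes M_{u,t}$, which is precisely the required identity with the correct cross term in the $(2,1)$-block (writing $Z^1 = M$, $Z^2 = Y$); the top-right block then follows by subtracting this from the algebraic identity for $M_{s,t}\otimes Y_{s,t}$, consistently with the integration-by-parts definition. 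Combining the moment estimates with the rough-path Kolmogorov criterion and the localization limit then shows that $\mathbf{Z}^{\mathbf{Y}}(\omega) \in \mathscr{C}^{\alpha'}([0,T],\mathbb{R}^{2d})$ for a.e.\ $\omega$ and every $\alpha' \in (1/3,\alpha)$.

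For Part~2, once $\mathbf{Z}^{\mathbf{Y}}(\omega) \in \mathscr{C}^{\alpha'}$ on the full-measure set $\Omega_0$ produced in Part~1 (intersected with the full-measure set on which $(A(\omega),A'(\omega))$ is indeed a controlled rough path, as built into the hypothesis), the claim is purely deterministic for each fixed $\omega \in \Omega_0$: since $3\alpha' > 1$, the compensated Riemann sums $\sum_{[u,v]\in\pi}\big(A_u(\omega) Z_{u,v}(\omega) + A'_u(\omega)\mathbb{Z}_{u,v}(\omega)\big)$ converge as $|\pi| \to 0$ by the sewing lemma and the fundamental theorem of rough integration \cite[Chapter 4]{friz2020course}, which defines $\mathcal{J}(T,\omega,\mathbf{Y})$ on $\Omega_0$. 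The main obstacle is thus concentrated in Part~1, and within it in two points: making the localization interact cleanly with the BDG/Kolmogorov estimates for the mixed block $\int_s^t Y_{s,r}\otimes dM_r$, so that only the a.s.\ boundedness of $\sigma$ (not a uniform one) is used; and checking Chen's relation for the two off-diagonal blocks coherently with the integration-by-parts convention. Everything else is a routine transcription of the argument for Proposition~\ref{Mglift} and of standard rough-integration theory.
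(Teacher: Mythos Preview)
Your proposal is correct and follows essentially the same approach as the paper: the same localization via $\tau_n$, the same BDG-based moment estimates for the cross block $\int_s^t Y_{s,r}\otimes dM_r$ (yielding order $\alpha+1/2>2\alpha'$), the same integration-by-parts treatment of the other cross block, and the same appeal to the rough-path Kolmogorov criterion followed by the rough-integration existence theorem for Part~2. The only difference is that you spell out the verification of Chen's relation for the off-diagonal blocks, which the paper leaves implicit.
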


\begin{proof}
  First we show that \tmtextbf{}$\tmmathbf{Z}^{\mathbf{Y}}(\omega)$ indeed defines a
  (random) $\alpha'$-H{\"o}lder rough path for any $\alpha' \in (1 / 3,
  \alpha)$. Notice that the proof uses similar techniques as in Proposition \ref{Mglift}. Define the stopping time \begin{equation*}
      \tau _n \assign \inf \left\{ t \geq 0 :
  \Vert \sigma\Vert_{\infty;[0,t]} \geq n \right\}.
  \end{equation*} For any $q \geq 2,$ we
  can apply the Burkholder-Davis-Gundy inequality, see \cite[Theorem 18.7]{kallenberg1997foundations}, to find \begin{equation*}
      E \left[ \left| \int_s^t {\sigma^{\tau_n}_u}^{}  dB_u \right|^q
     \right]^{1 / q} \leq C_q n  (t - s)^{1 / 2}  \infixand | Y_{s, t}
     | \leq C | t - s |^{\alpha},
  \end{equation*} for some constants $C_q$ and C. Moreover, we clearly have $\| \mathbb{Y}_{s,
  t} \|_{L^{q / 2}} \leq C_1 | t - s |^{2 \alpha}$ for all $q \geq 2$. From Proposition \ref{Mglift}, we already know that \begin{equation*}
      \left\| \int_s^t M^{\tau_n}_{s, r} \otimes dM^{\tau_n}_r
     \right\|_{L^{q / 2}} \leq C_2|t-s|^{2\alpha},
  \end{equation*} Similarly, we find \begin{equation*}
       \left\| \int_s^t Y_{s, u} \otimes {dM^{\tau_n}_u}  \right\|_{L^{q
     / 2}} \lesssim \| Y \|_{\alpha }  (t - s)^{\alpha + 1/2}
  \end{equation*} and \begin{equation*}
      \left\| \int_s^t  M^{\tau_n}_{s, u} \otimes dY_u  \right\|_{L^{q /
     2}} \leq \|  M^{\tau_n}_{s, t}\otimes  Y_{s, t} \|_{L^{q / 2}} +
     \left\| \int_s^t Y_{s, u} \otimes {dM^{\tau_n}_u}  \right\|_{L^{q
     / 2}} \lesssim (t - s)^{2 \alpha}.
  \end{equation*} Applying Kolomogorov for rough paths, that is \cite[Theorem 3.1]{friz2020course}, it
  follows that $\tmmathbf{Z}^{\mathbf{Y}}_{\tau_n \wedge \cdot} (\omega)$ indeed defines
  a $\alpha'$-H{\"o}lder rough path for any $\alpha' \in (1 / 3, \alpha)$. Using exactly the same localization argument as in Proposition \ref{Mglift}, we
  can conlude that $\mathbf{Z}^{\mathbf{Y}}$ defines a $\alpha'$-H{\"o}lder rough path. By
  \cite[Theorem 4.10]{friz2020course}, the rough integral $\mathcal{J} (T, \omega,
  \tmmathbf{Y})$ is therefore well-defined on a set with full measure.
\end{proof} 
\begin{remark} We remark that it is actually possible to choose $\alpha'=\alpha$ in (i) of Lemma \ref{roughIto}, which however would require a more refined version of the Kolmogorov criterion with inhomogeneous exponents. Providing such a result here is outside of the scope the paper, and since the need of an $\alpha'<\alpha$ does not affect our main results, we leave this as a remark here.
\end{remark}
\begin{lemma}\label{consistenyALemma} Let $M$ be the $d$-dimensional local martingale from Lemma \ref{roughIto}, and let
  $N$ be another $d$-dimensional local martingale, such that $[M, N] = 0$. Consider two paths $A:[0,T]\rightarrow \mathbb{R}^{d_X\times 2d}$ and $A':[0,T] \rightarrow \mathbb{R}^{d_X} \otimes \mathbb{R}^{2d\times 2d}$. If $(A,A')$ is an adapted
  and continuous controlled rough path $(A, A')\in \mathscr{D}_{\mathbf{Z}^{\mathbf{Y}}(\omega)}^{\alpha'}([0,T],\mathbb{R}^{d_X \times 2d})$ for any $\alpha' \in (1/3,\alpha)$, then it holds for almost every $\omega$ that
  \begin{equation}
    \mathcal{J} (T, \omega, \tmmathbf{Y})|_{\tmmathbf{Y} = \tmmathbf{N}
    (\omega)} = \left (\int_0^TA_sdZ^{\mathbf{N}}_s\right )(\omega) =  \left (\int_0^TA^{1}_s dN_s \right)(\omega)+\left (\int_0^TA^{2}_s dM_s\right )(\omega)\in \mathbb{R}^{d_X}, \label{genuineRDE}
  \end{equation}
  where $A^1,A^2 \in \mathbb{R}^{d_X \times d}$ such that $A=(A^1,A^2)$, and $\mathcal{J}$ is defined in Lemma \ref{roughIto}. The integrals on the right-hand side are It{\^o} integrals, and
  $\tmmathbf{N}$ is the It{\^o} rough path lift of $N$.
\end{lemma}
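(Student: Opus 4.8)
The plan is to recognise that, after the substitution $\mathbf{Y}=\mathbf{N}(\omega)$, the joint lift $\mathbf{Z}^{\mathbf{N}}(\omega)$ is nothing but the canonical It\^o rough path lift of the $2d$-dimensional continuous local martingale $L:=(M,N)$, and then to invoke the standard consistency between rough integration against the It\^o lift of a continuous local martingale and It\^o stochastic integration. Decomposing $L=(M,N)$ and $A$ into their $d$-dimensional blocks then produces the two It\^o integrals in \eqref{genuineRDE}.

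\textbf{Step 1 (identification of $\mathbf{Z}^{\mathbf{N}}$).} Setting $\mathbf{Y}=\mathbf{N}(\omega)$ in \eqref{lift_appendix}, the diagonal blocks of the second level are by construction the genuine It\^o iterated integrals $\int_s^t M_{s,r}\otimes dM_r$ and $\int_s^t N_{s,r}\otimes dN_r$, and $\int_s^t N_{s,r}\otimes dM_r$ is a genuine It\^o integral as well. The only entry fixed ``by convention'' is $\int_s^t M_{s,r}\otimes dN_r:=M_{s,t}\otimes N_{s,t}-\int_s^t N_{s,r}\otimes dM_r$; but since $[M^i,N^j]=0$ for all $i,j$, It\^o's product rule reads $M^i_{s,t}N^j_{s,t}=\int_s^t M^i_{s,r}\,dN^j_r+\int_s^t N^j_{s,r}\,dM^i_r$ with no bracket term, so this postulated value coincides with the true It\^o integral $\int_s^t M^i_{s,r}\,dN^j_r$. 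Hence $\mathbf{Z}^{\mathbf{N}}(\omega)=\mathbf{L}(\omega)$, the It\^o lift of $L$, which is a.s. an $\alpha'$-H\"older rough path for all $\alpha'\in(1/3,\alpha)$ (Proposition \ref{Mglift} / Lemma \ref{roughIto}(i) applied to $L$, the hypothesis $[M,N]=0$ guaranteeing that $[L]$ has the required block-diagonal form). It then remains to show that the rough integral $\mathcal{J}(T,\omega,\mathbf{Y})|_{\mathbf{Y}=\mathbf{N}(\omega)}=\int_0^T A_s\,d\mathbf{L}_s(\omega)$ agrees a.s. with the It\^o integral $\int_0^T A_s\,dL_s(\omega)$; splitting $A$ into the blocks paired with the $M$- and $N$-slots of $Z^{\mathbf{N}}=(M,N)$ then gives the sum of the two It\^o integrals in \eqref{genuineRDE}.

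\textbf{Step 2 (rough vs. It\^o).} This is the heart of the proof; it is the continuous-local-martingale analogue of the Brownian statement in \cite[Chapter 5]{friz2020course}. Localise by the stopping times $\tau_n$ of Lemma \ref{roughIto} together with an additional truncation, so that on the stopped interval $\sigma$, $A$, $A'$ are bounded, $L$ is a true $L^2$-martingale, and the rough path bounds of Lemma \ref{roughIto} hold. By Lemma \ref{roughIto}(ii) ($=$ \cite[Theorem 4.10]{friz2020course}) the rough integral is the limit, over all partitions and in particular along any sequence $\pi_k$ with $|\pi_k|\to0$, of the compensated sums $\sum_{[u,v]\in\pi_k}(A_uL_{u,v}+A'_u\mathbb{L}_{u,v})$. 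The left-point sums $\sum_{[u,v]\in\pi_k}A_uL_{u,v}$ converge in $L^2$ to the It\^o integral $\int_0^T A_s\,dL_s$, because $A$ is adapted and continuous. It therefore suffices to show $R_k:=\sum_{[u,v]\in\pi_k}A'_u\mathbb{L}_{u,v}\to0$ in probability; using $\mathbb{L}_{u,v}=\int_u^v L_{u,r}\otimes dL_r$ one writes $R_k=\int_0^T A'_{\pi_k(r)}\,(L_r-L_{\pi_k(r)})\otimes dL_r$, with $\pi_k(r)$ the left endpoint of the mesh interval of $\pi_k$ containing $r$, and since (after localisation) this integrand is bounded and converges to $0$ uniformly on $[0,T]$ by continuity of $L$ and $A'$, dominated convergence for It\^o integrals gives $R_k\to0$ in $L^2$. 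Thus the two limits agree on $\{\tau_n\ge T\}$; letting $n\to\infty$ ($\tau_n\to\infty$ a.s.) removes the localisation.

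\textbf{Main obstacle.} I expect the delicate point to be the bookkeeping of the ``pathwise versus random'' substitution rather than any new estimate: $\mathcal{J}(T,\omega,\mathbf{Y})$ is a priori defined for a deterministic argument $\mathbf{Y}$ (with an $\omega$-dependent integrand), and one must justify evaluating it at $\mathbf{Y}=\mathbf{N}(\omega)$. This is handled by noting that the rough-integral map $(\mathbf{Z},(A,A'))\mapsto\int A\,d\mathbf{Z}$ is locally Lipschitz, hence jointly Borel, on the (Polish) space of rough-path / controlled-rough-path pairs (\cite[Theorem 4.10]{friz2020course}), and that $\omega\mapsto(\mathbf{Z}^{\mathbf{N}(\omega)}(\omega),(A(\omega),A'(\omega)))$ is measurable into this space; the identity of Step 2, carried out directly along the fixed realisation $\mathbf{L}(\omega)$, then holds on a full-measure set. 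The componentwise product-rule check in Step 1 and the localisation bookkeeping in Step 2 are otherwise routine.
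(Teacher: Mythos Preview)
Your proof is correct and follows essentially the same strategy as the paper: identify $\mathbf{Z}^{\mathbf{N}}$ with the It\^o lift of the $2d$-dimensional local martingale $L=(M,N)$ via $[M,N]=0$, then show that the second-order correction $\sum_{[u,v]\in\pi}A'_u\,\mathbb{L}_{u,v}$ vanishes in $L^2$ after localisation, so that rough and It\^o integrals agree. The only notable difference is in the execution of this last step: the paper expands the square of the correction sum, kills the cross terms via $E[\mathbb{Z}_{t_j,t_{j+1}}\mid\mathcal{F}_{t_j}]=0$, and then bounds the diagonal terms by rewriting them as the defect between $\int N\,dM$ and its Riemann approximation; you instead write the whole correction at once as the single stochastic integral $\int_0^T A'_{\pi_k(r)}\,(L_r-L_{\pi_k(r)})\otimes dL_r$ and apply the It\^o isometry together with dominated convergence. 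Your route is slightly more streamlined but buys nothing essentially new; conversely, the paper's orthogonality bookkeeping makes the martingale structure of $\mathbb{Z}$ more explicit. Your additional remark on the measurability of the substitution $\mathbf{Y}=\mathbf{N}(\omega)$ is a welcome clarification that the paper leaves implicit.
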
\begin{remark}\label{dimensionrmk}
    Let us quickly describe how to understand the two different notions of integration in \eqref{genuineRDE} in the sense of dimension. First, since $Z:=Z^{\mathbf{N}}=\begin{pmatrix}
    N \\ M
\end{pmatrix}$ is a $2d$-dimensional local martingale, the stochastic integration in \eqref{genuineRDE} can be understood for each component $i\in \{1,\dots,d_X\}$ as \begin{equation*}
        \left(\int_0^TA_sdZ_s\right)^{i}=\sum_{j=1}^{2d}\int_0^TA^{i,j}_sdZ^j_s = \sum_{j=1}^d\left (\int_0^TA^{1,ij}_sdN^{j}_s+\int_0^TA^{2,ij}_sdM^{j}_s\right ).
    \end{equation*} On the other hand, by the definition of the rough integral $\mathcal{J}$ in (ii) of Lemma \ref{roughIto}, we encounter the second order term $A'\mathbb{Z}$, where $A'$ is the \textit{Gubinelli derivative}\footnote{c.f. \cite[Definition 4.6]{friz2020course}.}. Here $A'$ takes values in $\mathbb{R}^{d_X}\otimes \mathbb{R}^{2d\times 2d} \cong \mathbb{R}^{d_X\times (2d)^2}$, and since $\mathbb{Z}:=\mathbb{Z}^{\mathbf{N}}\in \mathbb{R}^{2d\times 2d} \cong \mathbb{R}^{(2d)^2}$, the product $A'\mathbb{Z}$ lies again in $\mathbb{R}^{d_X}$, such that the left hand side of \eqref{genuineRDE} is also $\mathbb{R}^{d_X}$-valued. In the simple case $d_X=d=1$, which is of main interested in this paper, we simply have $A:[0,T] \rightarrow \mathbb{R}^2$ and $A':[0,T] \rightarrow \mathbb{R}^{2\times 2}$, and the meaning of the terms above should be clear.
\end{remark}\begin{proof} Set $Z=Z^{\mathbf{N}}$ and $\mathbb{Z}=\mathbb{Z}^{\mathbf{N}}$. Since $A$ is continuous and adapted, the stochastic integral on the right-hand side of \eqref{genuineRDE} is
  well-defined and it is given as limit in probability \begin{equation*}
      \int_{0 }^T A_s dZ_s = (p) \lim_{| \pi | \rightarrow 0} \sum_{[u,
     v] \subset \pi} A_u Z_{u, v} .
  \end{equation*}
  Moreover, there exists a subsequence of subdivisions $\pi^n$, such that the
  convergence holds almost surely. By definition of the rough integral,
  we find \begin{equation*}
      \mathcal{J} (T, \omega, \tmmathbf{Y}) |_{\tmmathbf{Y} = \tmmathbf{N}
     (\omega)}= \lim_{| \pi | \rightarrow 0} \sum_{[u, v]
     \in \pi} A_u (\omega) Z_{u, v} (\omega) + A'_u (\omega) \mathbb{Z}_{u, v}
     (\omega) . 
  \end{equation*}
  Along the subsequence $\pi^n$ we find \begin{equation*}
      \int_0^T A_s d\mathbf{Z}^{\mathbf{N}}_s - \int_0^T A_s dZ_s = \lim_{n
     \rightarrow \infty} \sum_{[u, v] \subset \pi^n} A'_u \mathbb{Z}_{u, v} .
  \end{equation*} Assume for the moment that the processes $A'$, N and M are bounded by some
  $K$. Changing the notation to $\pi^n = \{ 0 = t_0 < t_1 < \cdots < t_n = T
  \}$ and for any $r\in \{1,\dots,d_X\}$, denote by $A'^{(r)}$ the $r$-th row of the Gubinelli derivative $A'\in \mathbb{R}^{d_X \times (2d)^2}$, see also Remark \ref{dimensionrmk}. Then \begin{align*}
      E \left[ \left( \sum_{[u, v] \subset \pi^n} A'^{(r)}_u \cdot \mathbb{Z}_{u, v}  \right)^2\right] & = \sum_{j=0}^{n-1}E[(A_{t_j}'^{(r)}\mathbb{Z}_{t_j,t_{j+1}})^2] \\ & \qquad \qquad + 2 \sum_{i < j}E[(A_{t_j}'^{(r)}\mathbb{Z}_{t_j,t_{j+1}})(A_{t_i}'^{(r)}\mathbb{Z}_{t_i,t_{i+1}})].
  \end{align*} Now by definition of $\mathbb{Z}=\mathbb{Z}^{\mathbf{N}}$, see \eqref{lift_appendix}, we have \begin{equation*}
      \mathbb{Z} = \left (\begin{array}{cc}
    \int_s^t M_{s,r}\otimes dM_r & M_{s,t} \otimes N_{s,t}-\int_s^t N_{s,r}\otimes dM_r\\
    \int_s^t N_{s,r}\otimes dM_r & \int_s^tN_{s,r}\otimes dN_r
  \end{array}\right) = \left (\begin{array}{cc}
    \int_s^t M_{s,r}\otimes dM_r & \int_s^t M_{s,r}\otimes dN_r\\
    \int_s^t N_{s,r}\otimes dM_r & \int_s^tN_{s,r}\otimes dN_r
  \end{array}\right),
  \end{equation*} where the second equality follows from It\^{o}'s formula and the fact that $[M,N] = 0$. Since $A'$ is adapted, we can apply the tower-property to see that for any $t_i < t_{i + 1} \leq t_j < t_{j + 1}$ we have
  \begin{equation*}
      E[(A_{t_j}'^{(r)}\mathbb{Z}_{t_j,t_{j+1}})(A_{t_i}'^{(r)}\mathbb{Z}_{t_i,t_{i+1}})] = E[(A_{t_i}'^{(r)}\mathbb{Z}_{t_i,t_{i+1}})A_{t_j}'^{(r)}E[\mathbb{Z}_{t_j,t_{j+1}}|\mathcal{F}_{t_j}]] = 0,
  \end{equation*} where we use the martingale property for all the stochastic integral entries of $\mathbb{Z}$, to see that $E[\mathbb{Z}_{t_j,t_{j+1}}|\mathcal{F}_{t_j}] = 0$.Thus, we have \begin{equation*}
      E\left[ \left( \sum_{[u, v] \subset \pi^n} A'^{(r)}_u \mathbb{Z}_{u, v} . \right)^2\right ]
     \leq  K^2 \sum_{l=1}^{(2d)^2} \left( \sum_{j = 0}^{n-1} E \left[ ({\mathbb{Z 
     }_{t_j, t_{j + 1}}^{(l)}})^2 \right] \right).
  \end{equation*} Using standard stochastic integral properties, we have 
  \begin{align*}
    \sum_{j = 0}^{n - 1} \left( \int_{t_j}^{t_{j + 1}}  N_{t_j, u} \otimes dM_u \right) ^2 & \leq  \left( \sum_{j = 0}^{n - 1}
    \int_{t_j}^{t_{j + 1}} N_u \otimes dM_u - N_{t_j}\otimes  (M_{t_{j + 1}} - M_{t_j})
    \right)^2 \\
    & \leq  \left( \int_{0 }^T N_u \otimes dM_u - \sum_{j = 0}^{n-1}N_{t_j} \otimes (M_{t_{j +
    1}} - M_{t_j})  \right)^2  \\ &\xrightarrow{n \to \infty}  0 \quad \text{in }L^1,
  \end{align*}
  since both $M$ and $N$ are bounded. The same holds true when we exchange the
  positions of $M$ and $N$, and also if $M = N$. Therefore, under the boundedness
  assumption, we have \begin{equation*}
      E\left[ \left( \sum_{[u, v] \subset \pi^n} A'^{(r)}_u \mathbb{Z}_{u, v} . \right)^2\right ]
     \xrightarrow{n \to \infty} 0,
  \end{equation*} and in this case, the stochastic and rough integral coincide almost surely. \\ Define now $\tau^K = \inf \left\{ t \geq 0 : \| M_t \| \geq K \infixor | |
  \nobracket A_t' \| \geq K \infixor \| N_t \| \geq K \right\} $. Since all
  processes are continuous, we clearly have $\tau^K \rightarrow \infty$ as $K
  \rightarrow \infty .$ Therefore, we find \begin{align*}
      P\left[ \left| \sum_j A'^{(r)}_{t_j} \mathbb{Z}_{t_j, t_{j + 1}}
     \right| > \epsilon \right] & \leq P\left[ \left| \sum_{j ; t_{j +
     1} < \tau^K} A'^{(r)}_{t_j} \mathbb{Z}_{t_j, t_{j + 1}} \right| > \epsilon
     \right] + P [\tau^K \leq T] \\ &  \leq \frac{1}{\epsilon^2} E
     \left [\left| \sum_{j ; t_{j + 1} < \tau^K} A'^{(r)}_{t_j} \mathbb{Z}_{t_j, t_{j + 1}}
     \right|^{ 2}\right ] + P [\tau^K \leq T] \\ & \xrightarrow{n\to \infty} \Delta (K), 
  \end{align*} where $\Delta (K) \rightarrow 0$ as $K \rightarrow \infty .$ This shows that
  $\sum_{_{[u, v] \subset \pi^n}} A'_u \mathbb{Z}_{u, v} \rightarrow 0$ in
  probability, hence we can find another subsequence $\pi^{n_k}$, such that
  the convergence holds almost surely. It follows that almost surely \begin{equation*}
      \int_0^T A_s d \tmmathbf{Z}^{\mathbf{N}}_s = \int_0^T A_s dZ_s.
  \end{equation*}
\end{proof} \\ To establish a Markovian structure for the process $X^{t,x,\mathbf{Y}}$, we need the following lemma, which tells us that $X^{t,x,\mathbf{Y}}$ is the limit of time-inhomogeneous Markov processes. This is the key ingredient to establish a Feynman-Kac type of result for the dynamics $X^{t,x,\mathbf{Y}}$ in Section \ref{RPDEsection}.
\begin{lemma}
    \label{MarkovProperty}
    Let $\alpha \in (1/3,1/2)$ and $\mathbf{Y} \in
  \mathscr{C}^{\alpha,1+} ([0, T], \mathbb{R}^d)$, and consider $M^{\mathbf{Y}}$ as defined in \eqref{MGG}. Let $\mathbf{Y}^{\epsilon}$ be the rough path lift of a piecewise linear approximation\footnote{See \cite[Chapter 5.2]{friz2010multidimensional}  for instance.} $Y^{\epsilon}$ of $Y$, such that $\mathbf{Y}^{\epsilon} \rightharpoonup \mathbf{Y}$, see Proposition \ref{weakconvpropo}. For $(t,x) \in [0,T] \times \mathbb{R}^{d_X}$ assume that $X^{t,x,\mathbf{Y}}$, resp. $X^{t,x,\mathbf{Y}^{\epsilon}}$ is the unique solution to \eqref{RDEStrA} with respect to $\mathbf{Z}^{\mathbf{Y}}$, resp. $\mathbf{Z}^{\mathbf{Y}^{\epsilon}}$. Then we have \begin{equation*}
         X^{t,x,\mathbf{Y}^{\epsilon}} \xrightarrow{\epsilon \to 0} X^{t,x,\mathbf{Y}} \text{ ucp}.
  \end{equation*} Moreover, we have $X^{t,x,\mathbf{Y}^{\epsilon}} = X^{t,x,\epsilon}$ almost surely, where $X^{t,x,\epsilon}$ denotes the unique strong solution to the SDE \begin{align*}
      X^{t,x}_t=x, \quad dX_s^{t, x, \epsilon} = g(s, X^{t, x,
    \epsilon}_s) \mathbf{v}_s^{\mathbf{Y}}dB_s + \left( f_0 (s,
    X_s^{t, x, \epsilon}) + f \left( {s, X_s^{t, x,
    \epsilon}}  \right) \dot{Y}_s^{\epsilon} \right) ds,
  \end{align*}\color{black} where $f_0:[0,T] \times \mathbb{R}^{d_X} \rightarrow \mathbb{R}^{d_X}$ is given by $f_0^i(t,x)= -\frac{1}{2}\sum_{j,k,l}\partial_{x_k}f^{ij}(t,x)f^{kl}(t,x)(\mathbf{V}_t^{\mathbf{Y}})^{lj}.$
\end{lemma}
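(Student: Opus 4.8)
The plan is to establish the two claims in reverse order: first the identification $X^{t,x,\mathbf{Y}^{\epsilon}}=X^{t,x,\epsilon}$, which is a rough Itô--Stratonovich bookkeeping, and then the ucp convergence, which will follow from a stability estimate for the joint lifts together with continuity of the Itô--Lyons map. Recall that $\mathbf{Y}^{\epsilon}$ is chosen in $\mathscr{C}^{\alpha,1+}$ with $[\mathbf{Y}^{\epsilon}]=[\mathbf{Y}]$ (see Lemma \ref{LipRPBijectionLemma} and the construction in the proof of Theorem \ref{RPDEmain}), so that $\mathbf{v}^{\mathbf{Y}^{\epsilon}}=\mathbf{v}^{\mathbf{Y}}$ and hence $M^{\mathbf{Y}^{\epsilon}}=M^{\mathbf{Y}}=:M$ is the \emph{same} martingale for every $\epsilon$; since $\mathbf{v}^{\mathbf{Y}}\in L^{\infty}$, Lemma \ref{roughIto}(i) guarantees that $\mathbf{Z}^{\mathbf{Y}^{\epsilon}}$ is a.s.\ an $\alpha'$-Hölder rough path and, for $f,g\in C^3_b$, that \eqref{RDEStrA} driven by it has a unique solution.

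For the identification I would work pathwise, on the full-measure set where all integrals exist. Since $X:=X^{t,x,\mathbf{Y}^{\epsilon}}$ solves \eqref{RDEStrA}, the pair $(X,(g,f)(\cdot,X))$ is controlled by $Z^{\mathbf{Y}^{\epsilon}}=(M,Y^{\epsilon})$, whence $(g,f)(\cdot,X)$ is controlled with Gubinelli derivative $\partial_x(g,f)(\cdot,X)\,(g,f)(\cdot,X)$. Expanding $X_s=x+\int_t^s(g,f)(u,X_u)\,d\mathbf{Z}^{\mathbf{Y}^{\epsilon}}_u$ as a limit of compensated Riemann sums and splitting according to the four blocks of $\mathbf{Z}^{\mathbf{Y}^{\epsilon}}$ in \eqref{lift_appendix}, the first-level sums converge on their own (left-point Riemann sums of adapted continuous integrands against the continuous martingale $M$, respectively against the Lipschitz path $Y^{\epsilon}$) and give $\int_t^s g(u,X_u)\,dM_u+\int_t^s f(u,X_u)\dot Y^{\epsilon}_u\,du$; consequently the compensator sums $\sum A'_u\mathbb{Z}^{\mathbf{Y}^{\epsilon}}_{u,v}$ also converge, and here exactly as in the proof of Lemma \ref{consistenyALemma} the contributions of the martingale-type blocks $\int M\otimes dM$ and $\int Y^{\epsilon}\otimes dM$ vanish in probability (conditionally centred increments, adapted weights), the block $\int M\otimes dY^{\epsilon}$ contributes $O(|\pi|^{\alpha})$ and the purely geometric part $\int Y^{\epsilon}\otimes dY^{\epsilon}$ of the $Y^{\epsilon}$--$Y^{\epsilon}$ block contributes $O(|\pi|)$, both vanishing, so that only the non-geometric correction $-\tfrac12\,\delta\!\int\mathbf{V}^{\mathbf{Y}}$ inside $\mathbb{Y}^{\epsilon}$ survives and produces $-\tfrac12\int_t^s(\partial_x f\,f)(u,X_u)\mathbf{V}^{\mathbf{Y}}_u\,du$. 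Using that $\mathbf{V}^{\mathbf{Y}}$ is symmetric, this integrand is exactly $f_0(u,X_u)$ in the stated index notation. Hence $X^{t,x,\mathbf{Y}^{\epsilon}}$ satisfies the Itô SDE defining $X^{t,x,\epsilon}$, and pathwise uniqueness for the latter (\cite[Chapter 3 Theorem 7]{protter2005stochastic}) yields $X^{t,x,\mathbf{Y}^{\epsilon}}=X^{t,x,\epsilon}$ up to indistinguishability.

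For the ucp convergence, the key step is to show $\mathbf{Z}^{\mathbf{Y}^{\epsilon}}\to\mathbf{Z}^{\mathbf{Y}}$ in probability in $\mathscr{C}^{\alpha'}([0,T],\mathbb{R}^{2d})$ for every $\alpha'\in(1/3,\alpha)$. The $M$-block of the lift and the $M$-component of the first level do not depend on $\epsilon$; $Y^{\epsilon}\to Y$ in $\mathscr{C}^{\alpha'}$ with uniform $\alpha$-Hölder bounds because $\mathbf{Y}^{\epsilon}\rightharpoonup\mathbf{Y}$ (Remark \ref{weakconvdef}, \eqref{uniform estimates}), which forces $\mathbb{Y}^{\epsilon}\to\mathbb{Y}$ in $2\alpha'$-Hölder as well. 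For the mixed block I would estimate $\int_s^t(Y^{\epsilon}-Y)_{s,r}\otimes dM_r$ by Burkholder--Davis--Gundy, with the same stopping-time localisation and Kolmogorov criterion as in the proof of Proposition \ref{Mglift}: since $Y^{\epsilon}-Y$ is deterministic, this gives $L^q$-increment bounds of order $\|Y^{\epsilon}-Y\|_{\alpha'}\,|t-s|^{\alpha'+1/2}$ with $\alpha'+1/2>2\alpha'$, hence convergence in $2\alpha'$-Hölder in probability, and the remaining mixed block follows from the integration-by-parts identity built into \eqref{lift_appendix}. With $\mathbf{Z}^{\mathbf{Y}^{\epsilon}}\to\mathbf{Z}^{\mathbf{Y}}$ in hand, continuity of the Itô--Lyons solution map in the driving rough path (see e.g.\ \cite[Chapter 8]{friz2020course}) gives, along any subsequence on which the lifts converge a.s., uniform convergence $X^{t,x,\mathbf{Y}^{\epsilon}}\to X^{t,x,\mathbf{Y}}$ on $[t,T]$; by the subsequence criterion this is the asserted ucp convergence.

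The main obstacle is the identification step: one has to track carefully which second-level blocks of the joint Itô lift survive in the rough-integral limit and check that the surviving one reproduces \emph{precisely} the drift $f_0$ — this is the rough analogue of an Itô--Stratonovich correction and is where the specific form of the non-geometric part $-\tfrac12\delta[\mathbf{Y}]$ of $\mathbb{Y}^{\epsilon}$ enters. The secondary difficulty is to promote the uniform convergence $Y^{\epsilon}\to Y$ to convergence of the mixed area $\int Y^{\epsilon}\otimes dM$ in rough-path topology, for which the BDG/Kolmogorov machinery already used for the enhanced-martingale lifts is required.
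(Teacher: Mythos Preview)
Your proposal is correct and follows essentially the same route as the paper: both arguments hinge on the observation that $M^{\mathbf{Y}^{\epsilon}}=M^{\mathbf{Y}}$, use BDG plus the Kolmogorov criterion (with localisation) to control the mixed blocks and obtain $\varrho_{\alpha'}(\mathbf{Z}^{\mathbf{Y}^{\epsilon}},\mathbf{Z}^{\mathbf{Y}})\to 0$ in probability, and then invoke local Lipschitz continuity of the It\^o--Lyons map. Your identification step is more explicit than the paper's (which simply points to a ``standard consistency argument, similar as in the proof of Lemma \ref{roughIto} but with Riemann--Stieltjes integration''); you essentially spell out the content of Remark \ref{remarkhybridinter} at the level of compensated Riemann sums, and your final passage from rough-path convergence to ucp via the subsequence criterion is a harmless variant of the paper's localise-then-$L^q$ argument.
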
 
\begin{proof} For any $\alpha'<\alpha$ , assume for the moment that the rough path norms of $\mathbf{Z}^{\mathbf{Y}}$ and $\mathbf{Z}^{\mathbf{Y}^{\epsilon}}$ are bounded by some constant $K$. Then we can apply standard RDE estimates, e.g. \cite[Theorem 8.5]{friz2020course}, to see that \begin{equation}
    \sup_{t\leq s \leq T}\left |X_s^{t,x,\mathbf{Y}^{\epsilon}}-X_s^{t,x,\mathbf{Y}}\right | \leq C\varrho_{\alpha'}(\mathbf{Z}^{\mathbf{Y}^{\epsilon}},\mathbf{Z}^{\mathbf{Y}}), \label{toestimate}
\end{equation} where $C=C(K,\alpha',\alpha,f,g)$ is constant. Now since $M^{\mathbf{Y}^{\epsilon}}=M^{\mathbf{Y}}$, for any $q \geq 1$ and $t\leq s \leq s'\leq T$ we have \begin{equation*}
    E[|Z_{s,s'}^{\mathbf{Y}^{\epsilon}}-Z_{s,s'}^{\mathbf{Y}}|^q]^{1/q} = |Y^{\epsilon}_{s,s'}-Y_{s,s'}| \leq (s'-s)^{\alpha'} \Vert \mathbf{Y}^{\epsilon}-\mathbf{Y}\Vert_{\alpha',1+}.
\end{equation*} In order to estimate $E[|\mathbb{Z}_{s,s'}^{\mathbf{Y}^{\epsilon}}-\mathbb{Z}_{s,s'}^{\mathbf{Y}}|^{q/2}]$, we use similar techniques as in the proof of Lemma \ref{roughIto}, by applying BDG-inequalities \begin{equation*}
     E\left [\left |\int_s^{s'}(Y_{s,r}^{\epsilon}-Y_{s,r})\otimes dM^{\mathbf{Y}}_r\right |^{q/2} \right ]^{2/q} \leq C_{q/2}(s'-s)^{2\alpha'}\Vert V^{\mathbf{Y}}\Vert_{\infty;[0,T]}\Vert \mathbf{Y}^{\epsilon}-\mathbf{Y}\Vert_{\alpha',1+}.
\end{equation*} Applying similar estimates for $\left |\int_s^{s'}M_{s,r}^{\mathbf{Y}}\otimes dY^{\epsilon}_r-\int_s^{s'}M_{s,r}^{\mathbf{Y}} \otimes dY_r\right |^{q/2}$, we find  \begin{equation*}
    E\left [\left |\mathbb{Z}_{s,s'}^{\mathbf{Y}^{\epsilon}}-\mathbb{Z}_{s,s'}^{\mathbf{Y}}\right |^{q/2} \right ]^{2/q} \leq \tilde{C}_{q/2}(s'-s)^{2\alpha'}\Vert \mathbf{Y}^{\epsilon}-\mathbf{Y}\Vert_{\alpha',1+} (3\Vert V^{\mathbf{Y}}\Vert_{\infty;[0,T]}+1)
\end{equation*} Applying a Kolmogorov criterion for rough path distances, see \cite[Theorem 3.3]{friz2020course}, it follows that \begin{equation*}
     E\left [\sup_{t\leq s \leq T}\left |X_s^{t,x,\mathbf{Y}^{\epsilon}}-X_s^{t,x,\mathbf{Y}}\right |^q\right]^{1/q} \lesssim E [\varrho_{\alpha'}
    (\tmmathbf{Z}^{\mathbf{Y}^{\epsilon}}, \tmmathbf{Z}^{\mathbf{Y}})^q] ^{1/q}\lesssim \Vert \mathbf{Y}^{\epsilon}-\mathbf{Y} \Vert_{\alpha',1+}\xrightarrow{\epsilon \to 0} 0, \quad \forall q\geq 1.
  \end{equation*} Now for the general case, that is where the rough path norms of $\mathbf{Z}^{\mathbf{Y}}$ and $\mathbf{Z}^{\mathbf{Y}^{\epsilon}}$ might be unbounded, we apply once again a standard localization argument, to conclude that the convergence holds in probability.  \\ Finally, the claim $X^{t,x,\mathbf{Y}^{\epsilon}} = X^{t,x,\epsilon}$ follows with a standard consistency argument, similar as in the proof of Lemma \ref{roughIto} but with Riemann-Stieltjes integration instead of stochastic integration.
\end{proof}

\begin{proof}{\textbf{of Theorem \ref{consistencycoroA}}}
  The existence and uniqueness of a solution to \eqref{RDEStrA} follows from
  the main well-posedness result for rough differential equations, see \cite[Theorem 8.4]{friz2020course},
  where we only need to define the augmented path $\hat{X}_s = (s, X_s)$
  and the augmented lift $\hat{\mathbf{Z}}^{\mathbf{Y}}$ by including the missing
  cross-integrals as Riemann-Stieltjes integrals. Let us show that $X^\mathbf{Y}$ is a $(\mathcal{F}^B_t)-$Markov process, that is for all bounded and measurable functions $k$, we have \begin{equation}\label{eq:MarkovPropertyAppendix}
      E[k(X^{\mathbf{Y}}_{t+h})|\mathcal{F}_t^B] = E[k(X^{\mathbf{Y}}_{t+h})|X^{\mathbf{Y}}_t], \quad  0\leq h \leq T-t,
  \end{equation} where $X^{\mathbf{Y}}=X^{0,x_0,\mathbf{Y}}$. First, we can notice that $\mathbf{Z}^{\mathbf{Y}}|_{\Delta_{[t,s]}}= (\mathbf{Z}^{\mathbf{Y}}_{u,w}:t\leq u \leq w \leq s)$, see \eqref{lift_appendix} for the definition of $\mathbf{Z}^{\mathbf{Y}}$, is measurable with respect to $\mathcal{F}^B_{t,s}:=\sigma(B_w-B_u: t\leq u \leq w \leq s)$. Indeed, by definition of the stochastic integral this holds true for the increments of the Gaussian martingales $M^{\mathbf{Y}}=\int \mathbf{v}^{\mathbf{Y}}dB$ and $\int (\delta Y)\mathbf{v}^{\mathbf{Y}}dB$, and therefore also for the two remaining  stochastic terms $(\delta Y)(\delta M^{\mathbf{Y}})-\int (\delta Y)\mathbf{v}^{\mathbf{Y}}dB$ and  $\int (\delta M^{\mathbf{Y}})\mathbf{v}^{\mathbf{Y}}dB$. Now define the (deterministic) solution map to the RDE \eqref{RDEStrA} by $\Psi_{s \leftarrow t}^{\mathbf{Z^{\mathbf{Y}}}|_{[t,s]}}(x):=X_s^{t,x,\mathbf{Y}}$ for $t \leq s \leq T.$ By uniqueness of the solution, we have $$\Psi_{t+h \leftarrow 0}^{\mathbf{Z^{\mathbf{Y}}}|_{[0,t+h]}}(x_0) = \Psi_{t+h \leftarrow t}^{\mathbf{Z^{\mathbf{Y}}}|_{[t,t+h]}}\circ \Psi_{t \leftarrow 0}^{\mathbf{Z^{\mathbf{Y}}}|_{[0,t]}}(x_0).$$ Since $\mathbf{Z}^{\mathbf{Y}}|_{[t,t+h]}$ is $(\mathcal{F}^B_{t,t+h})-$measurable, it follows that $\Psi_{t+h \leftarrow t}^{\mathbf{Z^{\mathbf{Y}}}|_{[t,t+h]}}(\cdot)$ is independent of $\mathcal{F}_t^B$, and we can conclude $$E\left [k(X^{\mathbf{Y}}_{t+h})|\mathcal{F}_t^B\right] =\left.  E\left [k(\Psi_{t+h \leftarrow t}^{\mathbf{Z^{\mathbf{Y}}}|_{[t,t+h]}}(y))\right ] \right | _{y=\Psi_{t \leftarrow 0}^{\mathbf{Z^{\mathbf{Y}}}|_{[0,t]}}(x_0)}.$$ But this readily implies \eqref{eq:MarkovPropertyAppendix}, since the left hand side is a deterministic function of $\Psi_{t \leftarrow 0}^{\mathbf{Z^{\mathbf{Y}}}|_{[0,t]}}(x_0)=X^{\mathbf{Y}}_t$. Next,  by continuity of the It{\^o}-Lyons map, it follows that $X_s^{t,
  x, \tmmathbf{I}}$ is the continuous image of the joint rough path
  $\tmmathbf{Z}^{\tmmathbf{I}}$, and hence measurable with respect to $\sigma
  (\{ \tmmathbf{Z}^{\mathbf{I}}_u : u \leq s \}) \subseteq \mathcal{F}_s$, and thus adapted. In Remark \ref{CorresMG} we saw that $[\mathbf{I}] = [I]=\int V_tdt$ almost surely, and therefore we have $\mathbf{v}_t^{\mathbf{I}}:= \sqrt{\frac{d[\mathbf{I}]_t}{dt}} =\sqrt{V_t}$, which is adapted since the map $\mathbb{S}_+^{d\times d} \ni A \mapsto \sqrt{A} \in \mathbb{S}_+^{d\times d}$ is continuous, and hence \begin{equation*}
      Z_t=\begin{pmatrix}
    M^{I}_t \\ I_t
\end{pmatrix}= \begin{pmatrix}
    \int_0^t\sqrt{V_t}dB_s \\ \int_0^tv_sdW_s
\end{pmatrix} \in \mathbb{R}^{2d}. 
  \end{equation*} By \cite[Lemma 7.3]{friz2020course} it follows that the following defines a controlled rough path
  %Define the controlled rough path $(A, A') \in \mathscr{D}_Z^{2 \alpha}$ as
  \[ A = (g, f) (s, X_s^{t, x, \tmmathbf{I}}), \quad A' = (D (g, f)
     (s, X_s^{t, x, \tmmathbf{I}})) (g, f) (s, X_s^{t, x, \tmmathbf{I}}).
  \]
  Moreover, $(A, A')$ is adapted and continuous, and therefore we have almost surely
  \[ X_s^{t, x, \tmmathbf{I} (\omega)} = x + \left. \int_t^s A_u (\omega) d
     \tmmathbf{Z}_u^{\mathbf{Y}}  (\omega) \right|_{\tmmathbf{Y} = \tmmathbf{I} (\omega)}
     \nobracket = x + \left (\int_t^s A_u dZ_u\right )(\omega), \]
  where we used Lemma \ref{roughIto} and Lemma \ref{consistenyALemma}, and the fact that $[M^{\mathbf{I}}, I] = 0$ for
  the last equality. This shows that $X^{t,x,\mathbf{I}}$ is indistinguishable from the unique solution to the SDE \begin{equation*}
      d\hat{X}^{t,x}_t=x, \quad d\hat{X}^{t,x}_s = f(s,\hat{X}_s^{t,x})v_sdW_s+g(s,\hat{X}_s^{t,x})\sqrt{V_s}dB_s, \quad t < s \leq T.
  \end{equation*} But by independence of $W$ and $B$, one can notice that the conditional distributions of $\int_t^{\cdot} v_sdB_s$ and $\int_t^{\cdot} \sqrt{V_s}dB_s$, given $\mathcal{F}^W_T\lor \mathcal{F}^B_t$, are Gaussian with covariance matrix  $\int_t^{\cdot} V_sds$, and thus coincide. In particular, it holds that \begin{equation*}
      \mathrm{Law}\left (\left.\int_t^{\cdot}v_sdW_s,\int_{t}^{\cdot} \sqrt{V_s}dB_s \right| \mathcal{F}^W_T\lor \mathcal{F}^B_t \right ) = \mathrm{Law}\left (\left. \int_t^{\cdot} v_sdW_s,\int_{t}^{\cdot} v_sdB_s \right | \mathcal{F}^W_T\lor \mathcal{F}^B_t \right ).
  \end{equation*} 
  By strong uniqueness of the SDE \eqref{dynamicsA}, see \cite[Chapter 3 Theorem 7]{protter2005stochastic}, we can conclude that \begin{equation*}
      \mathrm{Law}(X^{t,x,\mathbf{I}}| \mathcal{F}^W_T\lor \mathcal{F}^B_t) =\mathrm{Law}(\hat{X}^{t,x}| \mathcal{F}^W_T\lor \mathcal{F}^B_t) = \mathrm{Law}(X^{t,x}| \mathcal{F}^W_T\lor \mathcal{F}_t^B ).
  \end{equation*} Moreover, if we additionally assume that $v\in \mathbb{S}^d_+$, then clearly $X^{t,x,\mathbf{I}}=\hat{X}^{t,x}=X^{t,x}$ almost surely. Finally, we are left with the equality $\mathrm{Law}\left (\left. X^{t,x,\mathbf{I}}\right |\mathcal{F}^W_T \lor \mathcal{F}_t^B\right)(\omega) =\mathrm{Law}(\left. X^{t,x,\mathbf{Y}})\right |_{\mathbf{Y=\mathbf{I}(\omega)}}.$ By continuity of $\mathbf{Y} \mapsto X^{t,x,\mathbf{Y}}$, it follows that the latter is a stochastic kernel. Moreover, for all measurable sets $A$, we can use the continuity together with the $(\mathcal{F}^B_t)-$Markov property to see that almost surely $$P\left (X^{t,x,\mathbf{I}}\in A  | \mathcal{F}_T^{W}\lor \mathcal{F}_t^B  \right )(\omega) = P(X^{t,x,\mathbf{Y}}\in A)|_{\mathbf{Y}=\mathbf{I}(\omega)},$$ and thus by uniqueness of the regular conditional distribution the claim follows.
\end{proof} 
\begin{remark}\label{remarkhybridinter}
     Finally, notice that for the geometric rough path $\mathbf{Y}^g$, we have $\mathbb{Y}^g = \mathbb{Y} + \frac{1}{2}\delta [\mathbf{Y}]$, and hence\begin{align*}
  \mathbb{Z}^{\mathbf{Y}}_{s,t}  \assign \left(\begin{array}{cc}
    \int_s^t M_{s,r}\otimes dM_r & \int_s^t M_{s,r}\otimes dY_r\\
    \int_s^t Y_{s,r}\otimes dM_r & \mathbb{Y}_{s,t} 
  \end{array}\right) & = \left(\begin{array}{cc}
    \int_s^t M_{s,r}\otimes dM_r & \int_s^t M_{s,r}\otimes dY_r\\
    \int_s^t Y_{s,r}\otimes dM_r & \mathbb{Y}^g_{s,t} - \frac{1}{2}\delta[\mathbf{Y}]_{s,t}
  \end{array}\right) \\ &= \mathbb{Z}^{\mathbf{Y}^g}_{s,t}+\left(\begin{array}{cc}
    0 & 0\\
    0 & -\frac{1}{2}\delta[\mathbf{Y}]_{s,t} 
  \end{array}\right).
\end{align*} Now by standard properties of rough integration, see \cite[Chapter 4 Example 4.14]{friz2020course}, it follows that \begin{equation*}
    dX_s^{t,x,\mathbf{Y}} = (g,f)(s,X_s^{t,x,\mathbf{Y}})d\mathbf{Z}^{\mathbf{Y}}_s = (g,f)(s,X_s^{t,x,\mathbf{Y}})d\mathbf{Z}^{\mathbf{Y}^g}_s + f_0(s,X_s^{t,x,\mathbf{Y}})ds,
\end{equation*} where $f_0:[0,T]\times \mathbb{R}^{d_X}\rightarrow \mathbb{R}^{d_X}$ is such that $f_0^i(t,x)= -\frac{1}{2}\sum_{j,k,l}\partial_{x_k}f^{ij}(t,x)f^{kl}(t,x)(\mathbf{V}_t^{\mathbf{Y}})^{lj}$ for $i=1,\dots,d_X$.
\end{remark}

\bibliographystyle{plain}
\bibliography{BIB}

\end{document}